\newcommandx{\yxnote}[2][1=]{\todo[linecolor=blue,backgroundcolor=blue!25,bordercolor=blue,#1]{#2}}
\theoremstyle{plain}
\newtheorem{thm}{Theorem}[section]
\newtheorem{prop}{Proposition}[section]
\newtheorem{cor}{Corollary}[section]
\newtheorem{lemma}{Lemma}[section]
\theoremstyle{definition}
\newtheorem{example}{Example}[section]
\newtheorem{assum}{Assumption}[section]
\newtheorem{cond}{Condition}[section]
\theoremstyle{remark}
\newtheorem{remark}{Remark}[section]
\DeclareMathAlphabet\EuRoman{U}{eur}{m}{n}
\SetMathAlphabet\EuRoman{bold}{U}{eur}{b}{n}
\newcommand{\cB}{\mathcal{B}}
\newcommand{\cE}{\mathcal{E}}
\newcommand{\cG}{\mathcal{G}}
\newcommand{\cH}{\mathcal{H}}
\newcommand{\cN}{\mathcal{N}}
\newcommand{\cP}{\mathcal{P}}
\newcommand{\cR}{\mathcal{R}}
\newcommand{\RR}{\mathbb{R}}
\newcommand{\EE}{\mathbb{E}}
\newcommand{\PP}{\mathbb{P}}
\newcommand{\one}{\mathbf{1}}
\newcommand{\ep}{\varepsilon}
\newcommand{\RSS}{\textnormal{RSS}}
\newcommand{\FDR}{\textnormal{FDR}}
\newcommand{\hFDR}{\widehat{\FDR}}
\newcommand{\FDP}{\textnormal{FDP}}
\newcommand{\hFDP}{\widehat{\FDP}}
\newcommand{\PFER}{\textnormal{PFER}}
\newcommand{\hPFER}{\widehat{\PFER}}
\newcommand{\FPR}{\textnormal{FPR}}
\newcommand{\se}{\textnormal{s.e.}}
\newcommand{\hse}{\widehat{\textnormal{s.e.}}}
\newcommand{\ltheta}{\widehat{\theta}^{\textnormal{Lasso}}}
\newcommand{\fstheta}{\widehat{\theta}^{\textnormal{FS}}}
\newcommand{\simiid}{\overset{\textnormal{i.i.d.}}{\sim}}
\newcommand{\sgn}{{\textnormal{sign}}}
\newcommand{\pth}[1]{\left( #1 \right)}
\newcommand{\br}[1]{\left[ #1 \right]}
\newcommand{\vct}[1]{\boldsymbol{#1}}
\newcommand{\mat}[1]{\boldsymbol{#1}}
\newcommand{\abs}[1]{\left| #1 \right|}
\newcommand{\norm}[1]{\left\| #1 \right\|}
\newcommand{\set}[1]{\left \{  #1 \right \}}
\newcommand{\Var}{\textnormal{Var}}
\newcommand{\mim}{\wedge}
\newcommand{\mam}{\vee}
\newcommand{\setcomp}{\mathsf{c}}
\newcommand*{\tran}{{\mkern-1.5mu\mathsf{T}}}
\newcommand{\independent}{\perp \!\!\! \perp}
\newcommand{\footremember}[2]{%
    \footnote{#2}
    \newcounter{#1}
    \setcounter{#1}{\value{footnote}}%
}
\DeclareMathOperator*{\argmin}{argmin}
\DeclareMathOperator*{\argmax}{argmax}
\title{Estimating the False Discovery Rate of Variable Selection}
\author{
Yixiang Luo\footremember{A}{Department of Mathematics, University of California, Berkeley, yixiangluo@berkeley.edu} \and
William Fithian\footremember{B}{Department of Statistics, University of California, Berkeley, wfithian@berkeley.edu} \and 
Lihua Lei\footremember{C}{Graduate School of Business, Stanford University, lihualei@stanford.edu}}
\date{}
\begin{document}

\maketitle

\begin{abstract}
We introduce a generic estimator for the false discovery rate of any model selection procedure, in common statistical modeling settings including the Gaussian linear model, Gaussian graphical model, and model-X setting. We prove that our method has a conservative (non-negative) bias in finite samples under standard statistical assumptions, and provide a bootstrap method for assessing its standard error. For methods like the Lasso, forward-stepwise regression, and the graphical Lasso, our estimator serves as a valuable companion to cross-validation, illuminating the tradeoff between prediction error and variable selection accuracy as a function of the model complexity parameter.
\end{abstract}

\section{Introduction}
\label{sec:introduction}

When selecting variables for inclusion in a statistical model, data analysts typically seek to balance multiple goals, including predictive accuracy, reliable identification of signal variables, model parsimony, and interpretability. Perhaps the best-known paradigm is to use a variable selection method such as the Lasso \citep{tibshirani1996regression}, forward stepwise regression \citep{hocking1976biometrics}, or the graphical Lasso \citep{friedman2008sparse},in conjunction with a tuning parameter that governs model complexity. The optimal value of this tuning parameter is commonly chosen by minimizing the cross-validation (CV) estimate of out-of-sample prediction error. Although this framework is flexible and effective, it inherently prioritizes prediction accuracy over other desirable properties. In particular, CV-based selection can be quite liberal in including irrelevant explanatory variables \citep[see, e.g.,][]{baumann2003cross}, which is undesirable in applications where accurate variable selection is of primary importance.


In such applications, one may forgo estimation-based approaches like the Lasso in favor of multiple testing procedures, such as knockoffs \citep{barber2016knockoff,candes2018panning}, or the Benjamini--Hochberg procedure and its dependence-corrected variants \citep{benjamini1995controlling,benjamini2001control,fithian2022conditional}, which guarantee control of the false discovery rate (FDR) at a user-specified level. However, multiple testing methods do not yield fitted models, and the model formed by including only the selected variables may exhibit very poor fit to the data.\footnote{For example, in the knockoff procedure, certain correlation structures can cause a strong signal variable to have a negative knockoff statistic with approximately 50\% probability \citep{li2021whiteout}. If this occurs, the strong sigal variable will not be selected regardless of how large the FDR threshold is.}

In this work, we introduce a novel approach for assessing variable selection accuracy of estimation methods such as the Lasso: an estimator of the FDR applicable to any variable selection procedure under common statistical settings. This estimator can be evaluated in conjunction with the cross-validation curve, enabling analysts to examine the trade-off between variable selection accuracy and predictive performance across the entire regularization path. The proposed method is applicable in common statistical settings, including the Gaussian linear model, the Gaussian graphical model, and the nonparametric model-X framework of \citet{candes2018panning}, without imposing any additional assumptions beyond those required for standard hypothesis testing. The estimator exhibits non-negative finite-sample bias that is typically small, and its standard error can be assessed via the bootstrap.

The idea of estimating the FDR of a selection procedure dates back to \citet{storey2002direct}, who proposed a conservative estimator for variable selection based on thresholding independent $p$-values. \citet{goeman2011multiple} further developed an upper confidence bound for the number of null variables in any given selection set, yielding a conservative median estimator. Their method relies on a closed testing framework, and a feasible implementation typically requires independent or specific positively correlated $p$-values. In addition, certain multiple testing procedures, such as Benjamini–Hochberg  \citep{benjamini1995controlling} and knockoffs \citep{barber2016knockoff}, incorporate intrinsic False Discovery Proportion (FDP) estimators to achieve FDR control based on their particular variable ordering and testing strategy. In contrast, our proposed FDR estimator is broadly applicable across common statistical settings and can be used with any variable selection procedure deemed appropriate by the analyst.


\begin{figure}[tbp]
    \centering
    \begin{subfigure}[b]{0.48\linewidth}
      \centering      \includegraphics[width=\linewidth]{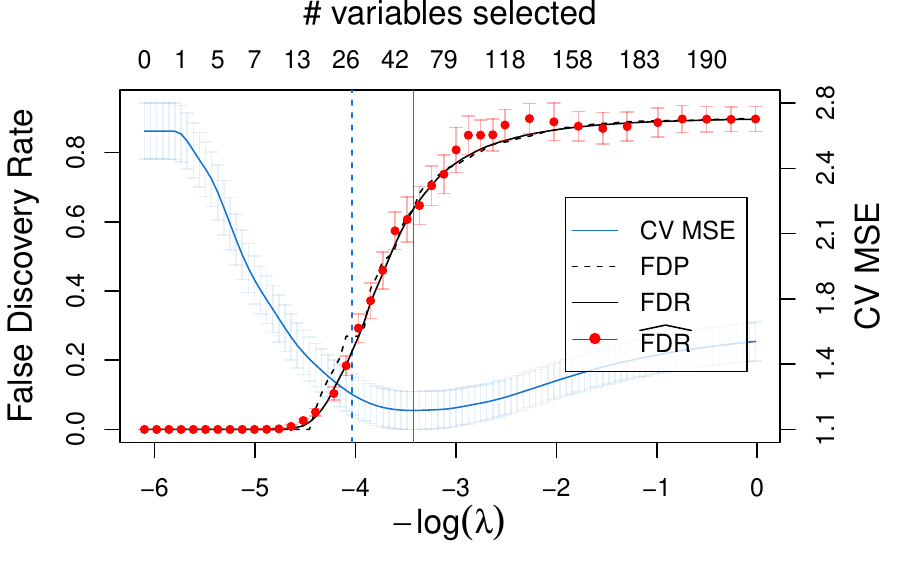}
      \caption{Easy variable selection scenario}
      \label{fig:illustration-indpt}
    \end{subfigure}
    \begin{subfigure}[b]{0.48\linewidth}
      \centering      \includegraphics[width=\linewidth]{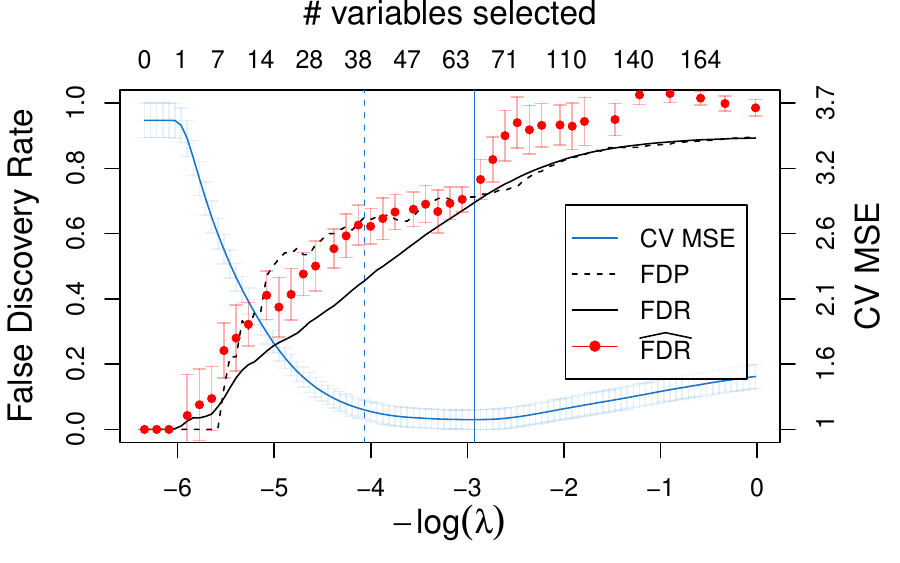}
      \caption{Difficult variable selection scenario}
      \label{fig:illustration-corr}
    \end{subfigure}
    \caption{Cross-validation MSE (blue), true FDR (black), and our FDR estimator (red), for Lasso regression in two scenarios with explanatory variables that are (a) independent, and (b) highly correlated. In scenario (a), the minimum-CV model (solid blue vertical line) has high FDR, while the one-standard-error rule (dashed vertical line) achieves a reasonably low FDR. In scenario (b), there is no model that simultaneously achieves good predictive performance and low FDR. In both scenarios, our FDR estimator successfully captures important information about variable selection performance that is not evident from the CV curve.}
    \label{fig:illustration-cv_FDR}
\end{figure}

Figure~\ref{fig:illustration-cv_FDR} illustrates the utility of our proposed method by comparing two Lasso regression scenarios that exhibit markedly different trade-offs between prediction error and FDR. In both scenarios, we generate $n = 600$ observations from a Gaussian linear model with $d = 200$ standard Gaussian explanatory variables, among which only $d_1 = 20$ are signal variables of equal moderate strength. The sole difference between the two settings lies in the dependence structure of the explanatory variables: in the first scenario, the variables are independent, whereas in the second, they exhibit an autoregressive $\text{AR}(1)$ correlation structure with a correlation coefficient $0.8$. This distinction substantially alters the variable selection performance. When the explanatory variables are independent, a model exhibiting both high predictive accuracy and reliable variable selection can be obtained using the one-standard-error rule, though not by minimizing the CV error. In contrast, when the variables are correlated, all models achieving competitive predictive accuracy exhibit an FDR exceeding $40\%$. While the CV curve alone offers no insight into variable selection performance in either case, our proposed estimator remains accurate across both scenarios. The following section introduces its construction.

\subsection{FDR estimation in the Gaussian linear model}\label{sec:linear_model}

Although our method applies in a more general setting, the special case of the Gaussian linear model with fixed design provides a concrete and familiar context in which to introduce our key ideas. In this model, we assume
\begin{equation}\label{eq:linear-model}
\vct Y = \sum_{j=1}^{d} \vct X_j \theta_j + \vct \ep = \mat X \vct \theta + \vct \ep, \quad \vct \ep \sim \cN(\vct 0, \sigma^2 \vct I_n),
\end{equation}
where $\vct X_j = (x_{1,j}, \ldots, x_{n,j})^\tran$ and $\vct Y = (y_{1}, \ldots, y_{n})^\tran$ denote the vectors of $n$ observed values of the explanatory variables $X_j$ and outcome variable $Y$, respectively, and $\mat X$ denotes the $n$-by-$d$ matrix of all observed explanatory variables. We call $X_j$ a {\em noise variable} if the hypothesis $H_j:\; \theta_j = 0$ is true and a {\em signal variable} otherwise. $\mat X$ is assumed to be fixed with $d<n$, while $\vct \theta \in \mathbb{R}^d$ and $\sigma^2 > 0$ are assumed to be unknown.

Any variable selection procedure yields an index set $\cR$, which we can regard as an estimator for the set of signal variables. For example, the Lasso estimator for a fixed regularization parameter $\lambda>0$ selects $\cR(\lambda) = \set{j:\; \widehat{\vct\theta}^{\textnormal{Lasso}}_j(\lambda) \neq 0},$ where
\begin{equation}
\label{eq:lasso}
    \vct \ltheta(\lambda) = \argmin_{\vct \theta} \; \frac{1}{2} \, \norm{\vct Y - \mat X \vct \theta}_2^2 + \lambda \norm{\vct \theta}_1.
\end{equation}

By contrast, forward stepwise regression (FS) selects variables by starting from the null model and adding one variable at a time, greedily minimizing at each step the next model's residual sum of squares. After $k = 1, 2, \ldots$ steps, the estimator $\vct \fstheta(k)$ is given by estimating an ordinary least squares (OLS) regression for the $k$ selected variables. 

\citet{benjamini1995controlling} define the \textit{false discovery rate} (FDR) as the expectation of the \textit{false discovery proportion} (FDP), defined here as the fraction of selected variables that are actually irrelevant. Let $\cH_0$ be the index set of all noise variables and $R = |\cR|$ be the total number of selections, we have
\begin{equation}\label{eq:FDR}
\FDR = \EE[\FDP] \quad\textnormal{and}\quad
\FDP = \frac{|\cR \cap \cH_0|}{R},
\end{equation}
using the convention that $0/0 = 0$; i.e., $\FDP = 0$ when no variables are selected.

Inspired by \citet{fithian2022conditional}, we linearly decompose the FDR as


\begin{equation}\label{eq:linear-decomp}
\FDR = \sum_{j = 1}^d \FDR_j, \quad
\FDR_j = \EE \br{\frac{\one \set{j \in \cR}}{|\cR|}} \cdot \one \set{j \in \cH_0}.
\end{equation}
This technique allows us to treat each term separately and decouple the complicated interactions by conditioning. A promising estimator of the FDR is
\begin{equation} \label{eq:hFDR_linear}
\hFDR = \sum_{j = 1}^d \hFDR_j, \quad
\hFDR_j = \EE_{H_j} \br{\frac{\one \set{j \in \cR}}{|\cR|} \mid \vct S_j } \cdot \frac{\one \set{p_j > 0.1}}{0.9},
\end{equation}
where $p_j$ is the $p$-value for the usual two-sided $t$-test, and $\vct S_j = (\mat X_{-j}^\tran \vct Y, \, \norm{\vct Y}^2)$ is a sufficient statistic under the null model $H_j$. Under $H_j$, the first factor in $\hFDR_j$ is the uniform minimum-variance unbiased (UMVU) estimator of $\FDR_j$, obtained by Rao-Blackwellization, while the second factor attempts to zero out non-null terms.


Because $p_j$ and $\vct S_j$ are independent under $H_j$, the product $\hFDR_j$ is an unbiased estimator for its counterpart $\FDR_j$ whenever $j \in \cH_0$; otherwise $\hFDR_j \geq \FDR_j = 0$. As a direct consequence, we have
\begin{equation}\label{eq:conservative_bias}
\EE [\hFDR] \geq \FDR.
\end{equation}
The bias of $\EE [\hFDR]$ is small if most signal variables have $p_j \leq 0.1$ with high probability.

The definition of our estimator $\hFDR$ in \eqref{eq:hFDR_linear} is highly general and imposes no assumptions on the form of the variable selection procedure $\cR$. Consequently, it is applicable to any selection method, including the Lasso, forward stepwise regression, LARS \citep{efron2004least}, the elastic net \citep{zou2005regularization}, among many others.

Our approach requires no assumptions beyond those of the standard linear model in \eqref{eq:linear-model}, which are required to justify the use of marginal $t$-tests for each variable. In the common setting where the procedure $\cR$ depends only on $\mat X^\tran \mat X$ and $\mat X^\tran \vct Y$--a property shared by all methods listed above--the sole statistical assumption needed to justify the inequality \eqref{eq:conservative_bias} is that the $t$-statistic follows a $t_{n-d}$ distribution under $H_j$, conditional on $\vct S_j$. Notably, this assumption may still hold approximately even when the parametric assumptions of the linear model are violated.


As we show next, our method extends far beyond the Gaussian linear model to other parametric and nonparametric settings, provided that the modeling assumptions yield an appropriate sufficient statistic $\vct S_j$ for each null hypothesis $H_j$. Section~\ref{sec:framework} describes our method in full generality and discusses additional applications, including the graphical Lasso \citep{friedman2008sparse} and the nonparametric model-X framework of \citet{candes2018panning}. Section~\ref{sec:uncertainty} presents a theoretical analysis of the standard error in stylized settings and describes our bootstrap procedure for estimating the standard error, which is used to construct the error bars in all figures. Sections~\ref{sec:real_data} and \ref{sec:simulations} illustrate the behavior of our estimator in real-data and simulation studies, respectively. Section~\ref{sec:compuation} addresses computational considerations, and Section~\ref{sec:discussion} concludes.

\section{General formulation of our estimator}
\label{sec:framework}

\subsection{Notation and statistical assumptions}\label{sec:notation}

Throughout the paper, we use uppercase letters to represent random variables and lowercase letters to denote individual observations. For example, $y_i$ denotes the $i$th observation of the variable $Y$, and $x_{i,j}$ denotes the $i$th observation of the variable $X_j$. We use bold uppercase letters to denote the collection of all $n$ observations. For instance, $\vct X_j = (x_{1,j}, \ldots, x_{n,j})^\tran$ represents the vector of observations for the random variable $X_j$, and $\mat X \in \RR^{n \times d}$, whose $(i,j)$th element is $x_{i,j}$, denotes the matrix of observations for all explanatory variables.
Moreover, let $[d] = {1, \ldots, d}$. For any subset $A \subseteq [d]$, we denote by $\vct X_A$ the submatrix of $\mat X$ containing only the columns corresponding to the variables in $A$. Similarly, we use $\vct X_{-j}$ as shorthand for $\vct X_{[d] \setminus {j}}$ and $\vct X_{-A}$ for $\vct X_{[d] \setminus A}$.

We denote by $\mat D$ a generic random object representing the entire dataset, whose distribution $P$ is assumed to belong to a known family $\cP$. If $\cP$ is a parametric family, we use $\vct \theta$ to denote its parameters and write $P_{\vct \theta}$ for the corresponding distribution. In the Gaussian linear model discussed in Section~\ref{sec:linear_model}, the irrelevance of explanatory variable $X_j$ is expressed through the null hypothesis $H_j: \vct \theta_j = 0$, while other problem settings may involve different forms of null hypotheses. Moreover, we denote by $\cR$ the index set of variables selected by a generic variable selection procedure based on the random dataset $\mat D$, and define $R = |\cR|$ as the total number of selected variables.

To define our estimator in a generic modeling context and prove that its bias is non-negative, the sole statistical assumption we will rely on is the availability of a sufficient statistic $\vct S_j$ for the submodel corresponding to each null hypothesis $H_j$. In one sense this assumption is vacuous, because $\vct D$ itself is a sufficient statistic for any model, but we will only obtain a useful estimator if $\vct S_j$ encodes less information than $\vct D$. While not strictly required, it is also convenient to assume that there is a standard choice of $p$-value $p_j$ for each $H_j$, which is valid conditional on $\vct S_j$.

Section~\ref{sec:ex_models} discusses appropriate choices of $\vct S_j$ and $p_j$ in two additional modeling settings where our method can be applied: the nonparametric model-X setting, and the Gaussian graphical model.  In both examples, as in the Gaussian linear modeling example, $\vct S_j$ is a complete sufficient statistic for $H_j$ and there exists a natural choice of $p_j$ which is independent of $\vct S_j$ under $H_j$.

\subsection{FDR estimation}\label{sec:hFDR}

We now present a generic construction of our $\hFDR$, assuming only that a sufficient statistic $\vct S_j$ is available for each $H_j$. Recall the decomposition \eqref{eq:linear-decomp}:
\[
\FDR = \sum_{j = 1}^d \FDR_j, \quad
\text{where } \FDR_j = \EE \br{\frac{\one \set{j \in \cR}}{R}} \cdot \one \set{j \in \cH_0}.
\]
To estimate each variable's contribution $\FDR_j$, we proceed by estimating each of the two factors in turn.


Step 1: Estimating the first factor. Crucially, note that the factor $\EE[\one\{j\in\cR\}/R]$ is only relevant when $j\in \cH_0$. Therefore, when estimating it, we may assume that $H_j$ holds. Since $\vct S_j$ is sufficient for the submodel under $H_j$, we can estimate the first factor by Rao-Blackwellizing the integrand \citep{rao1945information,blackwell1947conditional}, obtaining
\begin{equation}\label{eq:hFDRstar}
    \hFDR^*_j(\vct S_j) := \EE_{H_j} \br{\frac{\one \set{j \in \cR}}{R} \mid \vct S_j }.
\end{equation}
For $j \in \cH_0$, $\hFDR_j^*$ is unbiased by construction, and it is the UMVU estimator for the first factor of $\FDR_j$ whenever $\vct S_j$ is complete \citep{lehmann1950completeness}. 
Since the conditional null distribution of the data $\vct D$ given $\vct S_j$ is known by the sufficiency, $\hFDR^*_j$ is computable and can generally be approximated using a Monte Carlo algorithm. For certain variable selection procedures, we have developed fast and exact computational methods. Further details are provided in Section~\ref{sec:compuation}.

Step 2: Estimating the second factor. We could obtain a conservative estimator by simply upper-bounding the indicator $\one\{j\in \cH_0\} \leq 1$, resulting in the estimator $\hFDR^* = \sum_j \hFDR_j^*$. However this estimator's bias, while still non-negative, could be quite large. In particular, if some of the signal variables are selected with high conditional probability given $\vct S_j$, their excess contributions may be large.

A better estimator of the indicator $\one\{j \in \cH_0\}$ should reliably eliminate terms for strong signal variables without introducing bias in the terms for noise variables. For any estimator $\psi_j(\vct D) \in [0,1]$, define its normalized version:
\begin{equation}\label{eq:phij}
\phi_j(\vct D) = \frac{\psi_j(\vct D)}{\EE_{H_j} [\psi_j(\vct D) \mid \vct S_j]},
\end{equation}
or $\phi_j(\vct D) = 1$ if $\EE_{H_j} [\psi_j | \vct S_j] = 0$. Note we have $\EE_{H_j}[\phi_j \mid \vct S_j] = 1$ almost surely.
For the same reason as in Step 1, the normalizer $\EE_{H_j} [\psi_j(\vct D) \mid \vct S_j]$ is always computable.

As an example, commonly, there is a standard test for $H_j$, whose $p$-value $p_j$ is uniform and independent of $\vct S_j$ under $H_j$. Then, we can take $\psi_j = \one\{p_j > \zeta\}$ for $\zeta \in (0,1)$, and
\begin{equation}\label{eq:phi_canonical}
\phi_j(p_j) = \frac{\one\{p_j > \zeta\}}{1-\zeta}.
\end{equation}
The parameter $\zeta$ governs the bias–variance trade-off: a larger $\zeta$ leads to a smaller bias but a larger variance. Our numerical experiments indicate that $\zeta = 0.1$ provides a good balance between bias and variance. When the signal is strong enough for the test to reject $H_j$ with high probability at level $\zeta$, the contribution of variable $j$ will almost always be zero. In Appendix~\ref{app:signal_bias}, we establish an upper bound on the bias contributed by a strong-signal variable in the Gaussian linear model, showing that it decays exponentially with the signal strength. We adopt \eqref{eq:phi_canonical} as the canonical form of $\phi_j$.

Combining \eqref{eq:hFDRstar} and \eqref{eq:phij}, we obtain the estimator
\begin{equation}\label{eq:hFDR_def}
\hFDR = \sum_{j=1}^d \hFDR_j, \quad \text{where } \hFDR_j(\vct D) = \hFDR_j^*(\vct S_j) \cdot \phi_j(\vct D).
\end{equation}

When we adopt the canonical $\phi_j$, we can write the canonical version of $\hFDR_j$ more concretely as
\begin{equation}\label{eq:hFDR_p_def}
\hFDR_j(\vct D) = \EE_{H_j} \br{\frac{\one \set{j \in \cR}}{R} \mid \vct S_j } \cdot \frac{\one\{p_j > \zeta\}}{1-\zeta}.
\end{equation}
These two factors are independent when $p_j \independent \vct S_j$.\footnote{When a suitable $p_j$ that is uniform and independent of $\vct S_j$ under $H_j$ is not available to adopt the canonical form, researchers may instead choose an appropriate $\psi_j$ function and employ the general formulation \eqref{eq:hFDR_def} of $\hFDR$.}

The estimator derived above, in its generic form \eqref{eq:hFDR_def}, is always conservatively biased:
\begin{thm}\label{thm:bias}
Suppose for each $j = 1, \ldots, d$, $\vct S_j$ is a sufficient statistic under the null submodel $H_j$. Then for any selection procedure $\cR$, and for any estimator $\psi_j:\; \vct D \to [0,1]$ of $\one\{j \in \cH_0\}$, the estimator $\hFDR$ defined in \textnormal{(\ref{eq:hFDRstar}, \ref{eq:phij}, \ref{eq:hFDR_def})} has non-negative bias:
\[ \EE[\hFDR] \geq \FDR. \]
\end{thm}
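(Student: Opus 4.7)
The plan is to prove the inequality term by term in the decomposition $\FDR = \sum_j \FDR_j$, showing that each summand $\EE[\hFDR_j]$ dominates its counterpart $\FDR_j$, after which the result follows by linearity. I would split the argument into two cases depending on whether $j$ is a signal or a noise variable.

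For $j \notin \cH_0$, we have $\FDR_j = 0$ by definition of the indicator $\one\{j\in\cH_0\}$ in \eqref{eq:linear-decomp}. On the other hand, $\hFDR_j = \hFDR_j^*(\vct S_j) \cdot \phi_j(\vct D)$ is non-negative: the first factor is a conditional expectation of the non-negative quantity $\one\{j\in\cR\}/R$, and the second factor is non-negative by construction (either $\psi_j/\EE_{H_j}[\psi_j\mid \vct S_j]$ with both numerator and denominator in $[0,1]$, or the convention $\phi_j=1$). Therefore $\EE[\hFDR_j] \geq 0 = \FDR_j$.

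For $j\in\cH_0$, the distribution $P$ of $\vct D$ lies in the submodel defined by $H_j$, so expectations under $P$ coincide with $\EE_{H_j}$. The key step is to compute $\EE_{H_j}[\hFDR_j]$ by iterated expectations, conditioning on $\vct S_j$:
\[
\EE_{H_j}[\hFDR_j^*(\vct S_j)\,\phi_j(\vct D)] \;=\; \EE_{H_j}\!\left[\hFDR_j^*(\vct S_j)\,\EE_{H_j}[\phi_j(\vct D)\mid \vct S_j]\right].
\]
By the definition \eqref{eq:phij}, the inner conditional expectation equals $1$ almost surely: on the event $\{\EE_{H_j}[\psi_j\mid \vct S_j]>0\}$ the ratio has conditional mean $1$, and on the complement $\psi_j=0$ almost surely so the $0/0$ convention $\phi_j=1$ matches a conditional mean of $1$ as well. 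Plugging this in and applying the tower property once more,
\[
\EE_{H_j}[\hFDR_j] \;=\; \EE_{H_j}[\hFDR_j^*(\vct S_j)] \;=\; \EE_{H_j}\!\left[\tfrac{\one\{j\in\cR\}}{R}\right] \;=\; \FDR_j,
\]
so the contribution of each null index is exactly unbiased. Summing both cases over $j\in[d]$ gives $\EE[\hFDR] \geq \FDR$.

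There is no genuine obstacle here; the argument is essentially a Rao--Blackwell calculation built into the definitions. The only point that requires some care is verifying that the $0/0$ convention in \eqref{eq:phij} really yields $\EE_{H_j}[\phi_j\mid \vct S_j]=1$ on the exceptional event, so that the tower-property calculation is valid pathwise. Once that identity is in hand, the rest is a one-line application of iterated expectations and the definition of $\hFDR_j^*$.
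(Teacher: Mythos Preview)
Your proof is correct and follows essentially the same approach as the paper: show $\EE[\hFDR_j]\geq \FDR_j$ for each $j$ by treating the null and non-null cases separately, using iterated expectations on $\vct S_j$ together with $\EE_{H_j}[\phi_j\mid \vct S_j]=1$ for the null terms, and non-negativity for the non-null terms. The only difference is cosmetic---you spell out the $0/0$ convention more carefully than the paper does.
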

\begin{proof}
It suffices to show $\EE[\hFDR_j] \geq \FDR_j$ for all $j$. For $j \in \cH_0$, $\hFDR_j$ is unbiased:
\begin{align*}
    \EE[\hFDR_j] &= \EE \br{\EE_{H_j}\br{\hFDR_j^*(\vct S_j) \cdot \phi_j(\vct D) \mid \vct S_j}} \\[5pt]
    &= \EE \br{\hFDR_j^*(\vct S_j) \cdot \EE_{H_j}[ \phi_j(\vct D) \mid \vct S_j]}\\[5pt]
    &= \EE \br{\hFDR_j^*(\vct S_j)}\\
    &= \FDR_j.
\end{align*}
For $j \not\in \cH_0$, $\hFDR_j \geq \FDR_j = 0$ almost surely.
\end{proof}

\begin{remark}
    While we might have initially held out hope for an unbiased estimator, attaining one is typically impossible in practice. We can see this in the Gaussian linear model \eqref{eq:linear-model}, where the $\FDR$ of any nontrivial procedure $\cR$ is bounded and discontinuous in $\vct \theta$, but the distribution of the response $\vct Y$ depends on $\vct \theta$ in a continuous way. As a result, no statistic $T(\vct Y)$ can have $\EE_{\vct \theta}[T(\vct Y)] = \FDR_{\vct \theta}(\cR)$ for all $\vct \theta$.
\end{remark}

\begin{remark} 
    Extending the previous remark, we might view the smoothness of $\EE[\hFDR]$ with respect to $\vct \theta$, and the corresponding conservatism of $\hFDR$, as a robustness feature of our estimator, if we have conceptual concerns about the use of point nulls $H_j:\;\theta_j = 0$ in defining the $\FDR$. If many of the $\theta_j$ parameters are minuscule, but none are exactly zero, then the formal $\FDR$ of any procedure is exactly zero, but a more practical perspective would consider these ``near-nulls'' to be false discoveries whenever they are selected. Because the distribution of $\hFDR$ is continuous in $\vct \theta$, it effectively treats the near-nulls as nulls. That is, while it may be a severely biased estimator for the formal $\FDR$ (which is zero), it may be nearly unbiased for the ``practical $\FDR$'' that counts near-null selections as Type I errors.
\end{remark}

\begin{remark}
  As discussed in Section~\ref{sec:notation}, our sole statistical assumption --- availability of a sufficient statistic $\vct S_j$ for the submodel defined by $H_j$ --- is formally vacuous, because $\vct D$ is itself a sufficient statistic. However, if we take $\vct S_j = \vct D$, we have
  \[
  \hFDR_j(\vct D) = \EE_{H_j} \br{\frac{\one \set{j \in \cR}}{R} \mid \vct D } \cdot \frac{\psi_j(\vct D)}{\EE_{H_j} [\psi_j(\vct D) \mid \vct D]} = \frac{\one \set{j \in \cR}}{R},
  \]
  and consequently $\hFDR = 1$ almost surely. Theorem~\ref{thm:bias} still applies to this estimator, which trivially has non-negative bias for the true FDR, but the estimator is useless.
\end{remark}

\begin{remark}
  For variable selection by thresholding independent p-values $p_j$ at threshold $c \in (0, 1)$, \citet{storey2002direct} proposed a conservatively biased FDR estimator
  \[
  \hFDR^{\textnormal{Storey}} = \frac{c}{R(c)} \cdot \frac{\#\{p_j > \zeta\}}{1-\zeta}, \quad
  \textnormal{ where } \cR(c) = \set{j: \; p_j \leq c}.
  \]
  To apply our approach, let $\vct S_j = p_{-j}$ and assume $c \leq \zeta$. Then we have
  \[
  \hFDR = \sum_{j=1}^d \EE_{H_j} \br{\frac{\one \set{p_j \leq c}}{\#\set{j: \; p_j \leq c}} \mid p_{-j} } \cdot \frac{\one\{p_j > \zeta\}}{1-\zeta} = \frac{c}{R(c) + 1} \cdot \frac{\#\{p_j > \zeta\}}{1-\zeta}.
  \]
Thus, our $\hFDR$ is nearly identical to Storey's estimator but slightly less conservative.
\end{remark}


The next section elaborates on our method's definition to help readers build intuition for how it works.

\subsection{Understanding our method}\label{sec:understanding}



To better understand our method, it is helpful to consider a simpler but closely related method that we could use to estimate a different Type I error rate, called the \textit{per-family error rate} (PFER)
\begin{align*}
\PFER 
&= \EE\br{|\cR \cap \cH_0|} \\
&= \sum_{j=1}^d \PP\br{j \in \cR} \cdot \one \set{j \in \cH_0}\\
&= \sum_{j \in \cH_0} \PP\br{j \in \cR}.
\end{align*}
We can follow the same approach as in Section~\ref{sec:hFDR} to obtain the PFER estimator
\begin{align}\nonumber
\hPFER 
&= \sum_{j=1}^d \PP_{H_j}\br{j \in \cR \mid \vct S_j} \cdot \frac{\one\{p_j > \zeta\}}{1-\zeta}\\[5pt]
\label{eq:hPFER}
&= \frac{1}{1-\zeta}\cdot \sum_{j:\, p_j > \zeta} \PP_{H_j}\br{j \in \cR \mid \vct S_j} .
\end{align}
The $j$th term in the sum \eqref{eq:hPFER} is an estimate of variable $j$'s chance of being selected, calculated under the assumption that $H_j$ holds. The estimator $\hPFER$ adds up these estimated selection probabilities over all variables with unimpressive $p$-values, then multiplies the sum by an inflation factor $(1-\zeta)^{-1}$ to account for the null terms that were eliminated by chance. If $\zeta = 0.1$, the inflation factor is $10/9 \approx 1.1$.

In problems where the number $R$ of selections is relatively large and stable, such that $1/R$ is roughly constant conditional on $\vct S_j$, we have
\[
\EE_{H_j} \br{\frac{\one \set{j \in \cR}}{R} \mid \vct S_j } \approx \frac{\PP_{H_j}\br{j \in \cR \mid \vct S_j}}{R},
\]
and as a result
\begin{equation}\label{eq:hFDR_approx}
\hFDR \approx \frac{\hPFER}{R} = \frac{1}{(1-\zeta)R}\cdot \sum_{j:\, p_j > \zeta} \PP_{H_j}\br{j \in \cR \mid \vct S_j}.
\end{equation}
For variable selection methods where $R$ is deterministic, including forward stepwise regression with a fixed number of steps, this approximation is exact.

This approximation aids our intuition by telling us that $\hFDR$ is well-approximated by a sum of estimated inclusion probabilities, divided by the number $R$ of selected variables, and inflated by $(1-\zeta)^{-1}$. Roughly speaking, $\hFDR$ is large when some unimpressive variables stood a good chance of being selected (or actually were selected), and otherwise $\hFDR$ is small.

\begin{remark} The PFER is interesting in its own right as an upper bound for the {\em family-wise error rate} (FWER), defined as 
\[
\textnormal{FWER} = \PP\br{|\cR \cap \cH_0| \geq 1} \leq \PFER.
\]
As a result, the estimator $\hPFER$ in \eqref{eq:hPFER} can also be understood as a conservatively biased estimator for FWER. We leave exploration of this extension to future work.
\end{remark}

\subsection{Example model assumptions}\label{sec:ex_models}

In this section we give additional examples beyond the Gaussian linear model and demonstrate the choices of $\vct S_j$ and the ways to compute a valid $p$-value.

\begin{example}[Nonparametric ``model-X'' setting]\label{ex:modelX}
\citet{candes2018panning} introduced novel nonparametric modeling assumptions for multiple testing in supervised learning problems, which allow the analyst to test for conditional independence between the response and each explanatory variable, given the other variables. Assume that we observe $n$ independent realizations of the $(d+1)$-variate random vector $(X,Y) \sim P$, where the joint distribution $P_X$ of explanatory variables is fully known, but nothing is assumed about the conditional distribution $P_{Y \mid X}$.\footnote{These assumptions can be relaxed to allow for $P_X$ to be known up to some unknown parameters; see \citet{huang2020relaxing}.}

Under these modeling assumptions, $\vct S_j = (\vct X_{-j}, \vct Y)$ is a complete sufficient statistic for the $j$th null hypothesis $H_j:\; Y \independent X_j \mid X_{-j}$. Under $H_j$, after fixing the observed $\vct S_j$ we can sample from the conditional distribution of $\vct D$ by sampling a new $\vct X_j$ vector according to its known distribution given $\vct X_{-j}$. For any test statistic $T_j(\vct D)$, we can compare the observed value to its simulated distribution to obtain a $p$-value that is independent of $\vct S_j$; the corresponding test is called the conditional randomization test (CRT). In our simulations, we use the marginal covariance $\textnormal{cov}(\vct X_j, \vct Y)$ as our CRT test statistic and perform a two-sided test, but other choices may be preferred depending on the problem setting and the manner in which the analyst expects $Y$ to depend on the signal variables.
\end{example}



\begin{example}[Gaussian graphical model] \label{ex:gauss_grapic}
Assume we observe $n$ independent realizations of a $d$-variate Gaussian random vector $X$ ($n > d$), with mean zero and positive-definite covariance matrix $\mat \Sigma$. In the corresponding precision matrix $\mat \Theta = \mat \Sigma^{-1}$, the entry $\mat\Theta_{jk}$ is zero if and only if $X_j$ and $X_k$ are conditionally independent given the other entries. In other words, the sparsity pattern of the undirected dependence graph $\cG$ for $X$ is identical to the sparsity pattern of $\mat \Theta$. In lieu of estimating all $d(d-1)/2$ off-diagonal entries of $\mat \Theta$, a common modeling assumption is that the dependence graph is sparse, and by extension that $\mat \Theta$ is sparse as well \citep{meinshausen2006high,yuan2007model,friedman2008sparse}.

We associate to each pair $j\neq k$ the null hypothesis that $(j,k)$ is absent from the true dependence graph $\cG$, or equivalently $H_{jk}:\;\mat \Theta_{jk} = 0$. The Gaussian graphical model is a parametric exponential family model whose natural parameters are the entries of $\mat\Theta$, with corresponding sufficient statistics given by the entries of $\widehat{\mat \Sigma} = {\mat X}^\tran {\mat X}$, which follows a Wishart distribution. As a result, a complete sufficient statistic for $H_{jk}$ is given by observing every entry of $\widehat{\mat \Sigma}$ {\em except} $\widehat{\mat \Sigma}_{jk}$ and $\widehat{\mat \Sigma}_{kj}$, or equivalently by 
\[
\vct S_{jk} = \left(\mat X_{-k}^\tran \mat X_{-k}, \;\mat X_{-\{j,k\}}^\tran \mat X_k, \;\|\mat X_k\|^2\right).
\]
This sufficient statistic is strongly reminiscent of the sufficient statistic for $H_j$ in the Gaussian linear model, with $\mat X_{-k}$ playing the role of the design matrix and $\mat X_{k}$ playing the role of the response, and the standard test for $H_{jk}$ is indeed identical to the $t$-test for $H_j$ in that regression problem. In our simulations, we use the corresponding two-sided $t$-test $p$-value $p_{jk}$. See \citet{drton2007multiple} for additional details.

In this modeling context, we will study the FDR of the graphical Lasso algorithm of \citet{friedman2008sparse}, which is obtained by maximizing the log-likelihood with a Lasso penalty on the off-diagonal entries of $\mat \Theta$:
\begin{equation}\label{eq:graph_lasso}
    \mat {\widehat{\Theta}}(\lambda) = \argmax_{\mat \Theta \succ \mat 0} \; \log\det\mat\Theta - \textnormal{trace}(\mat {\widehat\Sigma} \mat \Theta) - \lambda \sum_{j<k} |\Theta_{jk}|.
\end{equation}
\end{example}

\section{Standard error of $\hFDR$}
\label{sec:uncertainty}

In this section, we discuss the standard error of our estimator $\hFDR$. Since our general method can be adapted for many different selection algorithms in many different statistical modeling contexts, it is very difficult to provide a general analysis that covers all contexts in which our method might be applied, but we can offer theoretical results in a narrower context, as well as a practical estimation strategy that we have found to be reasonably reliable in our simulation studies. Section~\ref{sec:se_bound} develops a theoretical finite-sample bound for our estimator under stylized assumptions, leading to conditions under which the variance of $\hFDR$ is vanishing. Section~\ref{sec:se_est} discusses a specialized bootstrap method for standard error estimation, which we use to produce all figures in this work. Section~\ref{sec:bootstrap_consistency} introduces the theoretical property of our bootstrap scheme.

\subsection{Theoretical bound for standard error} \label{sec:se_bound}

Intuitively, if $R$ is reasonably large and many different variables independently have comparable chances of being selected, our estimator's standard error should be small. To illustrate this heuristic in a stylized setting, we develop a finite-sample bound for the standard error, $\se(\hFDR)$, for the Lasso in the Gaussian linear model, under a block-orthogonality assumption. We make no assumptions about the variables' dependence within each block of variables, but we assume that variables in different blocks are orthogonal to each other:
\begin{assum}[block-orthogonal variables]\label{ass:block-orth}
Let $\mat X$ be an $n\times d$ matrix, and let index set $(b) \subseteq [d]$ for $b = 1, \ldots, B$ represent a partition of $[d]$ into $B$ disjoint subsets (that is, $\bigcup_b (b) = [d]$). For each $b$, let $\vct X_{(b)} \in \RR^{n \times m_b}$ be the submatrix of $\mat X$ containing only the columns corresponding to the variables in $(b)$, where $m_b = |(b)|$ is size of block $(b)$. We say that $\mat X$ has \emph{block orthogonal variables} with respect to the partition $\{(b)\}_{b=1}^B$ if
\[
\vct X_{(b_1)}^\tran \vct X_{(b_2)} = \mat 0_{m_{b_1} \times m_{b_2}},
\quad \text{for all } b_1 \neq b_2.
\] 
\end{assum}

Assumption~\ref{ass:block-orth} allows us to treat the Lasso on the entire matrix $\mat X$ as a separable optimization problem on the $B$ different blocks of explanatory variables. When this assumption is in effect, we will denote the size of the largest block as $m = \max_b m_b$.
We will assume throughout this section that every explanatory variable has mean zero and unit norm: $\sum_{i=1}^n x_{i,j} = 0$ and $\|\vct X_j\|^2 = 1$, for all $j = 1,\ldots, d$. This last assumption comes with little practical loss of generality since conventional implementations of the Lasso begin by normalizing all variables in this way.

For our results below, it will be important to be able to provide a probabilistic lower bound $q$ on the average number of selections per block:
\begin{cond}[enough selections] \label{cond:adequate_sels}
We have $\PP(R \leq B q) \leq \rho$ for some $q,\rho \geq 0$.
\end{cond}
Because Condition~\ref{cond:adequate_sels} always holds for {\em some} $q$ and $\rho$, it has force only after we show that it holds for specific values of $q$ and $\rho$, defined in terms of other problem quantities. Proposition~\ref{prop:signals-lasso} gives examples of such bounds for Lasso problems.

\begin{prop} \label{prop:signals-lasso}
In the Gaussian linear model with block-orthogonal variables (Assumption~\ref{ass:block-orth}), the following statements hold for the Lasso with tuning parameter $\lambda > 0$:
\begin{enumerate}
    \item Let $\delta(z) := 2 (1 - \Phi(z))$
    denote the probability that a Gaussian random variable with mean $0$ and variance 1 exceeds $z$ in absolute value. Then $R$ is stochastically larger than a $\textnormal{Binom}(B,\delta(\lambda/\sigma))$ random variable, and for any $q < \delta(\lambda/\sigma)$, the Lasso procedure satisfies Condition~\ref{cond:adequate_sels} with
    \[
    \rho = \exp \pth{ -\;\frac{(\delta(\lambda/\sigma) - q)^2}{2 \delta} \cdot B}.
    \]
    \item Assume no fewer than $B \xi$ blocks contain at least one variable with $|\vct X_j \cdot \vct \mu| \geq \lambda$ for some $\xi \in (0, 1]$, where $\vct \mu = \mat X \vct\theta$ is the mean of $\vct Y$. Then $R$ is stochastically larger than a $\textnormal{Binom}(B \xi, 1/2)$ random variable, and for any $q < \xi/2$, the Lasso procedure satisfies Condition~\ref{cond:adequate_sels} with
    \[
    \rho = \exp \pth{-\; \frac{(\xi - 2q)^2}{4 \xi} \cdot B}.
    \]
    \end{enumerate}
\end{prop}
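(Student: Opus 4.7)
The proof plan rests on the separability induced by Assumption~\ref{ass:block-orth}. Because $\mat X^\tran \mat X$ is block-diagonal, the Lasso objective \eqref{eq:lasso} decomposes into an independent sum over blocks, so the restriction of the Lasso solution to block $k$ is simply the Lasso with design $\mat X_{B_k}$ and response $\vct Y$. The standard zero-KKT condition tells us this block solution is identically zero iff $\max_{j \in B_k}|\vct X_j^\tran \vct Y|\leq \lambda$. Let $E_k$ be the complementary event, so $R \geq \sum_{k=1}^K \one\{E_k\}$. The cross-block noise covariances $\textnormal{cov}(\mat X_{B_k}^\tran \vct\ep,\mat X_{B_\ell}^\tran \vct\ep) = \sigma^2 \mat X_{B_k}^\tran \mat X_{B_\ell}$ vanish for $k\neq \ell$, so the vectors $\{\mat X_{B_k}^\tran \vct Y\}_k$ are jointly Gaussian with independent blocks, and hence the $E_k$'s are mutually independent.

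For Part~1, fix any $j_0 \in B_k$. Since $\|\vct X_{j_0}\|_2 = 1$, the scalar $\vct X_{j_0}^\tran \vct Y$ is $\cN(\vct X_{j_0}^\tran \vct\mu, \sigma^2)$, and a short shift-monotonicity check shows that $\mu \mapsto \PP(|\cN(\mu,\sigma^2)| > \lambda)$ is minimized at $\mu = 0$, with value $\delta(\lambda/\sigma)$. Hence $\PP(E_k) \geq \delta(\lambda/\sigma)$ for every $k$, and by independence $R$ stochastically dominates $\textnormal{Binom}(K,\delta(\lambda/\sigma))$. The claimed $\rho$ then follows from the multiplicative Chernoff bound $\PP(\textnormal{Binom}(K,p)\leq Kq) \leq \exp\bigl(-K(p-q)^2/(2p)\bigr)$ applied with $p = \delta(\lambda/\sigma)$.

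For Part~2, the hypothesis supplies a set $S$ of at least $K\xi$ signal blocks, each containing some $j^\star$ with $|\vct X_{j^\star}^\tran \vct\mu|\geq \lambda$. For such $j^\star$ the mean of the Gaussian $\vct X_{j^\star}^\tran \vct Y$ already matches or exceeds $\lambda$ in magnitude, so $\PP(|\vct X_{j^\star}^\tran \vct Y|>\lambda)\geq 1/2$, giving $\PP(E_k)\geq 1/2$ for each $k\in S$. Independence yields $R \geq \sum_{k\in S}\one\{E_k\}$, which stochastically dominates $\textnormal{Binom}(|S|,1/2)$. Applying Chernoff to this dominating sum gives a tail exponent of $(|S|-2Kq)^2/(4|S|)$; since $n\mapsto (n-2Kq)^2/n$ is increasing for $n>2Kq$, the worst case is $|S|=K\xi$, producing exponent $K(\xi-2q)^2/(4\xi)$, which is precisely the Chernoff exponent for $\textnormal{Binom}(K,\xi/2)$ at its mean $K\xi/2$ and matches the stated $\rho$.

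The argument is largely routine once the block-separation step is in hand. The two technical points that need explicit care are (i) the shift-monotonicity lemma for the two-sided Gaussian tail used in Part~1, and (ii) the reconciliation in Part~2 between the natural intermediate dominator $\textnormal{Binom}(|S|,1/2)$ and the form $\textnormal{Binom}(K,\xi/2)$ appearing in the statement --- these two binomials need not stochastically dominate one another, but they share a common mean and produce identical Chernoff exponents, which is all that the tail bound in the proposition actually requires.
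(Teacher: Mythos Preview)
Your argument follows essentially the same route as the paper's: block separability of the Lasso, the KKT characterization $\{R_k\ge 1\}=\{\max_{j\in B_k}|\vct X_j^\tran\vct Y|>\lambda\}$, independence of these events via block-orthogonality of the Gaussian vectors $\mat X_{B_k}^\tran\vct Y$, the elementary lower bounds $\PP(E_k)\ge\delta(\lambda/\sigma)$ (resp.\ $\ge 1/2$), and the multiplicative Chernoff bound.

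Where you are in fact more careful than the paper is in Part~2. The paper's proof, like yours, actually establishes stochastic domination of $R$ over $\textnormal{Binom}(K\xi,1/2)$ (treating $K\xi$ as an integer), not over $\textnormal{Binom}(K,\xi/2)$ as the proposition literally asserts; it then applies Chernoff directly to $\textnormal{Binom}(K\xi,1/2)$ and obtains the stated $\rho$, without commenting on the discrepancy. You correctly observe that $\textnormal{Binom}(|S|,1/2)$ and $\textnormal{Binom}(K,\xi/2)$ need not be stochastically ordered, and you verify explicitly that the Chernoff exponents nonetheless agree (and that the exponent is monotone in $|S|$ on the relevant range). This is the right reconciliation: only the tail bound, not the precise form of the dominating binomial, is used downstream in Theorem~\ref{thm:variance}.
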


A finite-sample upper bound for $\text{Var}(\hFDR)$ for Lasso with block-orthogonal variables then follows.

\begin{thm}[finite sample variance bound] \label{thm:variance}
In the Gaussian linear model with block-orthogonal variables (Assumption \ref{ass:block-orth}), suppose we 
apply Lasso for variable selection at $\lambda>0$, and in the form of $\hFDR$, $\phi_j = \phi(p_j)$ only depends on $p_j$, the two-sided t-test p-value for $H_j$, where $\phi$ is a $L_\phi$-Lipschitz continuous function upper bounded by $c_\phi \geq 1$. 

If Condition~\ref{cond:adequate_sels} holds for given values of $q$ and $\rho$, we have
\[ \Var(\hFDR)
= 
O \pth{\pth{\frac{ m^4}{q^4} + \frac{m^6}{B q^4} } \cdot \frac 1B + \frac{m^2}{q^2} \cdot \frac{1}{n-d} + m^6 B^2 (\rho \mam e^{-(n-d)/32})},
\]
where the big-O notation only hides constants that depend on $c_\phi$, $L_\phi$, and $\sigma^2$.
\end{thm}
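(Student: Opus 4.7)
The plan is to exploit block orthogonality so that the Lasso decouples into $K$ independent sub-problems, reducing the variance of $\hFDR$ to per-block contributions with small corrections from the shared denominator $R$ and the shared residual sum of squares. Write $\hFDR = \sum_{k=1}^K \hFDR_{(k)}$ with $\hFDR_{(k)} := \sum_{j \in B_k} \hFDR_j^*(\vct S_j)\, \phi_j(p_j)$. By Assumption~\ref{ass:block-orth}, the Lasso problem decouples: $\cR \cap B_k$ and $R_k := |\cR \cap B_k|$ depend only on $Z_k := \mat X_{B_k}^\tran \vct Y$. Under the Gaussian linear model the $\{Z_k\}$ are mutually independent and jointly independent of $V := \|\vct Y - P_{\mathrm{col}(\mat X)}\vct Y\|^2 \sim \sigma^2 \chi^2_{n-d}$. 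The only ways $\hFDR_{(k)}$ couples to other blocks are through $R = R_k + R_{-k}$ appearing inside $\hFDR_j^*$ and through $\|\vct Y\|^2$ appearing inside $\vct S_j$, which aggregates $V$ and the block energies $Q_\ell = Z_\ell^\tran (\mat X_{B_\ell}^\tran \mat X_{B_\ell})^{-1} Z_\ell$.

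Next, I would carve out the good event $E := \{R \geq Kq\} \cap \{|V - \sigma^2(n-d)| \leq \tfrac12 \sigma^2(n-d)\}$. Condition~\ref{cond:adequate_sels} and a Laurent--Massart chi-square tail give $\PP(E^c) \leq \rho + 2e^{-(n-d)/32}$. Using $\Var(\hFDR) \leq 2\Var(\hFDR \one\{E\}) + 2\,\EE[\hFDR^2 \one\{E^c\}]$, the uniform bound $\hFDR \leq c_\phi d = c_\phi bK$ combined with the sharper on-$E$ bound $\hFDR \leq c_\phi b/q$ (inserted into a cross term that inflates by a factor $b^2$) yields the bad-event contribution $O(b^4 K^2 (\rho \vee e^{-(n-d)/32}))$. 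On $E$, I would introduce block surrogates $\widetilde{\hFDR}_{(k)}$ by replacing $R$ inside each $\hFDR_j^*$ with the leave-block-out surrogate $R_{-k} \vee Kq$ and replacing $\|\vct Y\|^2$ inside $\vct S_j$ with its leave-block-out version; each surrogate then depends only on $(Z_k, V)$ given external summaries. The stability inequality $|1/R - 1/(R_{-k} \vee Kq)| \leq R_k/(Kq)^2$ and the $L_\phi$-Lipschitz property of $\phi$ give a per-block replacement error of order $b^2/(Kq)^2$, aggregating across $K$ blocks to the $b^6/(K^2 q^4)$ correction term.

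For the main contribution I would apply the law of total variance conditional on $V$. Cross-block independence of the surrogates gives $\Var\bigl(\sum_k \widetilde{\hFDR}_{(k)} \mid V\bigr) = \sum_k \Var(\widetilde{\hFDR}_{(k)} \mid V)$, with each summand $O(b^2/(Kq)^2)$ by the on-$E$ size bound, producing the leading $O(b^4/(Kq^4))$ term. For $\Var(\EE[\hFDR \mid V])$, I would use that $\phi_j(p_j)$ depends on $V$ only through the residual-scale factor $V^{1/2}$ in the denominator of the $t$-statistic, making it $O(L_\phi/\sigma)$-Lipschitz in $V^{1/2}$; combined with $\Var(V^{1/2}) = O(\sigma^2/(n-d))$ and $|\hFDR| = O(b/q)$ on $E$, this yields the $O(b^2/((n-d)q^2))$ term.

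The main obstacle is the nonlinear coupling of all $K$ blocks through the reciprocal $1/R$ and through $\|\vct Y\|^2$ inside the $H_j$-conditional expectation defining $\hFDR_j^*$: the surrogate construction localizes each block's randomness but inserts nested Lipschitz and stability errors inside the conditional expectation that then propagate through the $\phi_j$ factor. Careful accounting of these nested errors is what pins down the specific exponents $b^4$ and $b^6$ in the stated bound, and is where most of the work of the proof will lie.
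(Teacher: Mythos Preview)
Your high-level strategy matches the paper's: exploit block orthogonality, carve out a good event for $R$ and $V$, and decompose the variance conditionally on $V = \RSS$. But your surrogate step has a genuine gap. You claim the leave-block-out surrogates $\widetilde{\hFDR}_{(k)}$ (with $R$ replaced by $R_{-k}\vee Kq$) are cross-block independent given $V$, yielding $\Var\bigl(\sum_k \widetilde{\hFDR}_{(k)} \mid V\bigr) = \sum_k \Var\bigl(\widetilde{\hFDR}_{(k)} \mid V\bigr)$. This is false: $\widetilde{\hFDR}_{(k)}$ still depends on $R_{-k} = \sum_{\ell\neq k} R_\ell$, a function of $\{Z_\ell: \ell \neq k\}$, so distinct surrogates remain coupled and the covariance terms do not vanish. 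The paper avoids this via a mean-value argument: since $R$ can change by at most $b$ under resampling given $\vct S_j$, one has $\hFDR_j^* = \mu_j(j \in \cR_k)/(R+\gamma_j)$ for some $|\gamma_j|\leq b$, where $\mu_j(j \in \cR_k)$ is a function of $(Z_k,V)$ alone. This gives $\hFDR \approx T/R$ with \emph{both} $T=\sum_k T_k$ and $R=\sum_k R_k$ sums of block-independent terms given $V$. The ratio's conditional variance is then bounded directly via the independent-copy identity $\Var_V(T/R) = \tfrac12\,\EE_V[(T/R-T'/R')^2]$ and control of $\EE_V[(TR'-T'R)^2]$, which factors cleanly because $(T,R)$ and $(T',R')$ are i.i.d.\ given $V$. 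No leave-one-out device is needed, and your ``replace $\|\vct Y\|^2$ with its leave-block-out version'' is also unnecessary once you recognize that $\vct S_j$ enters $\hFDR_j^*$ only through $(\mat X_{B_k-j}^\tran\vct Y, r_j^2)$ with $r_j^2 = V + u_j^2$.

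A secondary gap: you treat the $V$-dependence as entering only through $\phi_j(p_j)$, but $\hFDR_j^*$ itself depends on $V$ because the conditional law under $H_j$ given $\vct S_j$ is governed by $r_j^2 = V + u_j^2$. The paper must and does explicitly bound $|\partial_V \mu_j(j\in\cR_k)| \leq 1/V$ on the good event, and this term contributes on equal footing with the $\phi_j$ term to the Lipschitz constant that yields $b^2/(q^2(n-d))$. Incidentally, $\Var(V^{1/2}) = O(\sigma^2)$, not $O(\sigma^2/(n-d))$ as you write; the correct $1/(n-d)$ scaling comes from the Lipschitz constant of $T/R$ in $V$ being $O\bigl(b/(q(n-d))\bigr)$ on the good event, combined with $\Var(V) = 2\sigma^4(n-d)$.
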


\begin{remark}
    An explicit formula of the variance bound that contains $c_\phi$, $L_\phi$, and $\sigma^2$ is provided in Appendix~\ref{app:proofs}.
\end{remark}

We briefly introduce the core idea of the proof which relies on the concentration laws. Interested readers may refer to Appendix~\ref{app:proofs} for the full proof.
Assumption~\ref{ass:block-orth} has three key consequences:
\begin{enumerate}
    \item Conditioning on $\vct S_j$ fixes $\vct X_k^\tran \vct Y$ for every $k$ that is not in the same block as $j$. As a result, resampling the data conditional on $\vct S_j$ would not change the selection in all other blocks. Hence, the number of selections is conditionally stable: with resampling conditional on $\vct S_j$ it is $R(\vct D) + O(m)$.
    \item For any $j \in (b)$, let $\mat X_{(b); -j}$ denote the variables in block $(b)$ except $\vct X_j$. The selection of variable $\vct X_j$ depends on $\vct X_{-j}^\tran \vct Y$ only through $\vct X_{(b); -j}^\tran \vct Y$. Formally, we have
    \[
    E_j
    := \EE_{H_j}[\one\set{j \in \cR} \mid \vct X_{(b); -j}^\tran \vct Y]
    = \EE_{H_j}[\one\set{j \in \cR} \mid \vct X_{-j}^\tran \vct Y]
    \approx \EE_{H_j}[\one\set{j \in \cR} \mid \vct S_j].
    \]
    Note these $E_j$ are block-independent.\footnote{Alert readers might have noticed that conditioning on $\vct S_j$ is not the same as conditioning on $\vct X_{-j}^\tran \vct Y$: there is an additional $\norm{\vct Y}$ term. The analysis of the effects of $\norm{\vct Y}$ is rather technical and it is the source of the $O(1/(n-d))$ term in the variance bound.}
    \item Likewise, the p-value $p_j$ depends on $\vct X_{-j}^\tran \vct Y$ only through $\mat X_{(b); -j}^\tran \vct Y$. So $\phi_j(p_j)$ are block-independent.
\end{enumerate}
As a result, we have
\[
\hFDR_j \;=\; \EE_{H_j} \br{\frac{\one \set{j \in \cR}}{R} \mid \vct S_j } \cdot \phi_j
\;\approx\; \frac{E_j \cdot \phi_j}{R + O(m)}
\;\approx\; \frac{E_j \cdot \phi_j}{R}.
\]
The last approximation holds when $R > Bq$ with high probability and $B$ is substantially larger than $m$. Then
\[
\hFDR \;=\; \sum_j \hFDR_j
\;\approx\; \frac{\sum_k \pth{\sum_{j \in (b)} E_j \cdot \phi_j}}{R}.
\]
This is an average of 
$B$ independent random variables. Therefore, the variance is of order $O(1/B)$ by the concentration laws. A formal asymptotic result is shown in Corollary~\ref{cor:variance}. 

\begin{cor}[Asymptotically vanishing variance] \label{cor:variance}
Under the same settings as in Theorem~\ref{thm:variance}, consider a series of problems where we hold $c_\phi$, $m$, $L_\phi$ and $\sigma$ fixed but allow $\vct \theta$, $\lambda$, and $n$ to vary as $B$ increases. Suppose that at least $s(B)$ blocks contain a variable with $|\vct X_j \cdot \vct \mu| \geq \lambda$, where $\vct \mu = \mat X \vct\theta$ is the mean of $\vct Y$.

\begin{enumerate}
\item For $0 < \ep < 1/4$, suppose $s(B) \geq B^{1-4\ep}$, and let $B \to \infty$ with $n - d \geq B^{1 - 2\ep}$. Then
\[
\Var(\hFDR) = O(B^{-1 + 4 \ep }) \to 0.
\]
\item For $0 < \xi < 1$, suppose $s(B) \geq B\xi$, and let $B \to \infty$ with $n - d\geq 64\log(B)$. Then
\[
\Var(\hFDR) = O(B^{-1}) \to 0.
\]
\end{enumerate}
\end{cor}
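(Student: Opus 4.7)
The proof proceeds by specializing Theorem~\ref{thm:variance} to the Lasso, using Part~2 of Proposition~\ref{prop:signals-lasso} to supply a concrete $(q,\rho)$ pair satisfying Condition~\ref{cond:adequate_sels}, and then tracking the asymptotics of each term in the finite-sample bound. Since $b$, $c_\phi$, $L_\psi$, and $\sigma$ are all held fixed as $K\to\infty$, only the $K$-, $q$-, $\rho$-, and $(n-d)$-dependence in Theorem~\ref{thm:variance} need to be monitored.

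The first step is to invoke Part~2 of Proposition~\ref{prop:signals-lasso} with $\xi = s(K)/K$, taking the natural choice $q = s(K)/(4K) = \xi/4$ so that $q < \xi/2$. This yields
$$\rho \;=\; \exp\!\bigl(-(\xi-2q)^2 K/(4\xi)\bigr) \;=\; \exp(-s(K)/16),$$
which is super-polynomially small in $K$ as soon as $s(K)\to\infty$. Plugging into Theorem~\ref{thm:variance} and absorbing all fixed constants into the big-$O$ notation collapses the variance bound to
$$\Var(\hFDR) \;\leq\; O\!\left(\frac{1}{q^4 K} \;+\; \frac{1}{q^4 K^2} \;+\; \frac{1}{q^2(n-d)} \;+\; K^2\bigl(\rho \vee e^{-(n-d)/32}\bigr)\right).$$

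In Case~1, $s(K) \geq K^{1-4\epsilon}$ gives $q = \Theta(K^{-4\epsilon})$, so the leading polynomial term $1/(q^4 K)$ governs the rate; the hypothesis $n - d \geq K^{1-2\epsilon}$ forces $1/(q^2(n-d))$ and $K^2 e^{-(n-d)/32}$ to be strictly dominated, and $s(K)\to\infty$ forces $K^2\rho$ to vanish super-polynomially. Threading these estimates back into the bound yields the claimed $o(1)$ rate (with the $4\epsilon$ exponent in the corollary arising from balancing the $q^{-4}$ cost against the $K^{-4\epsilon}$ size of $q$). In Case~2, $\xi$ and hence $q$ are positive constants, so $1/(q^4 K)=O(1/K)$ is the dominant term, $\rho$ is exponentially small in $K$, and $n - d \geq 64\log K$ handles the $e^{-(n-d)/32}$ and $1/(n-d)$ terms.

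The main delicacy is the choice of $q$: Condition~\ref{cond:adequate_sels} caps $q$ above by the per-block selection rate $\xi/2$, whereas the polynomial factors in the variance bound degrade as $q\downarrow 0$. Case~1 is the more subtle regime because $q$ itself shrinks with $K$, and one must verify that the polynomial penalty from the $1/q^4$ factors is outweighed by the $1/K$ concentration savings across the $K$ independent blocks under the signal-strength hypothesis. This is essentially bookkeeping, but it is where the precise interplay among the exponents in $s(K)$, $n-d$, and $K$ becomes visible, and it is the step that would require the greatest care in the formal write-up.
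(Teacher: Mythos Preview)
Your overall approach is the same as the paper's: invoke Part~2 of Proposition~\ref{prop:signals-lasso} to manufacture an explicit $(q,\rho)$ pair, substitute into Theorem~\ref{thm:variance}, and track the asymptotics term by term. Case~2 matches the paper's argument essentially verbatim.

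The gap is in Case~1. You take $\xi = s(K)/K$ and $q=\xi/4$, so that under the hypothesis $s(K)\ge K^{1-4\epsilon}$ you get $q\asymp K^{-4\epsilon}$. But then $q^{-4}\asymp K^{16\epsilon}$, and the leading term in Theorem~\ref{thm:variance} becomes
\[
\frac{1}{q^{4}K}\;=\;O\bigl(K^{16\epsilon-1}\bigr),
\]
not $O(K^{-1+4\epsilon})$. Your parenthetical about ``balancing the $q^{-4}$ cost against the $K^{-4\epsilon}$ size of $q$'' does not produce the claimed exponent: raising $q^{-1}\asymp K^{4\epsilon}$ to the fourth power costs you $K^{16\epsilon}$, not $K^{4\epsilon}$. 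In particular, with your choice of $q$ the bound is not even $o(1)$ once $\epsilon\ge 1/16$, well inside the stated range $0<\epsilon<1/4$.

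The paper instead sets $\xi=K^{-\epsilon}$ and $q=\xi/3\asymp K^{-\epsilon}$, so that $q^{-4}\asymp K^{4\epsilon}$, the leading term is $1/(q^{4}K)=O(K^{-1+4\epsilon})$, the middle term is $q^{-2}/(n-d)\le K^{2\epsilon}/K^{1-2\epsilon}=K^{-1+4\epsilon}$, and $\rho=\exp(-K^{1-\epsilon}/36)$ annihilates the tail. Note that applying Proposition~\ref{prop:signals-lasso} with $\xi=K^{-\epsilon}$ requires at least $K^{1-\epsilon}$ signal blocks, which is stronger than the corollary's stated hypothesis $s(K)\ge K^{1-4\epsilon}$; the paper's own proof tacitly uses the former, so the mismatch is evidently a typo in the statement rather than a flaw in the strategy---but either way the claimed $O(K^{-1+4\epsilon})$ rate does not follow from the $q$ you chose.
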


The idea of bounding $\se(\hFDR)$ by the concentration laws applies to more general settings. The key properties it requires are that 1) the selection of one variable is merely affected by a relatively small group of variables and 2) the perturbation on one variable merely affects the selection of a relatively small group of variables. For example, with the block-orthogonal variables, such a group is the block where the variable belongs to. It allows us to decouple $R$ with an individual selection and treat it as roughly constant. Then $\hFDR = \sum_j \hFDR_j$ is a sum of variables with only local dependency. The independent parts of randomness in $\hFDR_j$ cancel each other and yield an $\hFDR$ with a small variance. However, if perturbing a single variable greatly impacts the entire selection set, $\hFDR$ can be noisy. We describe an example where $\hFDR$ has a non-vanishing standard error as $d$ increases in Appendix~\ref{app:asymp_noisy}.







\subsection{Standard error estimation by bootstrap} \label{sec:se_est}
 
In practice, we use the bootstrap to estimate the standard error $\se(\hFDR)$ to understand the noise level of $\hFDR$. That is, we calculate the standard error of our estimator on data drawn from a distribution $\widehat P$, where $\widehat P$ is an estimator for the true distribution $P$. For example, we could sample i.i.d. from the empirical distribution of data $\mat D$ in nonparametric settings (we will refer to this as the i.i.d. bootstrap), or the distribution specified by the maximum likelihood estimator (MLE) in parametric settings.

Making a good choice of $\widehat P$ for estimating $\se(\hFDR)$ requires careful consideration. One major issue with both the standard i.i.d. bootstrap and the MLE-based parametric bootstrap is that the noise variables in $P$ would all become weak signal variables in $\widehat P$; as a result, $\cR$ might select many more variables in a typical draw from $\widehat P$ (in which the true model is dense) than it would in a typical draw from $P$, resulting in a poor estimate of the standard error of $\hFDR$. A similar issue could arise in any parametric bootstrap method using an estimator for $\vct \theta$ that is too liberal about letting variables into the model. However, an estimator that is too aggressive in shrinking signal variables could have the opposite issue, if the signal strength in $\widehat P$ is systematically smaller than the signal strength in $P$.



To ameliorate this issue, we try to construct $\widehat P$ in such a way that most of the null variables in $P$ remain null in $\widehat P$, but the signal variables are estimated with little bias. Formally, let $\widehat{\cH_0}$ denote an estimate of the index set of noise variables. In the parametric setting $\mat D \sim P_{\vct \theta}$, we find a sparse MLE of $\vct\theta$, denoted as $\hat{\vct\theta}$, under the constraints that $\theta_j = 0$ for all $j \in \widehat{\cH_0}$. Then we set $\widehat P = P_{\hat{\vct\theta}}$ and draw bootstrap samples to calculate $\hse(\hFDR)$.
We formally describe our bootstrap scheme in the parametric setting in Algorithm \ref{alg:para_BS}.

\begin{algorithm}
\caption{$\hFDR$ standard error estimation (parametric)} \label{alg:para_BS}
\begin{algorithmic}
\State \textbf{Input}: The observed data $\mat D$ that follows parametric distribution $P_{\vct \theta}$; an estimator $\widehat{\cH_0}$ for the indices with $H_j:\, \theta_j = 0$ is true for $j \in \widehat{\cH_0}$.

\State \textbf{Compute}: the MLE under the intersection null model with $\theta_j = 0$ for all $j \in \widehat{\cH_0}$
\[
\hat{\vct\theta}^{(*)}(\widehat{\cH_0}; \mat D) =  \underset{\theta_j = 0,\, j \in \widehat{\cH_0}}{\argmax}\; P_{\vct \theta}(\mat D)
\]
\State \textbf{Loop}: \textbf{for} {$m = 1, 2, \ldots, M$} \textbf{do}
\State \qquad generate sample $\hat{\mat D}^{(m)} \sim P_{\hat{\vct\theta}^{(*)}}$; \quad compute $\hFDR^{(m)} = \hFDR(\hat{\mat D}^{(m)})$

\State \textbf{Output}: Sample standard error
\[
\hse = \sqrt{\frac{1}{M-1} \sum_{m=1}^M \pth{\hFDR^{(m)} - \frac 1 M \sum_{l=1}^M \hFDR^{(l)} }^2}
\]
\end{algorithmic}
\end{algorithm}

In our implementation, we construct $\widehat{\cH_0}$ as the set of variables neglected by the best model in cross-validation. More precisely, our cross-validation scheme is built to select $\widehat{\cH_0}$ such that the constrained MLE model $P_{\hat{\vct\theta}(\widehat{\cH_0})}$ is close to the true distribution $P$. It estimates the constrained MLE with the candidate $\widehat{\cH_0}$ on the training set and employs the negative log-likelihood on the validation set as the error measurement. A formal description of our cross-validation algorithm is in Appendix~\ref{app:cv_bs}. In Section \ref{sec:introduction}, we demonstrated that the model that minimizes the cross-validation error tends to over-select variables moderately: it contains most of the signal variables and a bunch of noise variables. This is exactly what we need here -- to include most of the signals while roughly keeping the signal sparsity level. We give examples of the constrained MLE problem next.

\begin{example}[Gaussian linear model] \label{ex:linear_model_bootstrap}
    In the Gaussian linear model, the negative log-likelihood $-\log P_{\theta}(\mat D)$ is proportional to the mean-squared error. Therefore, $\vct {\hat\theta}(\widehat{\cH_0})$ is the least square coefficients regressing $Y$ on variables $X_j$ for $j \in \widehat{\cH_0}^\setcomp$.
\end{example}

\begin{example}[Gaussian graphical model]
    In the Gaussian graphical model, we parameterize the model using the symmetric positive definite (SPD) precision matrix $\mat \Theta$. Let $\widehat{\mat \Sigma}$ denote the sample covariance matrix. The log-likelihood is
    \[
    \log P_{\mat \Theta}(\mat D) = \log \det \mat \Theta - \textnormal{trace}(\widehat{\mat \Sigma} \mat \Theta).
    \]
    The constrained MLE $\mat {\hat\Theta}$ is obtained by maximizing $P_{\mat \Theta}$ under the constraint that $\mat \Theta_{jk} = 0$ for pairs $(j,k)$ in $\widehat{\cH_0}$. While there are many algorithms that can solve this problem \citep[e.g.][]{yuan2007model}, we apply a simple heuristic algorithm described in Appendix~\ref{app:glasso_mle} to save computation. 
\end{example}

\begin{algorithm}
\caption{$\hFDR$ standard error estimation (non-parametric, model-X)} \label{alg:nonpara_BS}
\begin{algorithmic}
\State \textbf{Input}: The observed data matrix $\mat D$ whose $n$ rows are independent observations $(y_i, x_{i,1}, \ldots, x_{i,d})$; an index set $\widehat{\cH_0}$ that we believe $H_j:\, \theta_j = 0$ is true for $j \in \widehat{\cH_0}$.

\State \textbf{Preparation}: For simplicity, denote $x_{i, \widehat{\cH_0}} = (x_{i,j}, \; j \in \widehat{\cH_0})$ as the $i$th observation of $X_{\widehat{\cH_0}}$. Let $\widehat P_-$ denote the empirical distribution of $(Y, X_{-\widehat{\cH_0}})$
\[
\widehat P_-(\,\cdot\,) = \frac 1n \sum_{i=1}^n \one\set{(y_i, x_{i, -\widehat{\cH_0}}) \in \,\cdot\,}.
\]
\State \textbf{Loop}: \textbf{for} {$m = 1, 2, \ldots, M$} \textbf{do}
\State \qquad \textbf{for} {$i = 1, \ldots, n$} \textbf{do}: Generate sample $(\hat y_i, \hat x_{i, -\widehat{\cH_0}}) \sim \widehat P_-$ and $\hat x_{i, \widehat{\cH_0}} \sim X_{\widehat{\cH_0}} \mid X_{-\widehat{\cH_0}} = \hat x_{i, -\widehat{\cH_0}}$.

Let $\hat{\mat D}^{(m)}$ be a data matrix whose $i$th row is $(\hat y_i, \hat x_{i,1}, \ldots, \hat x_{i,d})$.

Compute $\hFDR^{(m)} = \hFDR(\hat{\mat D}^{(m)})$

\State \textbf{Output}: Sample standard error
\[
\hse = \sqrt{\frac{1}{M-1} \sum_{m=1}^M \pth{\hFDR^{(m)} - \frac 1 M \sum_{l=1}^M \hFDR^{(l)} }^2}
\]

\end{algorithmic}
\end{algorithm}

For the model-X setting, we can employ an analogous method by using a variant of the i.i.d. bootstrap that constrains $H_j:\; Y \independent X_j \mid X_{-j}$ to be true for all $j \in \widehat{\cH_0}$. Specifically, we first draw bootstrap samples of $(Y, X_{-\widehat{\cH_0}})$ from the empirical distribution of the data. Then, we sample $X_{\widehat{\cH_0}}$ from the conditional distribution $X_{\widehat{\cH_0}} \mid (Y, X_{-\widehat{\cH_0}})$, which is equal to $X_{\widehat{\cH_0}} \mid X_{-\widehat{\cH_0}}$ under the intersection null $\bigcap_{j \in \widehat{\cH_0}} H_j$. We formally describe our bootstrap scheme in the model-X nonparametric setting in Algorithm \ref{alg:nonpara_BS}.

If the researchers make model assumptions, the set $\widehat{\cH_0}$ can be constructed in the same way as in the parametric settings. If no model assumption is made and the likelihood is unavailable, researchers can choose $\widehat{\cH_0}$ via p-value-based approaches. For example, we can calculate the conditional randomization test p-values \citep{candes2018panning} for each variable and define $\widehat{\cH_0}$ as all variables whose p-values are larger than $0.1$. This choice lets the majority of null variables remain null in $\widehat P$ while retaining the signal for strong signal variables.



\subsection{Theory of  parametric bootstrap variance estimation} 
\label{sec:bootstrap_consistency}

It is difficult to study the theoretical property of the bootstrap variance estimator in full generality. Nevertheless, we make substantial progress for Gaussian linear models with Lasso selection. In particular, we revisit the stylized block-orthogonal Gaussian linear model used in our theoretical variance analysis (Section~\ref{sec:se_bound}) with known error variance $\sigma^2$ for simplicity and consider the Lasso selection with a appropriately chosen penalty level in an  asymptotic regime where the true and estimated FDR are bounded away from $0$ and $1$ in the limit. 

For variance estimation, we consider the parametric bootstrap in Algorithm~\ref{alg:para_BS} with $M=\infty$, which assumes away Monte Carlo error, and the estimated null set
\[
\widehat{\cH_0}=\{j:\ \ltheta_j(\bar\lambda)=0\},
\]
where $\bar\lambda>\lambda$ is a slightly inflated penalty level used only for estimating the bootstrap model.  
We then choose $\hat{\vct\theta}^{(*)}$ as the OLS estimator restricting to $\widehat{\cH_0}$, as described in Example~\ref{ex:linear_model_bootstrap}. We call the resulting model an OLS-after-Lasso model. Define $\hFDR^{(*)}$ as a generic bootstrap sample of $\hFDR$ computed under OLS-after-Lasso model.

In this setting, we prove that the asymptotic variance of the parametric bootstrap FDR estimates conditional on data matches the asymptotic variance of our FDR estimator. We elaborate the setting and technical details in Appendix \ref{sec:asymptotic-setup}. 

\begin{thm}
Consider a Gaussian linear model with known $\sigma^2$ that satisfies the Assumption \ref{ass:block-orth}. In the setting of Theorem \ref{thm:bootstrap} in Appendix \ref{subsec:bootstrap_main}, there exist deterministic sequences $\nu_d = o(1)$ and $\tilde{\alpha}_d$ that are bounded away from $0$ and $1$, and a data-dependent sequence $\alpha_d^{(*)} = \alpha_d^{(*)}(\vct X, \vct Y)$ such that 
\[\sup_{t\in\RR}
\biggl|
\PP\!\biggl(
  \frac{1}{\nu_d}
  \bigl(\hFDR - \tilde{\alpha}_d\bigr)
  \le t
\biggr)
- \Phi(t)
\biggr| = o(1),\]
and 
\[\sup_{t\in\RR}
\biggl|
\PP\!\biggl(
  \frac{1}{\nu_d}
  \bigl(\hFDR^{(*)} - \alpha_d^{(*)}\bigr)
  \le t
  \,\Big|\, \mat X, \vct Y
\biggr)
- \Phi(t)
\biggr| = o_\PP(1),\]
where $\Phi$ denotes the CDF of $\cN(0, 1)$.
\end{thm}

\section{Real world examples}
\label{sec:real_data}


In this section we illustrate our method on real scientific data in two different modeling settings. In Section~\ref{sec:real_HIV}, we analyze an HIV drug resistance data set using Lasso regression, and in Section~\ref{sec:real_protein} we analyze a protein signaling network using the graphical Lasso.

\subsection{HIV drug resistance studies} \label{sec:real_HIV}

We first re-analyze the Human Immunodeficiency Virus (HIV) drug resistance data set of \citet{rhee2006genotypic}, who found that regularized regression algorithms were effective at predicting which {\em in vitro} virus samples would be resistant to each of 16 antiretroviral drugs, based on which genetic mutations were present in the samples. In this problem it is of scientific interest not only to {\em predict} from the virus's genetic signature whether it would be resistant to various drugs --- the focus of the original study --- but also to {\em identify} which mutations best explain drug resistance, which was the focus of the re-analysis in \citet{barber2016knockoff}. As a result, both model fit and variable selection accuracy are highly relevant, so it may be scientifically attractive to tackle both problems with a unified analysis.

Our initial processing of the data follows that of \citet{barber2016knockoff}, yielding a different regression data set for each of 16 drugs in three different drug classes. In the data set for a given drug, the response $y_i$ is a measure of sample $i$'s resistance to that drug, and the explanatory variables are binary indicators of whether each of several hundred mutations was present or absent in that sample. As in \citet{barber2016knockoff}, we adopt the Gaussian linear model assumption in this study.
 
As a measure of replicability, \citet{barber2016knockoff} assessed whether detected genes were also present in a list of treatment-selected mutations (TSM), which a previous study had found were enriched in patients who had been treated by drugs in each of the three classes \citep{rhee2005hiv}. \citet{rhee2006genotypic} also found that excluding {\em a priori} all but the TSM list of mutations was an effective form of variable selection. Following \citet{barber2016knockoff}, we treat the TSM list, which comes from an independent data set, as an imperfect approximation to the ``ground truth'' set of signal mutations.\footnote{In addition to the TSM list, we include a small number of variables with extreme $t$-statistics as ``ground truth'' signals in each regression. Specifically, we consider all variables rejected by the Bonferroni procedure at level $\alpha = 0.01$ to be ``ground truth'' signal variables.} Having an estimate of the ground truth from another data set allows us to assess our estimator's effectiveness in measuring the true FDR in this example; in most scientific problems, the ground truth is unavailable and we would have to rely on our method to measure FDR on its own.

\begin{figure}[tbp]
    \centering
    \begin{subfigure}[b]{0.45\linewidth}
      \centering
      \includegraphics[width=\linewidth]{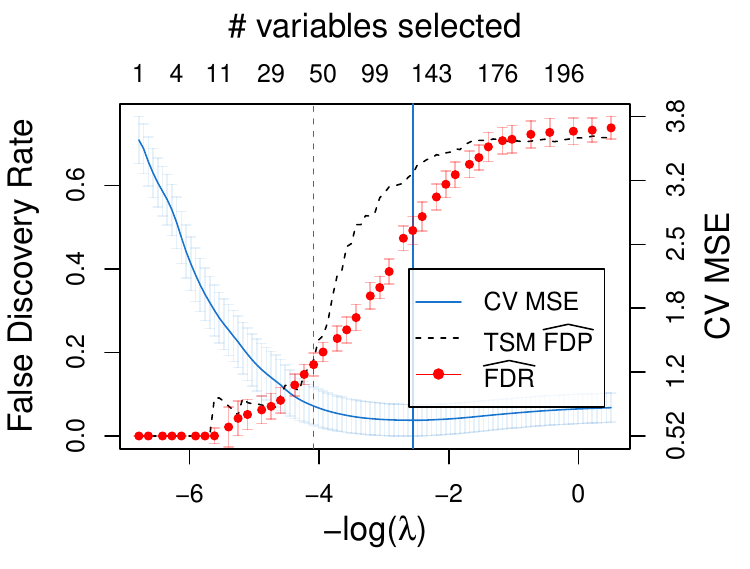}
      \caption{NFV drug, $d = 207$, $n = 842$.}
      \label{fig:hiv_nfv}
    \end{subfigure}    
    \begin{subfigure}[b]{0.45\linewidth}
      \centering
      \includegraphics[width=\linewidth]{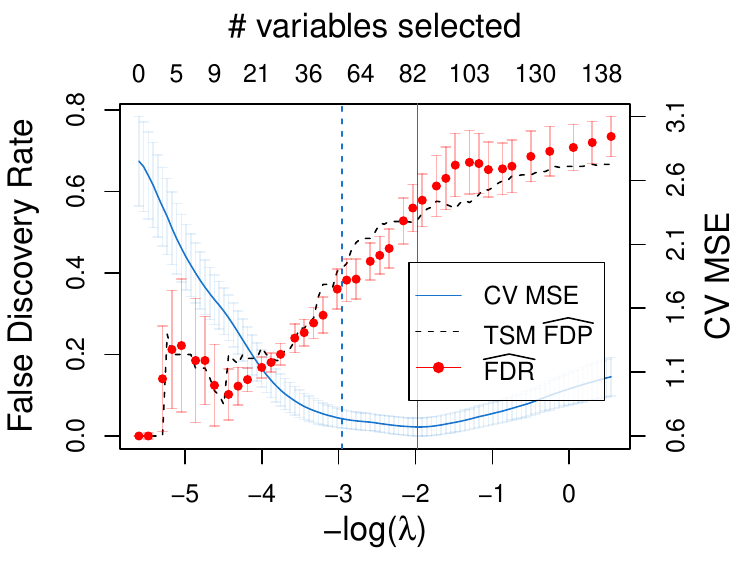}
      \caption{ATV drug, $d = 147$, $n = 328$.}
      \label{fig:hiv_atv}
    \end{subfigure}
    \par\bigskip
    \begin{subfigure}[b]{0.45\linewidth}
      \centering
      \includegraphics[width=\linewidth]{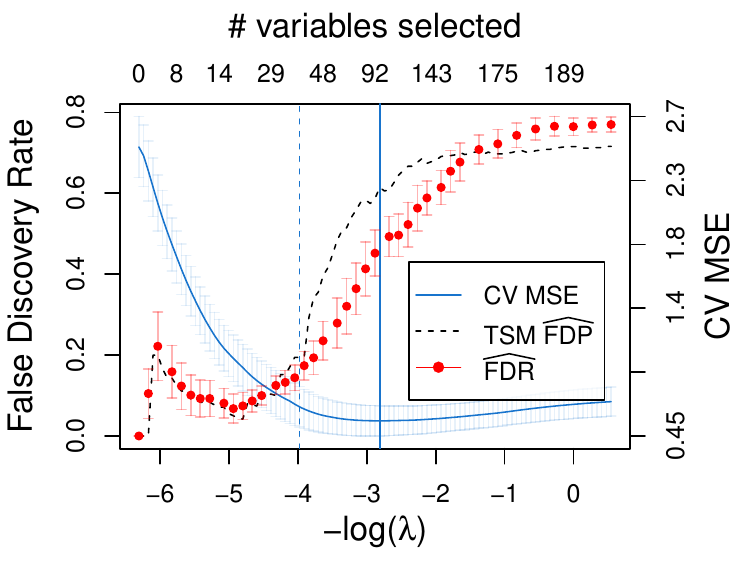}
      \caption{APV drug, $d = 201$, $n = 767$.}
      \label{fig:hiv_apv}
    \end{subfigure}    
    \begin{subfigure}[b]{0.45\linewidth}
      \centering
      \includegraphics[width=\linewidth]{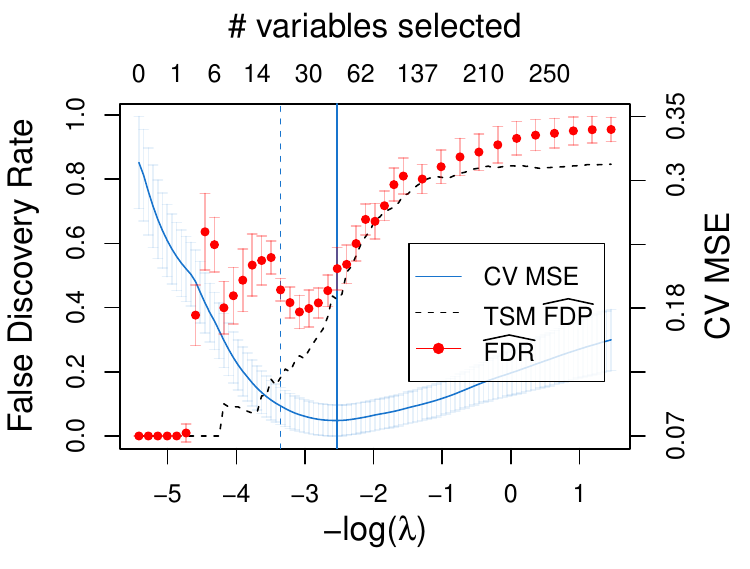}
      \caption{DDI drug, $d = 283$, $n = 628$.}
      \label{fig:hiv_ddi}
    \end{subfigure}
    \caption{$\hFDR$ along with estimated one-standard-error bars by bootstrap, for four representative experiments selected from among the 16 HIV experiments in \citet{rhee2006genotypic}.
    The vertical lines show the locations of the complexity parameters selected by CV and the one-standard-error rule.}
    \label{fig:hiv}
\end{figure}

Figure \ref{fig:hiv} shows our FDR estimates alongside the CV estimate of MSE, for four selected experiments representing a variety of statistical scenarios. Results for all drugs are available in Appendix~\ref{app:extended_results}. The error bars for both CV and $\hFDR$ our estimator show $\pm$one standard error of $\hFDR$ estimated by bootstrap. The $\hFDP$ curves in the figures are estimated using the TSM signals as an approximation to the ground truth.

The four panels of Figure~\ref{fig:hiv} illustrate how different variable selection problems can have very different trade-offs between variable selection accuracy and prediction accuracy. These scenarios would be impossible to distinguish from inspecting the CV curves, but our $\hFDR$ estimator is quite effective in distinguishing them from each other. Panel (a) shows an experiment in which variable selection appears to be easy: the one-standard-error rule simultaneously achieves good model fit and low FDR. Panel (b), by contrast, shows a regime in which no lasso estimator can simultaneously achieve competitive model fit and low FDR. Panel (c) shows an experiment in which the estimated FDR is non-monotone in the complexity parameter $\lambda$: both our estimator and the TSM $\hFDP$ judge the fifth variable to enter the model, gene P71.V, to be a likely noise variable, because its $t$-statistic is $-0.38$, corresponding to a $p$-value of $0.70$, and it is also absent from the TSM list (the $t$-statistics for the first four variables to enter all exceed 5). Panel (d) shows one of the three experiments in which our $\hFDR$ departs substantially from the TSM $\hFDP$. In this experiment, our $\hFDR$ estimator is large mainly because three of the first five selected variables have $p$-values equal to $0.77$, $0.53$, and $0.18$, but the TSM $\hFDP$ curve is small because all three variables appear in the TSM list.

\subsection{Protein-signaling network}\label{sec:real_protein}

To test our method in a realistic Gaussian graphical model setting, we use the flow cytometry data from \citet{sachs2005causal}, which \citet{friedman2008sparse} employed as an illustrative example in their graphical Lasso paper.  The data set records levels of $11$ proteins in individual cells under $14$ different perturbation conditions, in order to estimate relationships in the protein signaling network. For each experimental condition, a sample of cells received a different biological stimulation, and the cells' protein levels were observed.


\begin{figure}[tbp]
    \centering
    \begin{subfigure}[b]{0.45\linewidth}
      \centering
      \includegraphics[width=\linewidth]{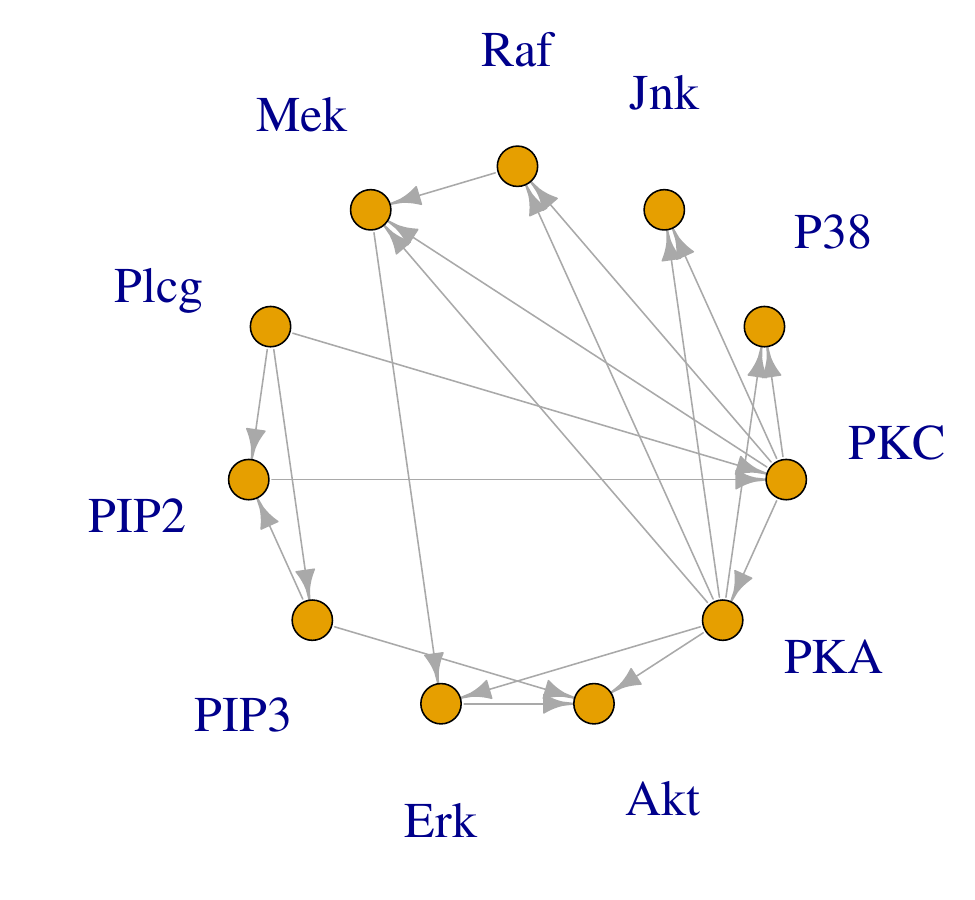}
      \caption{Protein causal DAG.}
      \label{fig:protein_DAG}
    \end{subfigure}
    \begin{subfigure}[b]{0.45\linewidth}
      \centering
      \includegraphics[width=\linewidth]{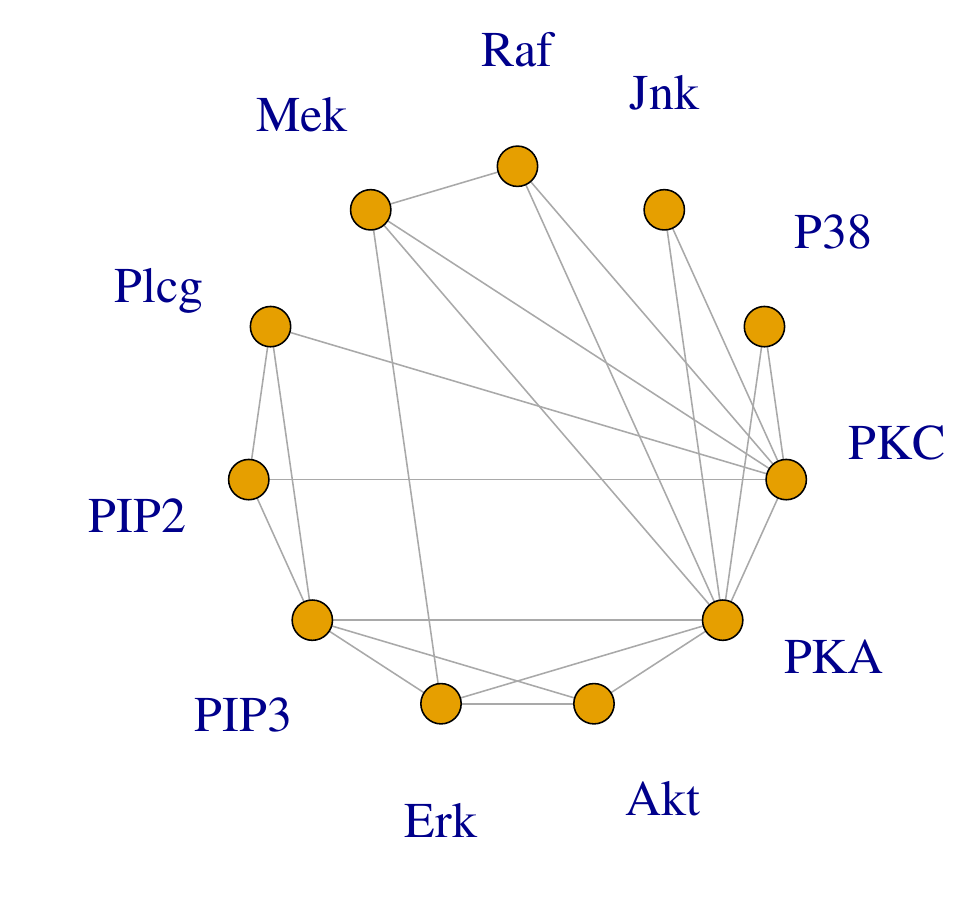}
      \caption{Induced undirected graph.}
      \label{fig:protein_UDG}
    \end{subfigure}
    \caption{Causal DAG supplied by \citet{sachs2005causal} and its induced undirected graph, which we use for independent validation of our FDR estimator. A pair of proteins have an edge in the undirected graph if they are not conditionally independent given the other proteins. There are $22$ edges in the induced undirected graph, out of $55$ possible edges.}
    \label{fig:protein_true}
\end{figure}

\citet{sachs2005causal} supplied a causal directed acyclic graph (DAG) on the proteins, shown here in Figure~\ref{fig:protein_DAG}. The causal graph was based on theoretical and empirical knowledge from previous literature and validation experiments that were independent from the flow cytometry dataset analyzed by \citet{friedman2008sparse}. After converting the directed causal graph to its induced undirected conditional dependence graph, shown in Figure~\ref{fig:protein_UDG}, we obtain an estimate of the set of true signals. As in the HIV data set, we also include as ground truth signals in each experiment a few additional pairs for which the $p$-values pass the Bonferroni threshold at level $\alpha = 0.01$; i.e., those whose $t$-statistic $p$-values are below $1.8 \times 10^{-4}$. Similarly to the TSM list in the HIV data set, the set of edges obtained above functions as an approximation to the ground truth, giving us a point of reference to help evaluate our FDR estimator.


\begin{figure}[tbp]
    \centering
    \begin{subfigure}[b]{0.45\linewidth}
      \centering
      \includegraphics[width=\linewidth]{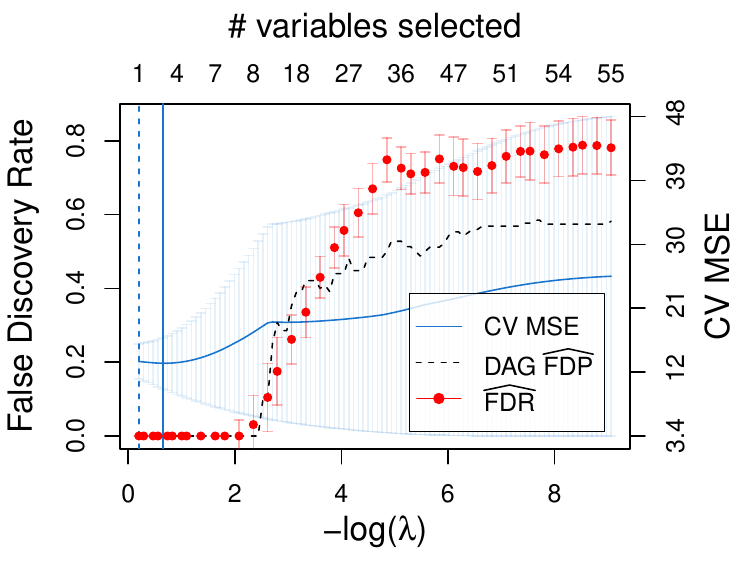}
      \caption{Experiment 1: $d = 11$, $n = 853$.}
      \label{fig:protein_1}
    \end{subfigure}
    \begin{subfigure}[b]{0.45\linewidth}
      \centering
      \includegraphics[width=\linewidth]{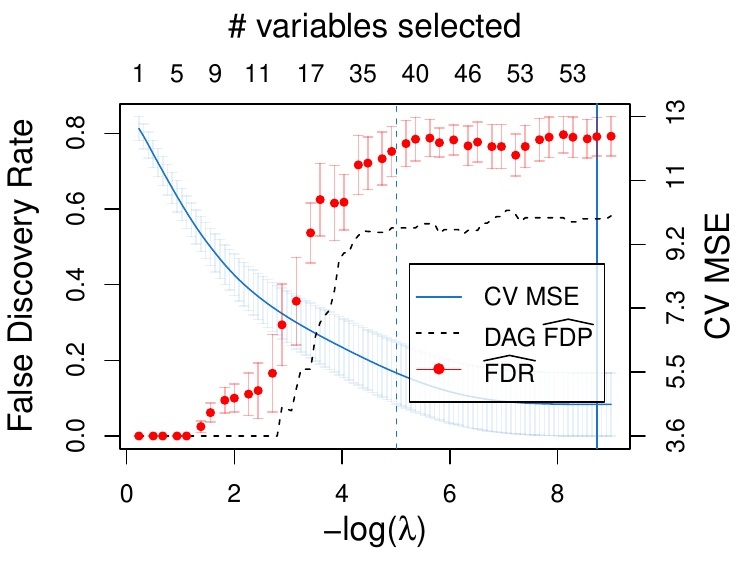}
      \caption{Experiment 11: $d = 11$, $n = 799$.}
    \end{subfigure}
    \caption{$\hFDR$ along with estimated one-standard-error bars by bootstrap. The $\hFDP$ curves are calculated by with a ``ground truth'' based on the dependence graph in Figure \ref{fig:protein_UDG}. Cross-validation alone tells little about the right way to select variables (pairs) while our $\hFDR$ is informative.}
    \label{fig:protein}
\end{figure}

Figure~\ref{fig:protein} shows two representative examples of the results from running the graphical Lasso on the $14$ experimental settings. Results for all experiments are available in Appendix~\ref{app:extended_results}. As in the HIV study, we show our $\hFDR$ and its estimated standard error, alongside an estimate of FDP calculated using the causal DAG supplied by \citet{sachs2005causal} (denoted as the DAG $\hFDP$ in Figure~\ref{fig:protein}), and the CV estimate of the predictive log-likelihood. In both panels of Figure~\ref{fig:protein}, the models that we would select by cross-validation are nearly useless: one selects almost no edges, while the other selects every edge. \citet{friedman2008sparse} analyzed the data somewhat differently than we did, pooling all $14$ experiments together, but likewise reported that the CV curve continuously decreased until all edges were present in the model. By contrast, our $\hFDR$ gives useful information about the selection accuracy, and suggests in each experiment a natural stopping point.

The main way in which the $\hFDR$ curves in all $14$ experiments depart from the DAG $\hFDP$ estimates is that our method consistently estimates the FDR to be a bit larger across the board. The reason for this departure is that the evidence for a number of the edges identified in the DAG appears to be very weak in the data sets we analyze. That is, some of the ``ground truth'' edges identified by \citet{sachs2005causal} appear to be absent in the actual data. This fact does not imply that those edges are truly absent in the data set, but if they are present their signals are quite weak, so our estimator treats them as likely nulls and thereby ``overestimates'' the $\FDR$.


\section{Simulation studies}\label{sec:simulations}

\subsection{Gaussian linear model} \label{sec:simu_linear}

In this section, we evaluate the performance of $\hFDR$ in simulation environments where the ground truth is known. We begin with the Gaussian linear model and employ Lasso as the variable selection procedure. 

The explanatory variables are generated under four primary scenarios, which cover independent, positively correlated, and negatively correlated explanatory variables as well as sparse $\mat X$ matrices:
\begin{enumerate}
    \item \textbf{IID normal}: independent Gaussian random variables $X_j \simiid \cN(0, 1)$.
    \item \textbf{Auto-regression (X-AR)}: $X_j$ ($j = 1, \ldots, d)$ is an AR(1) Gaussian random process with $\text{cov}(X_j, X_{j+1}) = 0.5$.
    \item \textbf{Regression coefficients auto-regression (Coef-AR)}: The OLS coefficients $\hat\theta_j$ ($j = 1, \ldots, d)$ follow an AR(1) Gaussian random process with $\text{cov}(\hat\theta_j, \hat\theta_{j+1}) = 0.5$. That is, $\textnormal{Cov(X)}$ is banded with negative off-diagonal entries.
    \item \textbf{Sparse}: Independent sparse binary observations $X_{ij} \simiid \text{Bernoulli}(0.05)$.
\end{enumerate}
The signal variable set $\cH_0^\setcomp$ is a random subset of $[d]$ that has cardinality $d_1$, uniformly distributed among all such subsets. To mimic realistic scenarios where there is a mix between weak and strong signal variables, we generate nonzero signals with random signal strength 
\[\theta_j \simiid \theta^* \cdot (1 + \textnormal{Exp}(1))/2,\] for all $ j \in \cH_0^\setcomp$, where $\textnormal{Exp}(1)$ represents a rate-$1$ exponentially distributed random variable and $\theta^*$ is a fixed value controlling the average signal strength. We calibrate a moderate signal strength $\theta^*$ such that the selection procedure will have type II error $\FPR = 0.2$ when $\FDR = 0.2$, where FPR stands for the \emph{False Positive Rate}
\[
\FPR = 1 - \EE \br{\frac{|\cR \cap \cH_0^\setcomp|}{d_1}}.
\]
Figure~\ref{fig:hFDR-fixedX} shows the performance of $\hFDR$ and the bootstrap $\hse$ with $d = 500$, $d_1 = 30$, and $n = 1500$. Both our estimator $\hFDR$ and our standard error estimator $\hse$ perform reasonably well across the board. Our $\hFDR$ shows a small non-negative bias in estimating FDR, and its uncertainty is about the same magnitude as the oracle $\FDP$. Notably, the uncertainty of $\hFDR$ is even smaller than the uncertainty of $\FDP$ when FDR is not too large, which is the region we care most about. As for the standard error estimation, $\hse$ is somewhat noisy as an estimator of $\se(\hFDR)$ but roughly captures the magnitude of uncertainty of $\hFDR$ and shows relatively small bias.
  
\begin{figure}[!tb]
    \centering
    \includegraphics[width = 0.95\linewidth]{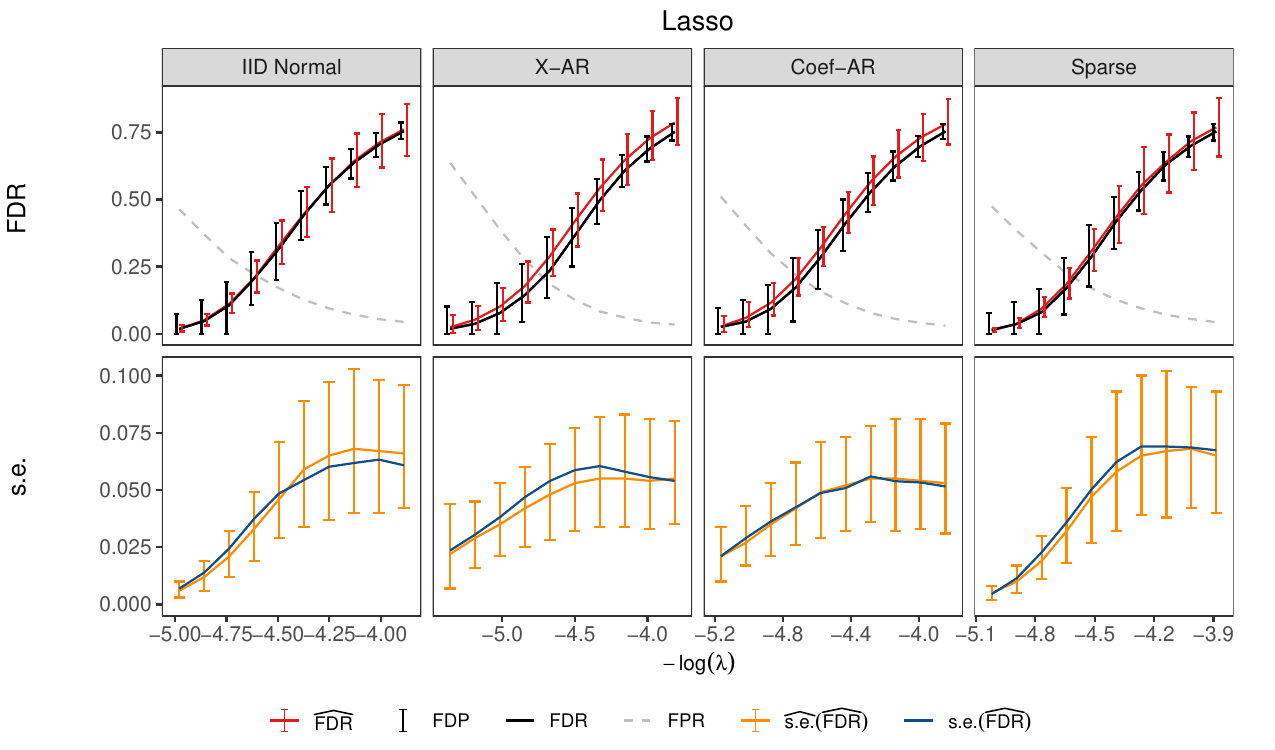}
    \caption{Performance of $\hFDR$ and $\hse$ in the Gaussian linear model. The error bars show the $5\%$ to $95\%$ quantiles of the empirical distributions and the solid lines show the sample average. $\hFDR$ shows a small non-negative bias in estimating FDR and $\hse(\hFDR)$ successfully captures the magnitude of uncertainty of $\hFDR$.}
    \label{fig:hFDR-fixedX}
\end{figure}

Next, we stress-test the performance of $\hFDR$ in several variations on the \textbf{IID normal} scenario above. The other settings remain the same except for those described in each of the four scenarios below. The first three scenarios are intended to illustrate cases where $\hFDR$ may have a substantial bias and the last scenario to illustrate a case where $\hFDR$ could have a high variance.
\begin{enumerate}
    \item \textbf{Weaker Signals}: A weaker signal strength $\theta^*$ such that Lasso has type II error $\FPR = 0.7$ when $\FDR = 0.2$. In this setting, only about $50\%$ of the signal variables have their $\hFDR_j$ contribution killed by the bias correction $\phi$, which could potentially lead to significant bias.
    \item \textbf{Dense Signals}: We have $d = 60$ variables and half of them are signals ($d_1 = 30$). The number of observations is $n = 180$ (same aspect ratio). This could lead to a higher bias because signal variables' contribution is the source of bias.
    \item \textbf{Aspect Ratio Close to 1}: We have $n = 600$ observations for $d = 500$ variables. The IID Gaussian observations are thus closer to collinear. This could lead to higher bias because the $p$-values we use in the bias correction are weak.
    \item \textbf{Very Sparse}: We have only $d_1 = 10$ signal variables, so typical model sizes are small, leading to high variance in $\FDP$ and potentially also in $\hFDR$.
\end{enumerate}

\begin{figure}[!tb]
    \centering
    \includegraphics[width = 0.95\linewidth]{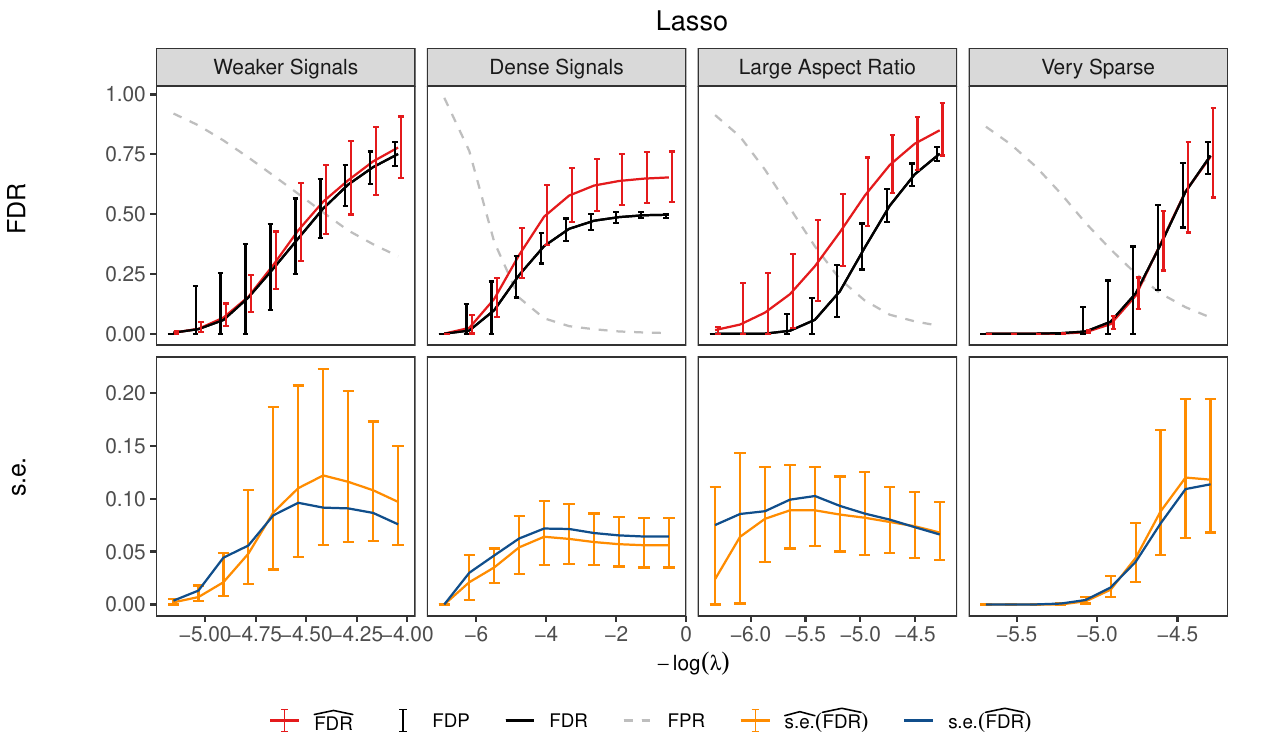}
    \caption{Performance of $\hFDR$ and $\hse$ in additional settings as an edge-case test for bias (column 1-3) and variance (column 4).}
    \label{fig:hFDR-fixedX-extend}
\end{figure}

In Figure~\ref{fig:hFDR-fixedX-extend}, $\hFDR$ does not show a substantial bias in the \textbf{Weaker Signals} setting, but does in the \textbf{Dense Signals} and \textbf{Large Aspect Ratio} settings when $\lambda$ gets small. In the \textbf{Weaker Signals} setting, although many signal variables have $\phi_j > 0$, each of them contributes very little to $\hFDR$ since their probability of being selected is small under the null $H_j: \vct\theta_j = 0$; moreover, because there are many more noise variables than signal variables, the total contribution from the former is still much larger than the contribution from the latter. However, these excess contributions do cause a bias in the \textbf{Dense Signals} setting, because the signal variables' total contributions is not offset by the noise variables' total contribution. The story is a bit different in the \textbf{Large Aspect Ratio} setting. In this scenario, even though the signals are strong enough to be reliably detected by the Lasso, the $p$-values are nearly powerless because the variables are nearly collinear. Hence many signal variables have $\phi_j > 0$. Unlike in the \textbf{Weaker Signals} setting, a signal variable still has a substantial chance to be selected by Lasso under $H_j$ due to the strong correlation structure. Therefore they make a substantial contribution to the bias. Finally, in the \textbf{Very Sparse} setting, we see a higher variance for $\hFDR$ owing to the typically small size of $\cR$, but the $\FDP$ also has a larger variance.

\subsection{Model-X setting} \label{sec:sim_modelX}

Next, we test the model-X version of our procedure in two linear modeling scenarios. The response $Y$ is either a continuous response with Gaussian errors, where we employ Lasso, or a binary response that follows a logistic regression model:
\[ \ln \pth{\frac{\PP(Y = 1)}{1 - \PP(Y = 1)}} = \sum_{j = 1}^d X_j \theta_j, \]
where we employ $\ell_1$-penalized logistic regression for variable selection. The other settings are the same as in Section~\ref{sec:simu_linear}.

\begin{figure}[!tb]
    \centering
    \includegraphics[width = 0.95\linewidth]{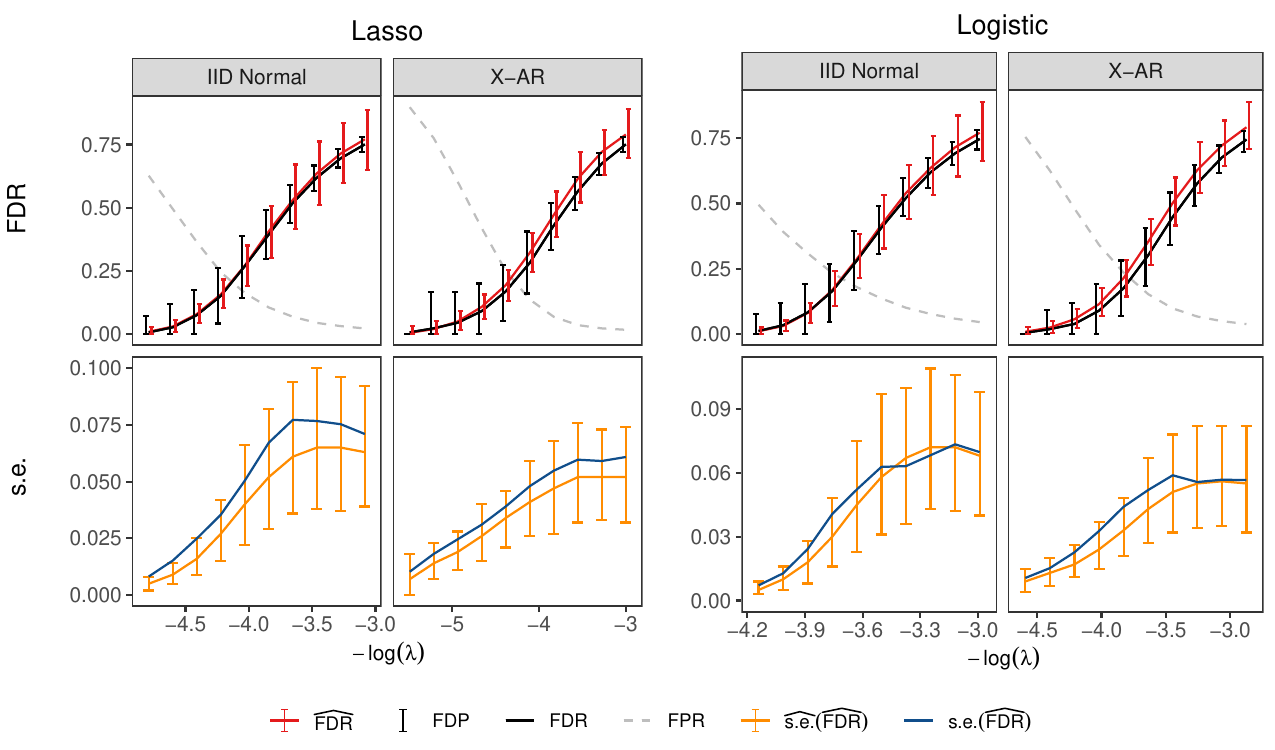}
    \caption{Performance of $\hFDR$ and $\hse$ in the model-X settings. The error bars show the $5\%$ to $95\%$ quantiles of the empirical distributions and the solid lines show the sample average. $\hFDR$ shows a small non-negative bias in estimating FDR and $\hse(\hFDR)$ successively captures the magnitude of uncertainty of $\hFDR$.}
    \label{fig:hFDR-modelX}
\end{figure}

Figure \ref{fig:hFDR-modelX} shows the performance of our estimators where the regression problem (left panel) is high-dimensional with $d = 500$ and $n = 400 < d$ and the classification problem (right panel) has $d = 500$ and $n = 1500$. As in Section~\ref{sec:simu_linear}, our estimators perform well.

\subsection{Gaussian graphical model} \label{sec:simu_graph}

Next, we test our method in the problem of detecting a sparse dependence graph as described in Example~\ref{ex:gauss_grapic}. We generate independent samples of $d$ variables
\[ (X_1, \ldots, X_d) \sim \cN(\vct 0, \mat \Sigma), \quad i = 1, \ldots, n, \]
where $\mat \Sigma^{-1} = \mat \Theta = \mat I_d + (\mat B + \mat B^\tran) / 2$ and $\mat B \in \RR^{d \times d}$ has all elements $0$ except for $d_1$ randomly selected upper-triangular entries. The values at these entries are again generated independently from $\theta^* \cdot (1 + \textnormal{Exp}(1))/2$, with $\theta^*$ calibrated such that the graphical Lasso has type II error $\FPR = 0.2$ when $\FDR = 0.2$, and subject to the constraint that $\mat \Sigma$ is positive semidefinite. We set $d = 50$, $n = 2500$, and $d_1 = 30$. Therefore, there are $1225$ possible edges in the dependence graph and we apply the graphical Lasso to estimate the true edge set. Figure \ref{fig:hFDR-glasso} shows the results. The bias of $\hFDR$ is very small and $\se(\hFDR)$ successfully captures the magnitude of uncertainty of $\hFDR$.

\begin{figure}[!tb]
    \centering
    \includegraphics[width = 0.8\linewidth]{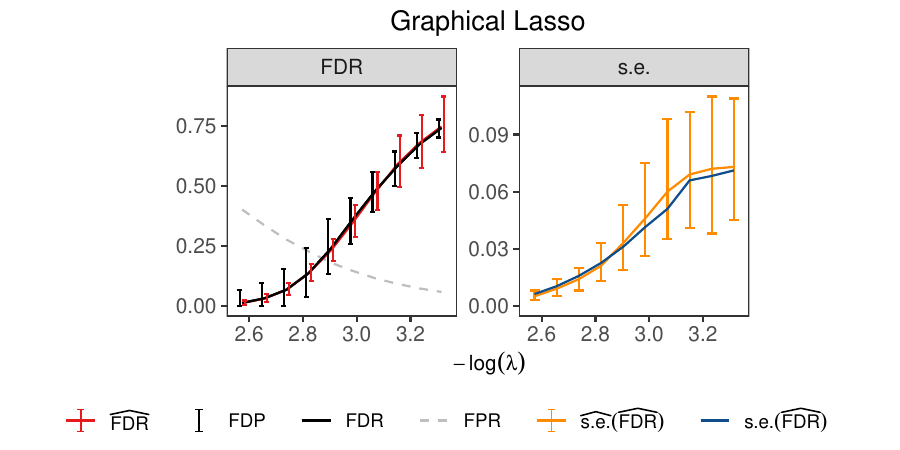}
    \caption{Performance of $\hFDR$ and $\hse$ in the Gaussian graphical model. The error bars show the $5\%$ to $95\%$ quantiles of the empirical distributions and the solid lines show the sample average. $\hFDR$ shows a small non-negative bias in estimating FDR and $\hse(\hFDR)$ successively captures the magnitude of uncertainty of $\hFDR$.}
    \label{fig:hFDR-glasso}
\end{figure}

\subsection{Unfavorable case: equicorrelated Gaussian} \label{sec:simu_unfavorable}

In Section \ref{sec:se_bound}, we show that $\hFDR$ has a small standard error if the variables have a block-independent structure. 
In this section, we examine the performance of $\hFDR$ on a practical problem with an unfavorable structure -- the Multiple Comparisons to Control (MCC) problem \citep{dunnett1955multiple}, where we estimate the effects of $d$ treatments by comparing treatment group for each one to a common control group. We assume all $d+1$ groups have the same sample size. Mathematically, the problem is equivalent to considering $d$-dimensional Gaussian vector $\vct Z \sim \cN(\vct \theta, \mat \Sigma)$, where $\mat \Sigma \in \RR^{d \times d}$ has diagonal $1$ and off-diagonal $\rho = 0.5$. So all $Z_j$ variables are positively equicorrelated: $\textnormal{cov}(Z_j, Z_k) = \rho$ for $j \neq k$. We are interested in selecting the signal variables with $\theta_j \neq 0$. As discussed in Section \ref{sec:se_bound}, this is an unfavorable case for $\hFDR$.

The problem can be equivalently formalized as variable selection in a Gaussian linear model whose OLS coefficients are $\vct Z$. We employ the Lasso selection procedure and test $\hFDR$ in this unfavorable case. Figure \ref{fig:hFDR-unfavorable} shows the performance of $\hFDR$ and $\hse$ on the MCC problem with $d = 500$, $n = 1500$, and $d_1 = 30$ signal variables. As expected, $\hFDR$ shows a nonnegative bias and an uncertainty much larger than the oracle FDP. But like in previous simulations, the $\hse$ tells the magnitude of uncertainty of $\hFDR$ and can warn the researchers that $\hFDR$ is not very reliable.

\begin{figure}[!tb]
    \centering
    \includegraphics[width = 0.7\linewidth]{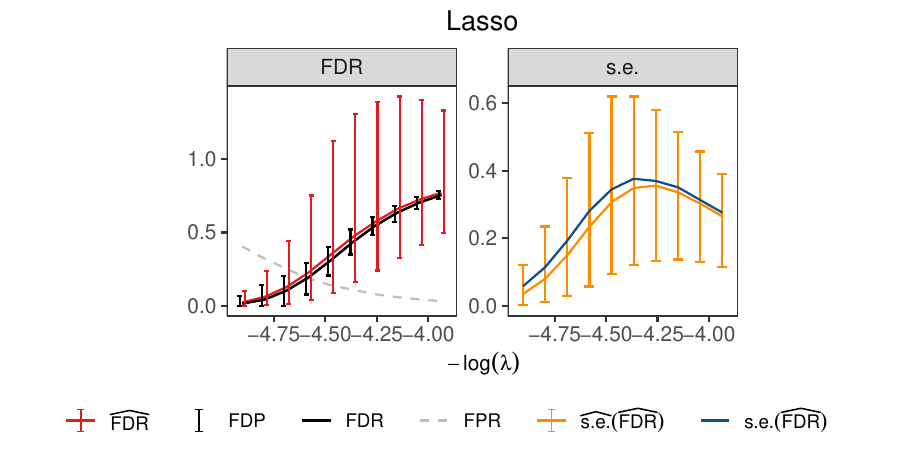}
    \caption{Performance of $\hFDR$ and $\hse$ on the MCC problem. The error bars show the $5\%$ to $95\%$ quantiles of the empirical distributions and the solid lines show the sample average. $\hFDR$ shows a large variance but $\hse(\hFDR)$ successively captures the magnitude of uncertainty of $\hFDR$.}
    \label{fig:hFDR-unfavorable}
\end{figure}

\subsection{Robustness to model misspecification} \label{sec:simu_sensitivity}

The robustness of $\hFDR$ is closely related to the robustness of the conditional test on $H_j$. Take the Gaussian linear model, for example. Figure \ref{fig:showcalc} demonstrates the calculation of
\[
\hFDR^*_j = \EE_{H_j} \br{\frac{\one \set{j \in \cR}}{R} \mid \vct S_j }
\]
on the Lasso selection procedure. Conditioning on $\vct S_j$, there is a one-to-one map between the t-statistic for $H_j$
and the sufficient statistic $\mat X^\tran \vct y$. So this conditional expectation can be calculated as a one-dimensional integral of ${\one \set{j \in \cR}}/{R}$ (shown in red curve) against the $t$-distribution density (shown in blue curve). And the calculated $\hFDR^*_j$ is the area in purple. Now consider the Gaussian linear model is misspecified. For example, the noise $\ep_i$ is not Gaussian but follows a t-distribution with $5$ degrees of freedom. The true conditional density of the t-statistic, assuming $\theta_j = 0$, is shown by the orange dashed curve. As expected, the conditional t-test is robust: the true density of the $t$-statistic is not far from the theoretical $t$-distribution density. As a result, the calculated $\hFDR^*_j$ does not change much with our misspecified model. $\phi_j$ is robust for the same reason if it is based on the t-test p-value. Therefore, we expect $\hFDR$ to be robust as long as we have confidence in the robustness of the conditional t-test on each null hypothesis.
\begin{figure}[!tb]
    \centering
    \includegraphics[width = 0.7\linewidth]{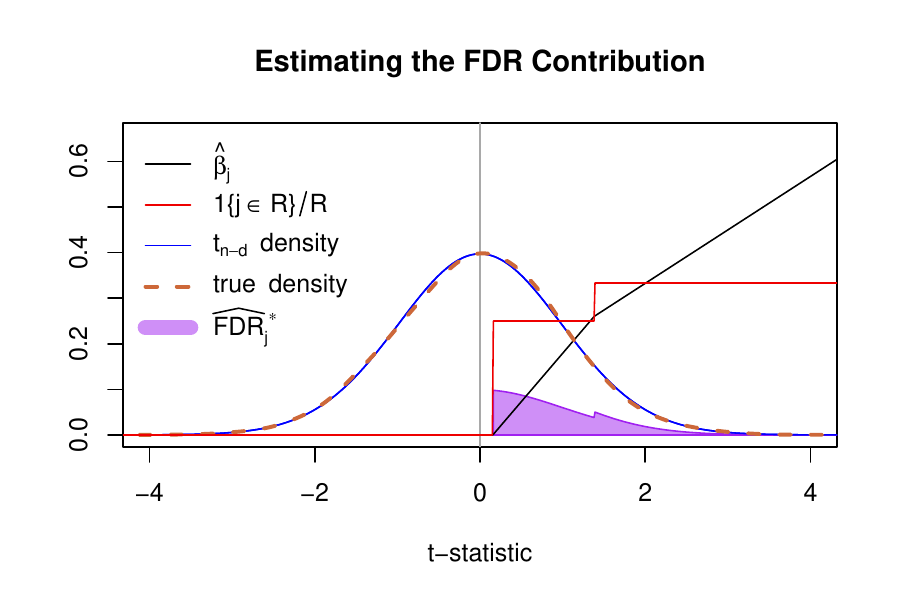}
    \caption{A demonstration of computing $\hFDR^*_j$. The condition expectation is essentially a one-dimensional integral of ${\one \set{j \in \cR}}/{R}$ (shown in red) against the $t$-distribution density (shown in blue). And the calculated $\hFDR^*_j$ is the area in purple. With model misspecification, if the true conditional density (shown in dashed orange) is not far from the misspecified $t$-distribution density, then the calculation of $\hFDR^*_j$ is approximately correct.}
    \label{fig:showcalc}
\end{figure}
The reasoning can be extended to more general scenarios. For example, in the Gaussian graphical model with graphical Lasso selection, we expect $\hFDR$ to be robust to model misspecification if the conditional t-test on $H_{jk}$ is robust. In the model-X settings, consider the selection procedures where $\mat X^\tran \vct Y$ is sufficient. These include Lasso and penalized logistic regression selection. We expect $\hFDR$ to be robust if the conditional distribution of $\mat X_j^\tran \vct Y$ is robust.

\begin{figure}[!tb]
    \centering
    \includegraphics[width = 0.95\linewidth]{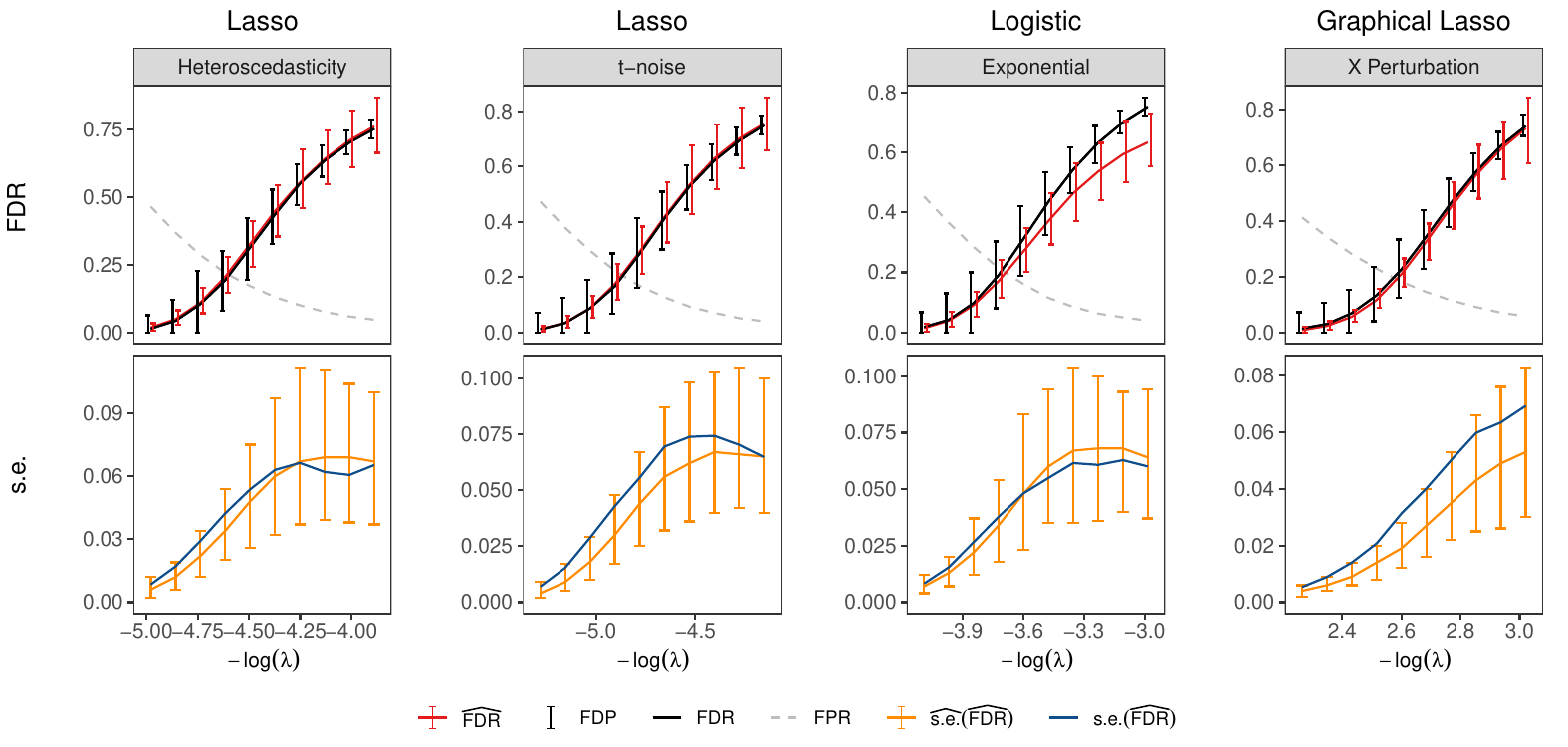}
    \caption{Performance of $\hFDR$ and $\hse$ with model misspecification. The error bars show the $5\%$ to $95\%$ quantiles of the empirical distributions and the solid lines show the sample average.}
    \label{fig:hFDR-robust}
\end{figure}

In Section \ref{sec:real_data}, we have seen $\hFDR$ works well on real-world problems where the model assumptions are never exactly true. Here in Figure~\ref{fig:hFDR-robust}, we test the robustness of $\hFDR$ in a clean simulation study environment. We consider the same settings as previous simulations but with different types of model misspecifications. In the ``Heteroscedasticity'' setting, we introduce heteroscedasticity to the Gaussian linear model. That is, we generate
\[ Y = \sum_{j=1}^d X_j \theta_j + \exp \pth{\sum_{j=1}^d |X_j \theta_j|} \cdot \ep, \quad \ep \sim \cN(0, 1).\]
In the ``t-noise'' setting, we have the noise $\ep$ in the Gaussian linear model follow the t-distribution with a degree of freedom $5$ instead of the standard normal distribution. The ``Exponential'' setting has the variables follow the heavy-tailed exponential distribution $X_j \simiid \textnormal{Exp(1)}$ but are mistaken as Gaussian in the model-X assumption. Finally, In the ``X Perturbation'' setting, we multiply each observed variable by an IID uniformly distributed perturbation $U([0,2])$. As the figure shows, our $\hFDR$ performs well in these scenarios.

\section{Computational details}
\label{sec:compuation}

The major computational task in our method is to calculate conditional expectations
\[
\hFDR^*_j = \EE_{H_j} \br{\frac{\one \set{j \in \cR}}{R} \mid \vct S_j }
\]
for $j = 1, \ldots, d$ with $\psi_j \neq 0$. These calculations can be easily parallelized. In a general setting, $\hFDR^*_j$ can be approximated via Monte Carlo integration: generate $M$ samples of the data $\vct D$ from the known conditional distribution, compute the integrand ${\one\set{j \in \cR}}/{R}$ for each sample, and take the average over them. The worst-case time complexity for computing $\hFDR$ is $O(d \cdot M \cdot T)$, where $T$ is the running time of the selection procedure. We note that calculating the integrand of $\hFDR_j^*$ for each Monte-Carlo sample has the same computation complexity as calculating one placebo statistic in the conditional randomization test (CRT, \citet{candes2018panning}) with the same selection procedure for variable $j$. 
In our implementation, we use $20$ Monte-Carlo samples for computing each $\hFDR^*_j$ and $10$ Bootstrap samples.

While the Monte-Carlo procedure can work with any selection procedure, for commonly used ones like Lasso and FS under the Gaussian linear model, we can develop exact and fast algorithms by leveraging the special structures. In particular, $\hFDR^*_j$ reduces to a one-dimensional integral. We briefly describe the algorithms below and leave details in Appendix~\ref{app:fast_alg}. 

\begin{example}[fast and exact calculation of $\hFDR_j^*$ for Lasso] \label{ex:algo_lasso}
    \citet{lei2019fast} proposed a fast homotopy algorithm that computes the entire Lasso solution path when the sufficient statistic $\mat X^\tran \vct Y$ varies in a given direction in the context of conformal prediction. With simple modifications, we apply their algorithm to obtain the Lasso selection set $\cR$ for all possible values of $\vct X_j^\tran \vct Y$ given $\vct S_j$. This provides a full characterization of the piecewise constant function $\one \set{j \in \cR}/{R}$ and hence yields the exact value of $\hFDR_j^*$.

    The runtime of the homotopy algorithm scales linearly in the number of knots along the homotopy path. At each knot, the algorithm  applies a handful of $d$-dimensional matrix-vector multiplications. As a result, the algorithm is faster for a reasonably large $\lambda$ where $\cR$ varies less over the range of $\vct X_j^\tran \vct Y$ given $\vct S_j$.
\end{example}

\begin{example}[fast and exact calculation of $\hFDR_j^*$ for FS] \label{ex:algo_fs}
    At the $k$-th step, the FS procedure for the Gaussian linear model selects a variable $j_k$ that has the largest $|\vct X_j^{(k)} \cdot \vct Y^{(k)}|$ where $\vct X^{(k)}, \vct Y^{(k)}$ can be obtained by applying a sequence of normalization and orthogonalization steps. When the number of selected variables $k$ is fixed, $\hFDR^*_j$ simplifies into $\PP_{H_j}(j \in \cR \mid S_j) / k$. The numerator can be decomposed into $\sum_{\ell=1}^{k}\PP_{H_j}(j\text{ is selected at the }\ell\text{-th step}\mid S_j)$. We can calculate each term by using an auxiliary FS procedure that has the same computation cost as the original one. Again, each step only involves a handful of $d$-dimensional matrix-vector multiplications. Thus, the algorithm is faster when the number of selections $k$ is small. 
    
    
\end{example}

The running time of evaluating $\hFDR$ (without bootstrap s.e. estimation) at all $10$ different tuning parameter values in the simulation problems in Section~\ref{sec:simulations} is typically several minutes on a single-cored computer. We report the exact running time in Appendix~\ref{app:runtime}.

\section{Summary and discussion}
\label{sec:discussion}

In this work we present a new approach for estimating FDR of any variable selection procedure. Our estimator can be applied in common statistical settings including the Gaussian linear model, the Gaussian graphical model, and the nonparametric model-X framework, requiring no additional statistical assumptions in these settings beyond those required for standard hypothesis tests. Our estimator can be evaluated alongside the cross-validation curve to inform the analyst about the tradeoff between variable selection accuracy and prediction accuracy over the entire regularization path. The bias of the estimator is non-negative in finite samples and typically small. The standard error can be assessed using the specialized bootstrap and it is typically vanishing when there are many variables with only local dependency. Examples with real-world and simulation data demonstrate our estimator works well in both bias and standard error. \texttt{R} package is available online at the Github repository \texttt{https://github.com/yixiangLuo/hFDR}.

While this paper is focused on three settings, our method can be applied insofar as a sufficient statistic $\vct S_j$ exists under the null submodel $H_j$, i.e., the data $\vct D$ has a known distribution conditional on $\vct S_j$ under $H_j$. Consider a common setting in the multiple testing literature where a test statistic $T_j$ is available for $H_j$ and $T_1, \ldots, T_d$ are independent. If $H_j$ are all simple hypotheses (e.g., $H_j: \theta_j = 0$), we can simply set $\vct S_j = (T_1, \ldots, T_{j-1}, T_{j+1}, \ldots, T_d\}$. When the $H_j$ are composite but simple conditional on some auxiliary statistic $V_j$, we can set $\vct S_j = (T_1, \ldots, T_{j-1}, V_j, T_{j+1}, \ldots, T_d\}$. For example, if the researcher has two samples $X_{j1}, \ldots, X_{jn_{j_1}}\stackrel{i.i.d.}{\sim} P$ and $Z_{j1}, \ldots, Z_{jn_{j_2}}\stackrel{i.i.d.}{\sim} Q$ for the hypothesis $H_j: P = Q$, $\vct S_j$ can be always chosen as the unordered set of $\{X_{j1}, \ldots, X_{jn_{j_1}}, Z_{j1}, \ldots, Z_{jn_{j_2}}\}$. If $P = F_{\theta_1}$ and $Q = F_{\theta_2}$ for some exponential family $F$ with natural parameter $\theta$ and sufficient statistic $T(\cdot)$, then we can choose $\vct S_j = \sum_{i=1}^{j_1}T(X_{ji}) + \sum_{k=1}^{j_2}T(Z_{jk})$. Therefore, our method can also be applied to a large class of multiple testing problems with independent test statistics.

\section*{Reproducibility}
Our FDR estimator is implemented in an \texttt{R} package available online at the Github repository \texttt{github.com/yixiangLuo/hFDR}.
The \texttt{R} code to reproduce all simulations and figures is available at \texttt{github.com/yixiangLuo/hFDR\_expr}.

\bibliographystyle{plainnat}
\bibliography{biblio.bib}

\appendix

\section{Additional details in experiments and methodologies}
\subsection{Simulation settings for the illustrative examples} \label{app:intro_sim}

In Figure \ref{fig:illustration-indpt}, the explanatory variables are independent Gaussian -- $X_j \simiid \cN(0, 1)$. We have $d = 200$ variables, $n = 600$ observations, and $d_1 = 20$ variables are signal variables with random signal strength $\theta_j = 0.25 \cdot (1 + \textnormal{Exp}_j(1))/2$, where $\textnormal{Exp}_j(1)$ with $j \in \cH_0^\setcomp$ are independent exponentially distributed random variable with rate $1$. The response $Y$ is generated by the Gaussian linear model with $\sigma = 1$.

In Figure \ref{fig:illustration-corr}, the explanatory variables $X_j$ ($j = 1, \ldots, d)$ form an AR(1) Gaussian random process with $\text{cov}(X_j, X_{j+1}) = 0.8$. The other settings are the same as above.

\subsection{Cross-validation for constructing $\widehat{\cH_0}$ in $\hse$} \label{app:cv_bs}

In this section, we describe the cross-validation algorithm we use to construct the set $\widehat{\cH_0}$ in computing $\hse(\hFDR)$. Formally, let $\mat D_{\text{trn}, k}$ and $\mat D_{\text{val}, k}$ be the $k$th training and validation data set in a $K$-folds cross-validation, respectively. Throughout the section, we write $\vct {\hat\theta}^{(*)}(\widehat{\cH_0}(\lambda); \mat D_{\text{trn}, k})$ with $\widehat{\cH_0}(\lambda) = \cR^\setcomp(\lambda; \mat D_{\text{trn}, k})$ as $\vct {\hat\theta}^{(*)}(\lambda; \mat D_{\text{trn}, k})$ for simplicity of notation. We have
\[
\vct {\hat\theta}^{(*)}(\lambda; \mat D_{\text{trn}, k}) = \underset{\theta_j = 0,\, j \not\in \cR(\lambda; \mat D_{\text{trn}, k})}{\argmax}\; P_{\vct \theta}(\mat D_{\text{trn}, k}).
\]
Then we find the best $\lambda$ that achieves the largest averaged likelihood on the validation sets 
\[
\lambda_{\text{cv}} = \underset{\lambda}{\argmax}\; \frac 1K \sum_{k=1}^K P_{\vct {\hat\theta}^{(*)}(\lambda; \mat D_{\text{trn}, k})}(\mat D_{\text{val}, k})
\]
and define
\[
\widehat{\cH_0} = \set{j: \; j \not\in \cR(\lambda_{\text{cv}}; \mat D)}.
\]

\subsection{Bias contribution from signal variables}
\label{app:signal_bias}

\begin{thm} \label{thm:strong_signal_bias}
    In the Gaussian linear model \eqref{eq:linear-model}, consider the canonical version \eqref{eq:hFDR_p_def} of $\hFDR_j$:
    \[
    \hFDR_j = \EE_{H_j} \br{\frac{\one \set{j \in \cR}}{R} \mid \vct S_j } \cdot \frac{\one\{p_j > \zeta\}}{1-\zeta},
    \]
    with $p_j$ being the two-sided t-test p-value for $H_j$. Define $T_j$ as the t-test statistic with noncentrality 
    \[
    \mu_j = \frac{\theta_j}{\sigma\sqrt{(\mat X^\tran \mat X)^{-1})_{jj}}}.
    \]
    Let $t_{1-\zeta/2}$ denote the $1-\zeta/2$ upper quantile of the central t-distribution with degree of freedom $n-d$. Then the contribution to the bias of $\hFDR$ from variable $j$, when the signal strength $|\mu_j| > t_{1-\zeta/2}$, has upper bound
    \[
    \EE[\hFDR_j] \le 
    \frac{1}{(1-\zeta)} \, \br{2 \exp\!\left(-\frac{\bigl(|\mu_j|-\sqrt{2}\,t_{1-\zeta/2}\bigr)_+^2}{2}\right)
    \;+\;
    \exp\!\left(-\frac{n-d}{16}\right)}.
    \]
\end{thm}
\begin{proof}
Note
\[
    T_j \;\stackrel{d}{=}\; \frac{Z+\mu_j}{\sqrt{U/(n-d)}},
    \qquad \textnormal{where } Z\sim \cN(0,1),\ U\sim\chi^2_{n-d},\ Z\perp U.
\]
Let $A:=\{U\le 2 (n-d)\}$. Split
\[
\PP(|T|<t_{1-\zeta/2})\le \PP(|T|<t_{1-\zeta/2},\ A)+\PP(A^c).
\]
On $A$ we have $\sqrt{U/(n-d)}\le \sqrt{2}$, hence
\[
\{|T|<t_{1-\zeta/2}\}\cap A
\subseteq
\{|Z+\mu_j|<\sqrt{2}\,t_{1-\zeta/2}\}.
\]
Therefore
\begin{equation}\label{eq:main-split}
\PP(|T|<t_{1-\zeta/2})\le \PP(|Z+\mu_j|<\sqrt{2}\,t_{1-\zeta/2})+\PP(U>2(n-d)).
\end{equation}

For the Gaussian term, with $a:=\sqrt{2}\,t_{1-\zeta/2}$,
\[
\PP(|Z+\mu_j|<a)
=
\PP(-a-\mu_j<Z<a-\mu_j)
\le
\PP(Z<a-\mu_j)
\le
\exp\!\left(-\frac{(\mu_j-a)_+^2}{2}\right),
\]
and applying the same bound to $-\mu_j$ gives
\[
\PP(|Z+\mu_j|<a)\le
2\exp\!\left(-\frac{(|\mu_j|-a)_+^2}{2}\right).
\]

For the chi-square term, use a Chernoff bound,
\[
\PP(U>2(n-d))\le \exp\!\left(-\frac{n-d}{16}\right).
\]
Therefore
\[
\PP(p_j>\zeta)
=
\PP(|T_j|<t_{1-\zeta/2})
\;\le\;
2\exp\!\left(-\frac{\bigl(|\mu_j|-\sqrt{2}\,t_{1-\zeta/2}\bigr)_+^2}{2}\right)
\;+\;
\exp\!\left(-\frac{n-d}{16}\right).
\]
Note
\[
\EE_{H_j} \br{\frac{\one \set{j \in \cR}}{R} \mid \vct S_j } \leq 1 \qquad a.s.
\]
Combining all the components together, the inequality is proved.
\end{proof}

Theorem~\ref{thm:strong_signal_bias} states that the contribution to the bias of $\hFDR$ from strong signal variables decays to $0$ exponentially with the signal strength. In the proof, the bound
\[
\EE_{H_j} \br{\frac{\one \set{j \in \cR}}{R} \mid \vct S_j } \leq 1 \qquad a.s.
\]
is rather loose. In practice, we expect $R$ to be not much smaller than the number of strong signals for a reasonable variable selection procedure. Hence the bound established in Theorem~\ref{thm:strong_signal_bias} is also a practical estimate for the bias coming from all the strong signals.

\subsection{Sparse model estimation with graphical Lasso} \label{app:glasso_mle}

We start with the unconstrained MLE
\[
\mat {\tilde\Theta} := \underset{\mat \Theta}{\argmax} \; \log P_{\mat \Theta}(\mat D) = \log \det \mat \Theta - \textnormal{trace}(\widehat{\mat \Sigma} \mat \Theta)
\]
and then enforce the entries in $\widehat{\cH_0}$ to be zero
\[
\mat {\hat\Theta}_{jk} \leftarrow 
\begin{cases}
    0, & \text{if } (j,k) \in \widehat{\cH_0} \text{ or } (k,j) \in \widehat{\cH_0}\\
    \mat {\tilde\Theta}_{jk},              & \text{otherwise}
\end{cases}.
\]
It is possible that the resulting $\mat {\hat\Theta}$ is not positive definite. If that is the case, we increase the eigenvalues of $\mat {\hat\Theta}$ by the same amount until it is positive definite.

\section{Example where $\hFDR$ has non-vanishing variance} \label{app:asymp_noisy}

We describe an example where $\hFDR$ has a non-vanishing standard error as $d$ increases.

Consider a $d$-dimensional Gaussian vector $\vct Z \sim \cN(\vct \theta, \mat \Sigma)$, where $\mat \Sigma \in \RR^{d \times d}$ has diagonal $1$ and off-diagonal $\rho = 0.8$. So all $Z_j$ variables are positively equicorrelated: $\textnormal{cov}(Z_j, Z_k) = \rho$ for $j \neq k$. To select the signal variables with $\theta_j \neq 0$, we do a simple one-sided thresholding, 
\[
\cR = \set{j: \; Z_j \geq 0.5 \cdot \sqrt{2 \log d}},
\]
and set $\psi_j = \one \set{p_j > 0.5}$, where $p_j$ is the one-sided marginal z-test p-value. Let $\vct S_j = \vct Z_{-j} - \mat \Sigma_{-j,j} \cdot Z_j$ and $\vct \theta = \vct 0$. Figure \ref{fig:std_conterexample} shows the standard deviation of $\hFDR$ does not converge to $0$ as the number of variables $d$ approach infinity. 
\begin{figure}[!tb]
    \centering
    \includegraphics[width = 0.6\linewidth]{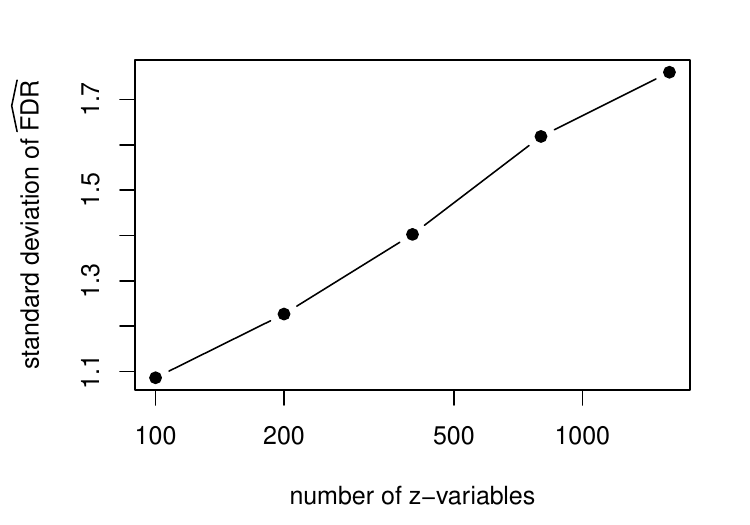}
    \caption{For variable selection on strongly positively equicorrelated $Z$-variables, standard deviation of $\hFDR$ does not converge to $0$ as the number of variables approach infinity.}
    \label{fig:std_conterexample}
\end{figure}
The noisy behavior of $\hFDR$ is a consequence of the strong equicorrelation between the $Z$ variables. Due to the correlation, perturbing $Z_j$ strongly affects all other $Z$ variables and hence the entire selection set. As a result, $\hFDR_j$ are highly positively correlated. The backbone idea for the concentration laws, which builds upon the cancellation of uncorrelated/independent noise, does not apply.

\section{Fast algorithms for computing $\hFDR^*_j$} \label{app:fast_alg}

In this section, we describe the fast and exact algorithms for computing $\hFDR^*_j$ for the Lasso (Example~\ref{ex:algo_lasso}) and FS (Example~\ref{ex:algo_fs}) procedures in the Gaussian linear model.
Our goal is to compute
\[
\hFDR^*_j
= \EE_{H_j} \br{\frac{\one \set{j \in \cR}}{R} \mid \vct S_j }.
\]
To avoid confusion, we use $j$ to refer to the particular index for which we want to compute $\hFDR^*_j$ and use $i$ to refer to a generic variable index

\subsection{fast and exact calculation for Lasso}
The idea is inspired by \citet{lei2019fast}. Recall the Lasso coefficients are
\[
\vct \ltheta(\lambda) = \argmin_{\vct t} \; \frac{1}{2} \, \norm{\vct Y - \mat X \vct t}_2^2 + \lambda \norm{\vct t}_1.
\]
The first-order optimal condition is
\[
- \mat X^\tran (\vct Y - \mat X \cdot \vct \ltheta) + \vct \nu = \vct 0,
\]
where $\vct \nu \in \RR^d$ is the dual variable satisfying $\nu_i = \sgn(\ltheta_i) \cdot \lambda$ for $\ltheta_i \neq 0$ and $\nu_i \in [-\lambda, \lambda]$ otherwise. Suppose we have computed the selection set $\cR$. Then by the optimal condition, $\vct \ltheta$ has a explicit expression
\begin{equation} \label{eq:lasso_path}
    \widehat{\vct \theta}_\cR^{\textnormal{lasso}} = \pth{\mat X_\cR^\tran \mat X_\cR}^{-1} \pth{\mat X_\cR^\tran \vct Y - \sgn(\widehat{\vct \theta}_\cR^{\textnormal{lasso}}) \cdot \lambda}, \quad
    \widehat{\vct \theta}_{-\cR}^{\textnormal{lasso}} = \vct 0.
\end{equation}
Now condition on $\vct S_j$ and let the value of $\vct X_j^\tran \vct Y$ vary. We can compute $\vct \ltheta$ by \eqref{eq:lasso_path} as long as $\cR$ does not change. The change happens when the computed $\vct \ltheta$ becomes critical:
\begin{enumerate}
    \item an already selected variable $i \in \cR$ gets kicked out of the selection set if and only if its computed coefficient $\ltheta_i$ become zero;
    \item an unselected variable $i \not\in \cR$ gets selected if and only if the absolute value of its dual variable $\nu_i = \mat X_i^\tran (\vct Y - \mat X \vct \ltheta)$ touches $\lambda$.
\end{enumerate}
Readers may refer to \citet{lei2019fast} for a more detailed discussion on these two conditions.

Combine all the above. We start from a computed $\cR$, let the value of $\vct X_j^\tran \vct Y$ vary (say increase), and calculate $\vct \ltheta$ by \eqref{eq:lasso_path}. Once either of the two conditions is satisfied, we update $\cR$ and move forward. We perform a low-rank update on $\pth{\mat X_\cR^\tran \mat X_\cR}^{-1}$ in \eqref{eq:lasso_path} by the Sherman–Morrison–Woodbury formula. This algorithm reveals the exact Lasso solution path. To compute $\hFDR^*_j$, we only care about the points of change of $\cR$, which can be found by solving simple linear equations derived from the two conditions. The calculation of $\hFDR^*_j$ then becomes a one-dimensional integral of the known piecewise constant function $\one \set{j \in \cR}/{R}$ on a known probability measure.

\subsection{fast and exact calculation for FS}
In the Gaussian linear model, FS first centers and normalizes the observed variables
\[
\vct Y^{(0)} \leftarrow \vct Y - \frac 1n (\vct Y \cdot \vct 1) \cdot \vct 1, \quad
\vct X_j^{(0)} \leftarrow \vct X_j - \frac 1n (\vct X_j \cdot \vct 1) \cdot \vct 1, \quad
\vct X_j^{(0)} \leftarrow \vct X_j^{(0)} / \norm{\vct X_j^{(0)}}_2, \quad \forall j.
\]
Then do the following operations in each step.
\begin{enumerate}
    \item Selection: $J^{(\lambda)} = {\argmax}_{j \not\in \cR(\lambda-1)} \; |\vct X_j^{(\lambda-1)} \cdot \vct Y^{(\lambda-1)}|$;
    \item Gram-Schmidt orthogonalization: $\vct Y^{(\lambda)} \leftarrow \vct Y^{(\lambda-1)} - (\vct Y^{(\lambda-1)} \cdot \vct X_{j^{(\lambda)}}^{(\lambda-1)}) \cdot \vct X_{j^{(\lambda)}}^{(\lambda-1)}$ and $\vct X_j^{(\lambda)} \leftarrow \vct X_j^{(\lambda-1)} - (\vct X_j^{(\lambda-1)} \cdot \vct X_{j^{(\lambda)}}^{(\lambda-1)}) \cdot \vct X_{j^{(\lambda)}}^{(\lambda-1)}$ for all $j \not\in \cR(\lambda)$.
    \item renormalization: $\vct X_j^{(\lambda)} \leftarrow \vct X_j^{(\lambda)} / \norm{\vct X_j^{(\lambda)}}_2$ for all $j \not\in \cR(\lambda)$.
\end{enumerate}    
Since we select one variable at each step, the number of selections $R$ is known. Therefore, to compute $\hFDR^*_j$, it suffices to calculate $\PP_{H_j}(j \in \cR \mid S_j)$.

Now condition on $\vct S_j$ and let the value of $\vct X_j \cdot \vct Y$ vary. Consider an auxiliary FS procedure where we enforce that variable $X_j$ is not to be selected:
\[
J^{(\lambda)} = {\argmax}_{l \not\in \cR(\lambda-1) \cup \set{j}} \; |\vct X_l^{(\lambda-1)} \cdot \vct Y^{(\lambda-1)}|.
\]
This auxiliary procedure is determined by the value of $\vct S_j$ without looking at $\vct X_j \cdot \vct Y$. At each step, if
\[
|\vct X_j^{(\lambda-1)} \cdot \vct Y^{(\lambda-1)}| > |\vct X_{J^{(\lambda)}}^{(\lambda-1)} \cdot \vct Y^{(\lambda-1)}|.
\]
then the variable with index $j$, instead of $J^{(\lambda)}$, will be selected. By this inequality and certain linear transforms, we obtain the set of values of $\vct X_j \cdot \vct Y$ such that variable $X_j$ is selected at each step. Taking a union of them and using the known distribution of $\vct X_j \cdot \vct Y$, we can compute $\PP_{H_j}(j \in \cR \mid S_j)$. The algorithm requires one auxiliary FS process (shared among many different $\hFDR^*_j$) and produces a fast and exact calculation. 
Similar ideas apply to the model-X settings.

\section{Variance of FDR estimation} \label{app:proofs}

In this section, we assume Assumption~\ref{ass:block-orth} and prove the propositions in Section~\ref{sec:se_bound}. Throughout the section, we use $\cR(\lambda)$ to denote the selection set of Lasso. Define $\cR_b = \cR \cap (b)$ as the selected variables in block $(b)$ and $R_b = |\cR_b|$ as its cardinality. Throughout the section, we consider any pair of $(j, b)$ such that $j \in (b)$. We start with showing two technical lemmas.

\begin{lemma} \label{lem:block-suff}
$\cR_b$ depends on $\vct Y$ only through $\mat X_{(b)}^\tran \vct Y$. The p-value $p_j$ depends on $\vct Y$ only through $(\mat X_{(b)}^\tran \vct Y, \RSS)$, where $\RSS = \norm{\vct Y}^2 - \norm{\mat X (\mat X^\tran \mat X)^{-1} \mat X^\tran \vct Y}^2$.
\end{lemma}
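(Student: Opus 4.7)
The plan is to prove the two claims separately, with both reductions resting on the fact that Assumption~\ref{ass:block-orth} makes the Gram matrix $\mat X^\tran \mat X$ block-diagonal, so the usual formulas for the Lasso solution and for the OLS t-statistic decouple across blocks.

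For the first claim, I would start by expanding the Lasso objective and using $\mat X_{B_k}^\tran \mat X_{B_\ell} = \mat 0$ for $k \neq \ell$ to eliminate the cross terms in the quadratic, obtaining
\[
\tfrac{1}{2}\|\vct Y - \mat X\vct\theta\|_2^2 + \lambda\|\vct\theta\|_1 \;=\; \tfrac{1}{2}\|\vct Y\|^2 + \sum_{\ell=1}^K \Big[-(\mat X_{B_\ell}^\tran\vct Y)^\tran\vct\theta_{B_\ell} + \tfrac{1}{2}\vct\theta_{B_\ell}^\tran\mat X_{B_\ell}^\tran\mat X_{B_\ell}\vct\theta_{B_\ell} + \lambda\|\vct\theta_{B_\ell}\|_1\Big].
\]
The right-hand side is a sum of independent functions of the block sub-vectors $\vct\theta_{B_1},\ldots,\vct\theta_{B_K}$, so the Lasso optimization separates into $K$ independent block-wise problems. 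The $k$th block's subproblem sees $\vct Y$ only via $\mat X_{B_k}^\tran\vct Y$ (the matrix $\mat X_{B_k}^\tran\mat X_{B_k}$ and the penalty $\lambda$ being fixed), hence its minimizer $\widehat{\vct\theta}^{\textnormal{Lasso}}_{B_k}$, and therefore its support $\cR_k$, is a function of $\mat X_{B_k}^\tran\vct Y$ alone.

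For the second claim, I would write out the two-sided t-statistic for $H_j:\theta_j=0$ as $t_j = \widehat\theta_j^{\textnormal{OLS}}/\bigl(\hat\sigma\sqrt{[(\mat X^\tran\mat X)^{-1}]_{jj}}\bigr)$ with $\hat\sigma^2 = \RSS/(n-d)$. Block-orthogonality makes $\mat X^\tran\mat X$ block-diagonal with blocks $\mat X_{B_\ell}^\tran\mat X_{B_\ell}$, so the inverse is block-diagonal as well. Consequently $\widehat{\vct\theta}^{\textnormal{OLS}}_{B_k} = (\mat X_{B_k}^\tran\mat X_{B_k})^{-1}\mat X_{B_k}^\tran\vct Y$ depends on $\vct Y$ only through $\mat X_{B_k}^\tran\vct Y$, and the diagonal entry $[(\mat X^\tran\mat X)^{-1}]_{jj}$ is a constant of the design. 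Combined with $\hat\sigma$ being a function of $\RSS$ alone, and with $p_j$ being a (deterministic) monotone transformation of $|t_j|$, this shows $p_j$ is a function of $(\mat X_{B_k}^\tran\vct Y,\RSS)$.

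I do not anticipate any real obstacles: the whole argument is essentially the two algebraic decouplings displayed above, and the only thing to be careful about is that the block quantities $\mat X_{B_k}^\tran\mat X_{B_k}$ and $[(\mat X^\tran\mat X)^{-1}]_{jj}$ are fixed constants of the design and do not introduce any hidden dependence on $\vct Y$.
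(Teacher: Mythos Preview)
Your proposal is correct and follows essentially the same approach as the paper: both arguments use block-orthogonality to separate the Lasso objective into $K$ independent block subproblems (the paper does this via the projections $\vct Y^{(k)}$ onto the column space of each $\mat X_{B_k}$, whereas you expand the quadratic directly, but the content is the same), and both handle $p_j$ by noting that the block-diagonal Gram matrix makes the OLS numerator depend only on $\mat X_{B_k}^\tran\vct Y$ while the denominator depends only on $\RSS$.
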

\begin{proof}
Let $\vct Y^{(b)}$ be the projection of $\vct Y$ onto the subspace spanned by the columns of $\mat X_{(b)}$, and denote the projection of $\vct Y$ onto the columns of $\mat X$ as $\widehat{\vct Y} = \mat X (\mat X^\tran \mat X)^{-1} \mat X^\tran \vct Y$. Then
\[
\widehat{\vct Y} = \sum_{b=1}^B \vct Y^{(b)}
\]
and the $\vct Y^{(b)}$s are orthogonal to each other under Assumption~\ref{ass:block-orth}. We rewrite the Lasso coefficients
\begin{align*}
    \vct \ltheta(\lambda) 
    &= \argmin_{\vct t} \; \frac{1}{2} \, \norm{\vct Y - \mat X \vct t}_2^2 + \lambda \norm{\vct t}_1 \\
    &= \argmin_{\vct t} \; \frac{1}{2} \, \norm{\widehat{\vct Y} - \mat X \vct t}_2^2 + \lambda \norm{\vct t}_1 \\
    &= \argmin_{\vct t} \; \frac{1}{2} \, \norm{\sum_{b=1}^B \vct Y^{(b)} - \mat X_{(b)} \vct t_{(b)}}_2^2 + \lambda \sum_{b=1}^B \norm{\vct t_{(b)}}_1 \\
    &= \argmin_{\vct t} \; \sum_{b=1}^B \frac{1}{2} \, \norm{\vct Y^{(b)} - \mat X_{(b)} \vct t_{(b)}}_2^2 + \lambda \norm{\vct t_{(b)}}_1.
\end{align*}
We see solving for $\ltheta$ is equivalent to solving for Lasso coefficients for $B$ subproblems of regressing $\vct Y$ on $\mat X_{(b)}$. So $\ltheta_{(b)}(\lambda)$ depends on $\vct Y$ only through $\mat X_{(b)}^\tran \vct Y^{(b)}$, which is equal to $\mat X_{(b)}^\tran \vct Y$. Therefore, $\cR_b = \set{j \in (b):\; \ltheta_j(\lambda) \neq 0}$ depends on $\vct Y$ only through $\mat X_{(b)}^\tran \vct Y$.

Recall the p-value $p_j$ is a function of the t-statistic
\[
\frac{\pth{(\mat X^\tran \mat X)^{-1}}_j \cdot \mat X^\tran \vct Y}{\sqrt{\pth{(\mat X^\tran \mat X)^{-1}}_{jj} \cdot \RSS / (n-d)}}.
\]
The denominator depends on $\vct Y$ only through $\RSS$. For the numerator, note $\mat X^\tran \mat X$ is block-diagonal by the orthogonality in Assumption~\ref{ass:block-orth}. We have
\[
\pth{(\mat X^\tran \mat X)^{-1}}_j \cdot \mat X^\tran \vct Y = \pth{(\mat X_{(b)}^\tran \mat X_{(b)})^{-1}}_{i(j,b)} \cdot \mat X_{(b)}^\tran \vct Y,
\]
where $i(j,b)$ is the index of variable $X_j$ in block $(b)$. Therefore $p_j$ depends on $\vct Y$ only through $(\mat X_{(b)}^\tran \vct Y, \RSS)$.
\end{proof}

\begin{lemma}[mean value] \label{lem:mean-value}
Suppose $f(x) \in [a, b] \subseteq \RR^+$ for $x \in D$ and $1/f$ is integrable with respect to the measure $\mu$. Then there exists some $c \in [a, b]$ such that
\[ \int_{x \in D} 1/f(x) \, \mu(d x) = 1/c \cdot \mu(D). \]
\end{lemma}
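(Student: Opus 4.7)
The plan is to invoke the mean value theorem for integrals applied to the bounded function $1/f$. Since $f$ takes values in $[a,b] \subseteq \RR^+$, the reciprocal satisfies $1/f(x) \in [1/b, 1/a]$ for every $x \in D$. Integrating this two-sided bound against $\mu$ yields
\[
\frac{\mu(D)}{b} \;\leq\; \int_D \frac{1}{f(x)} \, \mu(dx) \;\leq\; \frac{\mu(D)}{a}.
\]

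First I would handle the degenerate case $\mu(D) = 0$: then the integral is zero and any $c \in [a,b]$ works. Otherwise, I would divide through by $\mu(D) > 0$ to conclude that
\[
A := \frac{1}{\mu(D)} \int_D \frac{1}{f(x)} \, \mu(dx) \;\in\; \left[\tfrac{1}{b},\,\tfrac{1}{a}\right].
\]
Since the map $t \mapsto 1/t$ is continuous on $[a,b]$ and takes the endpoint values $1/a$ and $1/b$, the intermediate value theorem guarantees some $c \in [a,b]$ with $1/c = A$, which is exactly the claimed identity after multiplying by $\mu(D)$.

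There is essentially no obstacle here; the statement is really just a packaged form of the integral mean value theorem for the bounded integrand $1/f$, combined with the continuity of the reciprocal function on the positive interval $[a,b]$. The only mild care needed is the degenerate case $\mu(D) = 0$ and ensuring that $a > 0$ (which is given by $[a,b] \subseteq \RR^+$) so that $1/f$ is well-defined and bounded.
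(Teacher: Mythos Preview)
Your proof is correct and follows essentially the same approach as the paper: both arguments bound the average of $1/f$ between $1/b$ and $1/a$ and then take $c$ to be its reciprocal. The paper is slightly more direct---it simply defines $c = \mu(D)\big/\int_D 1/f\,d\mu$ and observes $c\in[a,b]$, whereas your appeal to the intermediate value theorem is unnecessary since $t\mapsto 1/t$ is a bijection from $[a,b]$ onto $[1/b,1/a]$; your explicit handling of the $\mu(D)=0$ case is a small improvement over the paper's terse argument.
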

\begin{proof}
Define 
\[
c = \frac{\mu(D)}{\int_{x \in D} 1/f(x) \, \mu(d x)}.
\]
Since $f(x) \in [a, b] \subseteq \RR^+$, we have $c \in [a, b]$ satisfies the conditions.
\end{proof}

Now we are ready to show the propositions in Section~\ref{sec:se_bound}.

\begin{proof}[Proof of Proposition \ref{prop:signals-lasso}]
Recall in the Gaussian linear model, we condition on the explanatory variables $\mat X$ and treat them as fixed. By Lemma~\ref{lem:block-suff} and the block-orthogonality, $R_b$ are independent for $b=1, \ldots, B$. Our goal is to show $R = \sum_b R_b$ is stochastically larger than the sum of certain independent Bernoulli random variables. Then the proposition follows by the concentration laws.

\emph{Part 1.} By Lemma~\ref{lem:block-suff} and the first order condition, $R_b \geq 1$ if and only if there exist some $j \in (b)$ such that $|\vct X_j^\tran \vct Y| \geq \lambda$. We have
\[
\PP(R_b \geq 1) =
\PP(\bigcup_{j \in (b)} \set{|\vct X_j^\tran \vct Y| \geq \lambda})
\geq \PP(|\vct X_j^\tran \vct Y| \geq \lambda)
\geq 2 (1 - \Phi(\lambda / \sigma)).
\]
The last inequality is because that $\vct X_j^\tran \vct Y$ is a Gaussian random variable with variance $\sigma^2$ (note $\mat X$ is normalized). Direct calculation shows the probability achieves minimum when $\vct X_j^\tran \vct Y$ has mean zero.

As a consequence, $R = \sum_{b=1}^B R_b$ is stochastically larger than $\sum_{b=1}^B A_b$, where $A_b$ are $B$ independent Bernoulli random variables with success probability $\delta$. By the multiplicative Chernoff bound, we have
\[
\PP(R \leq B q)
\;\leq\; \PP \pth{\sum_{b=1}^B A_b \leq \pth{1 - \frac{\delta-q}{\delta}} B \delta}
\;\leq\; \exp \pth{- \frac{(\delta - q)^2}{2 \delta} \cdot B}.
\]

\emph{Part 2.} To avoid triviality, assume $B \xi$ is an integer. Let $b$ denote any one of the $B \xi$ blocks that contains a variable with $|\vct X_j \cdot \vct \mu| \geq \lambda$. Note $\vct X_j^\tran \vct Y$ is Gaussian with mean $\vct X_j \cdot \vct \mu$. By a similar reasoning as in Part 1,
\[
\PP(R_b \geq 1) \;\geq\; \PP(|\vct X_j^\tran \vct Y| \geq \lambda) \;\geq\; \frac 12.
\]
As a consequence, $R = \sum_{b=1}^B R_b$ is stochastically larger than $\sum_{b=1}^{B\xi} A_b$, where $A_b$ are $B\xi$ independent Bernoulli random variables with success probability $1/2$. By the multiplicative Chernoff bound, we have
\[
\PP(R \leq B q)
\;\leq\; \PP \pth{\sum_{b=1}^{B\xi} A_b \leq \pth{1 - \frac{\xi-2q}{\xi}} \frac{B \xi}{2}}
\;\leq\; \exp \pth{- \frac{(\xi - 2q)^2}{4 \xi} \cdot B}.
\]
\end{proof}

\begin{proof}[Proof of Theorem \ref{thm:variance}]

Notice in \eqref{eq:lasso}, if we scale both $\vct Y$ and $\lambda$ by $1/\sigma$, the optimal solution $\vct \ltheta$ will be scaled by $1/\sigma$ as well, but the selected variables do not change. Hence in the proof, we assume $\sigma^2 = 1$ without loss of generality. Dependence on $\sigma^2$ can be restored by rescaling $\vct Y$ and $\lambda$.

Let $\mat X_{(b)}$ be the columns of $\mat X$ whose indices are in $(b)$ and $\mat X_{-(b)}$ be its complement. Let $\mat X_{(b); -j}$ be the same matrix as $\mat X_{(b)}$ but with $\vct X_j$ excluded.

Under Assumption \ref{ass:block-orth},
\begin{align*}
    \hFDR^*_j &= \EE_{H_j} \br{\frac{\one \set{j \in \cR}}{|\cR \cup \set j|} \mid \vct S_j }  
    = \EE_{H_j} \br{\frac{\one \set{j \in \cR_b}}{|\cR_b \cup \set j| + \sum_{l \neq b} R_l} \mid \mat X_{(b); -j}^\tran \vct Y, \mat X_{-(b)}^\tran \vct Y, r_j^2 },
\end{align*}
where $r_j^2 = \RSS + u_j^2$ and $u_j$ is the norm of the projection of $\vct Y$ on to the direction of $(\mat I - \mat X_{-j} (\mat X_{-j}^\tran \mat X_{-j})^{-1} \mat X_{-j}^\tran) \vct X_j$. By Lemma \ref{lem:block-suff}, $\sum_{l \neq b} R_l$ depends on $\vct Y$ only through $\mat X_{-(b)}^\tran \vct Y$ and $\cR_b$ depends on $\vct Y$ only through $\mat X_{(b)}^\tran \vct Y$. 
The conditional expectation $\hFDR^*_j$ is a one-dimensional integration over $u_j$.
To avoid confusion, we use $u_j$ to denote the value corresponding to the observed $\vct Y$ and $v_j$ to denote a placeholder for $u_j$ in the integral. We write
\[ \hFDR^*_j = \int_{\set{v_j:\; j \in \cR_b}} \frac{1}{f(v_j ; \mat X_{(b); -j}^\tran \vct Y, \mat X_{-(b)}^\tran \vct Y)} \, \mu_j(d v_j \mid \vct S_j), \]
where
\[
f(v_j; \mat X_{(b); -j}^\tran \vct Y, \mat X_{-(b)}^\tran \vct Y) = |\cR_b \cup \set j| + \sum_{l \neq b} R_l
\]
and
\[
\mu_j(A \mid \vct S_j) = \PP_{H_j} (v_j \in A \mid \mat X_{(b); -j}^\tran \vct Y, \mat X_{-(b)}^\tran \vct Y, r_j^2) = \PP_{H_j} (v_j \in A \mid \mat X_{(b); -j}^\tran \vct Y, r_j^2).
\]
We suppress $\vct S_j$ and write $\mu_j(A)$ for simplicity of notation. Note the measure $\mu_j$ depends on $\vct Y$ only through $(\mat X_{(b); -j}^\tran \vct Y, r_j^2)$.
Note 
\[ f(v_j) \in \br{1 + \sum_{l \neq b} R_l, \; m + \sum_{l \neq b} R_l} \subseteq \br{1 \mam \left(-m +  \sum_{b=1}^B R_b\right), \; m+ \sum_{b=1}^B R_b}. \]
By Lemma \ref{lem:mean-value}, for some $\gamma_j \in [-m, m]$, we have
\[\hFDR^*_j = \frac{\mu_j(j \in \cR_b)}{1\vee (R + \gamma_j)}.
\]
Therefore
\[
\abs{\hFDR^*_j - \frac{\mu_j(j \in \cR_b)}{1 \mam R}}
= \abs{\frac{\mu_j(j \in \cR_b)}{1\mam (R + \gamma_j)} - \frac{\mu_j(j \in \cR_b)}{1 \mam R}}
\leq \frac{m}{1 \mam R^2} \cdot \frac{1 \mam R}{1\mam (R - m)}
\leq \frac{4 m^2}{1 \mam R^2}.
\]

For notational convenience, we suppress taking maximum with $1$ for $R$. Throughout the section, we have $R$ represents $1 \mam R$ and $\sum_b R_b$ represents $1 \mam \sum_b R_b$.
Then we can write
\[
\hFDR^*_j = \frac{\mu_j(j \in \cR_b)}{\sum_b R_b} + O \pth{\frac{m^2}{R^2}}.
\]
Define
\[ T_b = \sum_{j \in (b)} \mu_j(j \in \cR_b) \cdot \frac{\psi_j(p_j)}{\EE [\psi_j(p_j) \mid \vct S_j]}. \]
We have
\[ \sum_{j \in (b)} \hFDR_j = \frac{T_b}{\sum_b R_b} + O \pth{\frac{c_\phi m^3}{R^2}} \]
and
\[ \hFDR = \sum_{b=1}^B \sum_{j \in (b)} \hFDR_j = \frac{\sum_b T_b}{\sum_b R_b} + O \pth{\frac{B c_\phi m^3}{R^2}} \]
Note we can compute $(\mat X_{(b); -j}^\tran \vct Y, r_j^2)$ from $(\mat X_{(b)}^\tran \vct Y, \RSS)$ for any $j \in (b)$. Moreover, by Lemma \ref{lem:block-suff}, for any $j \in (b)$, $p_j$ depends on $Y$ only through $(\mat X_{(b)}^\tran \vct Y, \RSS)$. So $T_b$ depends on $Y$ only through $(\mat X_{(b)}^\tran \vct Y, \RSS)$. Therefore, condition on $\RSS$, both $\set{T_b:\; b \in [B]}$ and $\set{R_b:\; b \in [B]}$ are sets of conditional independent random variables respectively. We denote $T = \sum_b T_b$ for simplicity.

Next we prove the theorem in three steps.

\emph{Step 1.} $\Var\pth{{\sum_b T_b}/{\sum_b R_b} \mid \RSS} \le C(c_\phi^2 m^4 / B q^4 + c_\phi^2 m^2 B^2 \rho)$ almost surely for some $C > 0$ that does not depend on $\RSS$.

For simplicity of notation, we denote $\Var(\cdot | \cdot, \RSS)$ as $\Var_\RSS(\cdot | \cdot)$ and $\EE(\cdot | \cdot, \RSS)$ as $\EE_\RSS(\cdot | \cdot)$. Let $q$ be the number in Proposition \ref{prop:signals-lasso}. We have
\begin{align*}
    \Var_\RSS \br{\frac{T}{R}}
    =& \frac 12 \, \EE_\RSS \pth{\frac{T}{R} - \frac{T'}{R'}}^2 \\
    =& \frac 12 \, \EE_\RSS \br{ \pth{\frac{T}{R} - \frac{T'}{R'}}^2 \mid R,R' > B q } \cdot \PP_\RSS (R, R' > B q) \\
    &+ \frac 12 \, \EE_\RSS \br{ \pth{\frac{T}{R} - \frac{T'}{R'}}^2 \mid R \mim R' \leq B q } \cdot \PP_\RSS (R \mim R' \leq B q),
\end{align*}
where $(T', R')$ is an IID copy of $(T, R)$ conditional on $\RSS$.
We will control each term separately. First,
\begin{align*}
    & \frac 12 \, \EE_\RSS \br{ \pth{\frac{T}{R} - \frac{T'}{R'}}^2 \mid R,R' > B q } \cdot \PP_\RSS (R, R' > B q) \\
    = & \frac 12 \, \EE_\RSS \br{ \pth{\frac{T R' - T'R}{R R'}}^2 \mid R,R' > B q } \cdot \PP_\RSS (R, R' > B q) \\
    \leq & \frac{1}{2 B^4 q^4} \cdot \EE_\RSS \br{ \pth{T R' - T'R}^2 \mid R,R' > B q } \cdot \PP_\RSS (R, R' > B q) \\
    \leq & \frac{1}{2 B^4 q^4} \cdot \EE_\RSS \br{ \pth{T R' - T'R}^2} \\
    = & \frac{1}{2 B^4 q^4} \cdot \pth{2 \Var_\RSS \br{T R'} - 2 \pth{\EE_\RSS^2 \br{T R} - \EE_\RSS^2 \br{T} \EE_\RSS^2 \br{R}}} \\
    \leq & \frac{1}{B^4 q^4} \cdot \pth{\Var_\RSS \br{T R'} + 2 \EE_\RSS \br{T} \EE_\RSS \br{R} \cdot |\text{Cov}_\RSS(T, R)|}
\end{align*}
The last inequality is obtained by considering cases where $\text{Cov}_\RSS(T, R)$ is positive or negative separately. Note $T$ and $R'$ are conditionally independent given $\RSS$. We have
\[
\Var_\RSS \br{T R'}
= \Var_\RSS(T) \Var_\RSS(R') + \Var_\RSS(T) (\EE_\RSS R')^2 + \Var_\RSS(R') (\EE_\RSS T)^2
\]
Note $T_b \in [0, c_\phi m]$, we have $\EE_\RSS [T] \leq B c_\phi m$ and $\Var_\RSS [T] = \sum_b \Var_\RSS [T_b] \leq B c_\phi^2 m^2$. Similarly, $\EE_\RSS [R'] \leq B m$ and $\Var_\RSS [R'] \leq B m^2$. Thus 
\[ \Var_\RSS \br{T R'} \leq B^2 c_\phi^2 m^4 + B^3 c_\phi^2 m^4 = O(m^4 B^3)\]
and
\[
|\text{Cov}_\RSS(T, R)| \leq \sqrt{\Var_\RSS [T] \Var_\RSS [R]} \leq B m^2 c_\phi.
\]
Therefore the first term is bounded by $O(m^4 / B)$. For the second term,
\begin{align*}
    & \frac 12 \, \EE_\RSS \br{ \pth{\frac{T}{R} - \frac{T'}{R'}}^2 \mid R \mim R' \leq B q } \\
    \leq& \frac 12 \, \EE_\RSS \br{ \pth{\frac{T}{R} + \frac{T'}{R'}}^2 \mid R \mim R' \leq B q } \\
    \leq& \frac 12 \, \EE_\RSS \br{ \pth{T + T'}^2 \mid R \mim R' \leq B q } \\
    \leq& 2 B^2 c_\phi^2 m^2.
\end{align*}
The last inequality is because $T \leq B c_\phi m$. By Proposition \ref{prop:signals-lasso}, we have
\[
\PP_\RSS (R \mim R' \leq B q)
\leq 2 \PP_\RSS (R \leq B q)
\leq 2 \rho.
\]
Combine the results above, we have
\[
\Var_\RSS \br{\frac{T}{R}}
\leq c_\phi^2  \cdot \br{\frac{B^2 m^4 + 5 B^3 m^4}{B^4 q^4} + 2 B^2 m^2 \cdot 2 \rho}.
\]
The bound is of $O \pth{{m^4}/{B}}$ if $\rho = O(B^{-3})$.

\emph{Step 2.} $\Var\br{\EE \pth{ {\sum_b T_b}/{\sum_b R_b} \mid \RSS }} = O(1/(n-d))$.

We start by bounding the perturbation of $\mu_j(j \in \cR_b)$ when we perturb $\RSS$ while holding $\mat X^\tran \vct Y$ fixed. We write 
\[
\mu_j(j \in \cR_b) = \PP_{H_j} (v_j \in A \mid \mat X_{(b); -j}^\tran \vct Y, r_j^2)
= \int_{-r_j}^{r_j} I_j(v_j; \mat X_{(b); -j}^\tran \vct Y) \cdot q_j(v_j) \, d v_j,
\]
where $r_j = \sqrt{\RSS + u_j^2}$, $I_j(v_j; X_{(b); -j}^\tran \vct Y) = \one \set{j \in \cR_b}$, and $q_j = d\mu_j / dv_j$ is the probability density. Direct calculation shows
\[
q_j(v_j) = c_{n-d} \cdot \sqrt{n-d} \cdot \pth{1 - \frac{v_j^2}{r_j^2}}^{\frac{n-d-2}{2}}\cdot \frac{1}{r_j}, \quad\text{where}\quad 
c_{n-d} = \frac{\Gamma\left(\frac{n-d+1}{2}\right)}{\sqrt{(n-d) \pi} \, \Gamma\left(\frac{n-d}{2}\right)}.
\]
Note that $q_j(r_j) = q_j(-r_j) = 0$. By the bounded convergence theorem,
\[
\partial_\RSS \, \mu_j(j \in \cR_b) = 
\int_{-r_j}^{r_j} I_j(v_j; X_{(b); -j}^\tran \vct Y) \cdot \partial_\RSS \, q_j(v_j) \, d v_j,
\]
where
\[
\partial_\RSS \, q_j = ((n-d-1) v_j^2 - r_j^2) \cdot \frac{c_{n-d} \sqrt{n-d}}{2} \cdot (r_j^2 - v_j^2)^{\frac{n-d-4}{2}} \cdot (r_j^2)^{-\frac{n-d+1}{2}}.
\]
Note that $\partial_\RSS \, q_j \geq 0$ if and only if $v_j^2 \geq \frac{r_j^2}{n-d-1}$ and 
\[\int_{-r_j}^{r_j}\partial_\RSS q_j dv_j = \partial_\RSS\int_{-r_j}^{r_j} q_j dv_j = \partial_\RSS 1 = 0.\]
Since $I_j(v_j; X_{(b); -j}^\tran \vct Y) \in \set{0, 1}$, we have
\begin{align*}
|\partial_\RSS \, \mu_j(j \in \cR_b)| &\leq 
\max \set{ \int_{r_j^2 \geq v_j^2 \geq \frac{r_j^2}{n-d-1}} \partial_\RSS \, q_j \, d v_j, \; -\int_{v_j^2 \leq \frac{r_j^2}{n-d-1}} \partial_\RSS \, q_j \, d v_j } \\
& = 
\int_{r_j^2 \geq v_j^2 \geq \frac{r_j^2}{n-d-1}} \partial_\RSS \, q_j \, d v_j\\
&= c_{n-d} \cdot \frac{\sqrt{n-d}}{\sqrt{n-d-1}} \pth{1 - \frac{1}{n-d-1}}^{\frac{n-d-2}{2}} \cdot \frac{1}{r_j^2} \\
&\leq \frac{1}{\RSS}.
\end{align*}
The last equation is calculated by \citet{wolfram19math} and the last inequality is by $c_{n-d} \leq 1/2$ and $r_j^2 \geq \RSS$.

Then let us look at the effect of perturbing $\RSS$ on the two-sided t-test $p$-values. We have
\[
p_j = 2 \cdot F_{t_{n-d}}\pth{- \frac{|v_j|}{\sqrt{\RSS/(n-d)}}},
\]
where $F_{t_{n-d}}$ is the cumulative distribution function of the t-distribution of degree of freedom $n-d$. Then
\[
\partial_\RSS \, p_j
= c_{n-d} \cdot \pth{1 + \frac{v_j^2}{\RSS}}^{-\frac{n-d+1}{2}} \cdot \frac{|v_j|}{\sqrt{\RSS/(n-d)}} \cdot \frac{1}{\RSS}.
\]

Now consider the event $A_\RSS = \set{ 0.5 \, (n-d) \leq \RSS \leq 1.5 \, (n-d)}$. Recall we assume $\sigma^2 = 1$. Since $\RSS \sim\chi^2_{n-d}$ is sub-exponential, we have $\PP(A^\setcomp) \leq 2 \exp(-(n-d) / 32)$ by the Bernstein concentration inequalities.

On $A_\RSS$, we have
\[
|\partial_\RSS \, \mu_j(j \in \cR_b)| \leq \frac{2}{n-d}
\]
and
\begin{align*}
\partial_\RSS \, p_j
&\leq c_{n-d} \cdot \pth{1 + \frac{v_j^2 / 3 }{(n-d)/2}}^{-\frac{n-d}{2}} \cdot |v_j| \cdot \sqrt{2} \cdot \frac{1}{n-d} \\
&\leq c_{n-d} \cdot \frac{|v_j|}{1 + v_j^2 / 3} \cdot \sqrt{2} \cdot \frac{1}{n-d} \\
&\leq c_{n-d} \cdot \frac{\sqrt{3}}{2} \cdot \sqrt{2} \cdot \frac{1}{n-d} \\
&\leq \frac{1}{n-d}.
\end{align*}
Note both $\mu_j(j \in \cR_b)$ and $\phi_j(p_j)$ are bounded. Therefore, $\mu_j(j \in \cR_b) \cdot \phi_j(p_j)$ is Lipschitz in $\RSS$ with a Lipschitz constant
\[
\pth{2 c_\phi + L_\phi} \cdot \frac{1}{n-d}.
\]

Moreover, $\sum_b T_b = \sum_{j=1}^{d} \mu_j(j \in \cR_b) \cdot \phi_j(p_j)$ is then Lipschitz in $\RSS$ with a Lipschitz constant $B m \pth{2 c_\phi + L_\phi} / (n-d)$.
A key observation is that ${\sum_b T_b}/{R}$ is Lipschitz in $\RSS$ since $R$ is not a function of $\RSS$.

Now consider the event $A_R = \set{R > B q}$. Since $\RSS$ is independent of $\mat X^\tran \vct Y$, $A_\RSS$ and $A_R$ are independent. In addition, we have $\PP(A_R^\setcomp) \leq \rho$.  

On $A_\RSS \cap A_R$, we have ${\sum_b T_b}/{R}$ is $L$-Lipschitz in $\RSS$ with
\[
L = \pth{\frac{2 m c_\phi + m L_\phi}{q}} \cdot \frac{1}{n-d}.
\]
Therefore, for any $\RSS \in A_\RSS$,
\[
\EE \br{\frac{\sum_b T_b}{R} \mid \RSS, A_R}
\]
is $L$-Lipschitz in $\RSS$.

For notational convenience, we denote
\[
E_R(\RSS) = \EE \br{ \frac{\sum_b T_b}{R} \mid \RSS, A_R}, \quad
E_{R^\setcomp}(\RSS) = \EE \br{ \frac{\sum_b T_b}{R} \mid \RSS, A_R^\setcomp},
\]
and
\[
E(\RSS) = \EE \br{ \frac{\sum_b T_b}{R} \mid \RSS}
= E_R(\RSS) \cdot \PP(A_R) + E_{R^\setcomp}(\RSS) \cdot \PP(A_R^\setcomp).
\]
Note all $E_R(\RSS)$, $E_{R^\setcomp}(\RSS)$, and $E(\RSS)$ are bounded by $B c_\phi m$.

Let $\RSS'$ be an independent copy of $\RSS$. Then
\begin{align*}
& (E(\RSS) - E(\RSS'))^2 \\
= & (E_R(\RSS) - E_R(\RSS'))^2 \cdot \PP(A_R)^2 + (E_{R^\setcomp}(\RSS) - E_{R^\setcomp}(\RSS'))^2 \cdot \PP(A_R^\setcomp)^2 + \\
& \quad 2 (E_R(\RSS) - E_R(\RSS')) (E_{R^\setcomp}(\RSS) - E_{R^\setcomp}(\RSS')) \cdot \PP(A_R) \PP(A_R^\setcomp) \\
= & (E_R(\RSS) - E_R(\RSS'))^2 + O(c_\phi^2 m^2 \cdot B^2 \rho )
\end{align*}
Therefore,
\begin{align*}
\Var[E(\RSS)] =& \frac 12 \EE \br{(E(\RSS) - E(\RSS'))^2} \\
=& \frac 12 \EE \br{(E(\RSS) - E(\RSS'))^2 \mid A_\RSS} \cdot \PP(A_\RSS) + \\
& \quad \frac 12 \EE \br{(E(\RSS) - E(\RSS'))^2 \mid A_\RSS^\setcomp} \cdot \PP(A_\RSS^\setcomp) \\
= & \frac 12 \EE \br{(E_R(\RSS) - E_R(\RSS'))^2 \mid A_\RSS} \cdot \PP(A_\RSS) \\
& \quad + O(c_\phi^2 m^2 \cdot B^2 \rho) + O \pth{c_\phi^2 m^2 \cdot B^2 e^{-{(n-d)}/{32}}}
\end{align*}
Since $E_R(\RSS)$ is $L$-Lipschitz in $\RSS$,
\begin{align*}
& \frac 12 \EE \br{(E_R(\RSS) - E_R(\RSS'))^2 \mid A_\RSS} \cdot \PP(A_\RSS) \\
\leq & \frac 12 \EE \br{L^2 (\RSS - \RSS')^2 \mid A_\RSS} \cdot \PP(A_\RSS) \\
\leq & \frac 12 \EE \br{L^2 (\RSS - \RSS')^2} \\
\leq & L^2 \Var(\RSS) \\
= & \pth{\frac{2 m c_\phi + m L_\phi }{q }}^2 \cdot \frac{2}{n-d},
\end{align*}
where the last line uses the fact that $\RSS \sim \chi^2(n-d)$ and hence $\Var(\RSS) = 2(n-d)$. Then we have
\[
\Var \br{E(\RSS)} \leq \pth{\frac{2 m c_\phi + m L_\phi}{q}}^2 \cdot \frac{2}{n-d} + O(c_\phi^2 m^2 \cdot B^2 \rho) + O \pth{c_\phi^2 m^2 \cdot B^2 e^{-(n-d)/32}}.
\]
That is
\[
\Var \br{\EE \pth{\frac{\sum_b T_b}{R} \mid \RSS}} \leq \pth{\frac{2 m c_\phi + m L_\phi}{q}}^2 \cdot \frac{2}{n-d} + O(c_\phi^2 m^2 \cdot B^2 (\rho \mam e^{-(n-d)/32})).
\]

\vspace{0.5em}
Combining Step 1 and Step 2, we have
\begin{align*}
\Var \br{\frac{\sum_b T_b}{R}}
&= \EE \br{\Var_\RSS \pth{\frac{T}{R}}} + \Var \br{\EE \pth{\frac{\sum_b T_b}{R} \mid \RSS}} \\
&\leq O \pth{\frac{c_\phi^2  m^4}{B q^4}} + \pth{\frac{2 m c_\phi + m L_\phi}{q }}^2 \cdot \frac{2}{n-d} + O(c_\phi^2 m^2 \cdot B^2 (\rho \mam e^{-(n-d)/32}))
\end{align*}

Finally,
\begin{align*}
\Var \br{\hFDR} =& \Var \br{\frac{\sum_b T_b}{\sum_b R_b} + O \pth{\frac{B c_\phi m^3}{R^2}}} \\
\leq& 2 \Var \br{\frac{\sum_b T_b}{\sum_b R_b}} + O \pth{2 \EE \br{\frac{B^2 c_\phi^2 m^6}{R^4}}} \\
\leq& 2 \Var \br{\frac{\sum_b T_b}{\sum_b R_b}} + O \pth{\frac{c_\phi^2 m^6}{B^2 q^4}} + O(B^2 c_\phi^2 m^6 \rho) \\
\leq& O \pth{c_\phi^2 \pth{\frac{ m^4}{q^4} + \frac{m^6}{B q^4} } \cdot \frac 1 B} + \pth{\frac{2 m c_\phi + m L_\phi}{q }}^2 \cdot \frac{4}{n-d} 
\\ &
+ O(c_\phi^2 m^6 \cdot B^2 (\rho \mam e^{-(n-d)/32})).
\end{align*}
The big-O notation only hides universal constant.

\end{proof}

\begin{proof}[Proof of Corollary \ref{cor:variance}]
Result (1): The setting satisfies the conditions in Part 2 of Proposition~\ref{prop:signals-lasso} with $\xi = B^{-\ep}$ and $q = \xi/3 = O(B^{-\ep})$. Then $\rho = \exp\pth{-B^{1-\ep}/36}$. Plugging them into the bound in Theorem~\ref{thm:variance}, we have
\[
\Var(\hFDR) = O\pth{B^{-1+4\ep} + \frac{B^{2\ep}}{n-d} + B^2 \exp\pth{-B^{1-\ep}/36}} = O(B^{-1+4\ep}).
\]

Result (2): The setting satisfies the conditions in Part 2 of Proposition~\ref{prop:signals-lasso} with $q = \xi/3$, and $\rho = \exp\pth{-B\xi/36}$. Plugging them into the bound in Theorem~\ref{thm:variance}, we have
\[
\Var(\hFDR) = O\pth{B^{-1} + \frac{1}{n-d} + B^2 \exp\pth{-(n-d)/32}} = O(B^{-1}).
\]
\end{proof}

\section{Consistency of Bootstrap variance estimation} \label{app:bootstrap_consistency}

\subsection{Setup and notation}
\label{sec:asymptotic-setup}

We consider the Gaussian linear model with block-orthogonal design as in Assumption~\ref{ass:block-orth}. In particular, the $d$ columns of
$\mat X$ are partitioned into $B$ blocks, specified by the index sets $(b) \subseteq [d]$ for $b = 1, \ldots, B$. The columns in different blocks are orthogonal. Moreover, we assume that $\sigma^2$ is known throughout the section. 

We consider an asymptotic regime in which the number of blocks $B$ tends to infinity. Hence so does the number of variables:
\[
d \;=\; \sum_{b=1}^B m_b \;\to\; \infty.
\]
Formally, consider a sequence of such models indexed by $d$, with design matrices $\mat X = \mat X_d \in \RR^{n_d \times d}$ satisfying $n_d > d$ for all $d$ and $n_d \to \infty$ as $d \to \infty$, and we suppress the index $d$ from the notation when there is no ambiguity.

For each block $b \in [B] := \{1,\ldots,B\}$, let $\mat X_{(b)} \in \RR^{n \times m_b}$ denote the
submatrix of $\mat X$ formed by the columns in block $b$, let
$\vct\theta_{(b)} \in \RR^{m_b}$ be the corresponding coefficient vector, and let
\[
\Sigma_{(b)} = \mat X_{(b)}^\tran \mat X_{(b)}
\]
be the within-block Gram matrix. 
We write the sufficient statistics within block $b$ as
\[
\vct Z_{(b)} = \mat X_{(b)}^\tran \vct Y
\;\sim\; \cN\big(\Sigma_{(b)} \vct\theta_{(b)},\; \sigma^2 \Sigma_{(b)}\big).
\]

In this section, the notation becomes simpler if we index variables within a given block. 
Thus, we make a slight notational adjustment from the main text, where $j$ denotes a global index across all variables. 
Here, we fix a block $b$ and let $j \in [m_b]$ denote the \emph{within-block} index. 
For example, $\vct Z_{(b); j}$ refers to the $j$th coordinate of the vector $\vct Z_{(b)}$.

Similarly, we write $\Sigma_{(b); -j, j} \in \mathbb{R}^{m_b-1}$ for the $j$th column of $\Sigma_{(b)}$ with its $j$th entry removed,  
and $\Sigma_{(b); j, -j} = \Sigma_{(b); -j, j}^\top$ for the corresponding row.  
We also use
\[
\Sigma_{(b); -j, -j} \in \mathbb{R}^{(m_b - 1)\times (m_b - 1)}
\]
to denote the principal submatrix obtained by removing the $j$th row and $j$th column of $\Sigma_{(b)}$.


We make the following standing assumptions. In particular, the Assumption A1 is the same as Assumption 3.1 in the main text. We state it here to make the discussion self-contained. 

\begin{enumerate}[{A}1)]
\item (Column normalization and block orthogonality) $\|\vct X_{(b), j}\| = 1$, and hence $\Sigma_{(b);jj} = 1$, for all $b\in [B], j\in [m_b]$ for any $b_1 \neq b_2$,  
\[
\mat X_{(b_1)}^\tran \mat X_{(b_2)} = \mat 0_{m_{b_1}\times m_{b_2}}.
\]
\item (Bounded block sizes)
$m_b \;\le\; m_{\max}$ for all $b\in[B]$ 
and some constant $m_{\max}=O(1)$.
\item (Within-block regularity) There exist constants
$\tau_{\min}^2,\tau_{\max}^2,\rho_1,\rho_\infty \in (0,1)$ such that, for every block
$b \in [B]$ and every $j \in [m_b]$,
\[
\lambda_{\min}(\Sigma_{(b)}) > 0,
\]
\[
\tau_{\min}^2
\;<\;
1 - \Sigma_{(b);j,-j}\,\Sigma_{(b);-j,-j}^{-1}\,\Sigma_{(b);-j,j}
\;<\;
\tau_{\max}^2,
\]
and
\[
\max_{b\in [B]}\;\max_{j\in (b)}
\big\|\Sigma_{(b);-j,-j}^{-1}\Sigma_{(b);-j,j}\big\|_1 \;<\; \rho_1,
\qquad
\max_{b\in [B]}\;\max_{\substack{j,k\in [m_b]\\ j\neq k}}
\big|\Sigma_{(b);j,k}\big| \;<\; \rho_\infty.
\]
\end{enumerate}

The next lemma provides a simple example in which Assumption~A3 holds.

\begin{lemma}\label{lem:equicorrelated}
If $\Sigma_{(b)} = (1 - \rho_{b})I_{m_b} + \rho_{b} 1_{m_b}1_{m_b}^\tran$ for some $|\rho_{b}| < 1/(m_b - 1)$, then Assumption A3 holds with any 
\[\tau_{\min}^2 < \min_{b\in [B]}(1 - \rho_{b})\frac{(m_b-1)\rho_{b} + 1}{(m_b-2)\rho_{b}+1} < \max_{b\in [B]}(1 - \rho_{b})\frac{(m_b-1)\rho_{b} + 1}{(m_b-2)\rho_{b}+1} < \tau_{\max}^2.\]
and
\[\rho_1 > \max_{b\in [B]}\bigg|\frac{(m_b - 1) \rho_{b}}{(m_b - 2)\rho_{b} + 1}\bigg|, \quad \rho_\infty > \max_{b\in [B]}|\rho_{b}|.\]
\end{lemma}

We now define the Lasso estimator. Let $\lambda = \lambda_d \to \infty$ be a regularity level that may depend on $d$. The (global) Lasso solution is
\[
\hat{\vct\theta}^\lambda
= \argmin_{\vct\gamma \in \RR^d}
\sum_{b=1}^B \bigg\{
\frac{1}{2}\,\vct\gamma_{(b)}^\tran \Sigma_{(b)} \vct\gamma_{(b)} - \vct\gamma_{(b)}^\tran \vct Z_{(b)}
\bigg\}
+ \lambda \|\vct\gamma\|_1.
\]
Equivalently, by block orthogonality,
\[
\hat{\vct\theta}^\lambda
= \argmin_{\vct\gamma \in \RR^d}
\sum_{b=1}^B \bigg\{
\frac{1}{2}\,\vct\gamma_{(b)}^\tran \Sigma_{(b)} \vct\gamma_{(b)} - \vct\gamma_{(b)}^\tran \vct Z_{(b)}
+ \lambda \|\vct\gamma_{(b)}\|_1
\bigg\},
\]
so that the problem separates across blocks:
\[
\hat{\vct\theta}^\lambda
= \big(\hat{\vct\theta}_{(b)}^\lambda\big)_{b=1}^B,
\qquad
\hat{\vct\theta}_{(b)}^\lambda
= \argmin_{\vct\gamma_{(b)}\in \RR^{m_b}}
\bigg\{
\frac{1}{2}\,\vct\gamma_{(b)}^\tran \Sigma_{(b)} \vct\gamma_{(b)} - 
\vct\gamma_{(b)}^\tran \vct Z_{(b)}
+ \lambda \|\vct\gamma_{(b)}\|_1
\bigg\}.
\]

We denote the total number of selected variables by
\[
R = \|\hat{\vct\theta}^\lambda\|_0
= \sum_{b=1}^B R_{(b)},
\qquad
R_{(b)} = \|\hat{\vct\theta}_{(b)}^\lambda\|_0
= \sum_{j \in [m_b]} \one\{\hat{\theta}^\lambda_{(b); j} \neq 0\}.
\]

For any coordinate $j$ in block $b$, let $\vct Z_{(b);-j} \in \RR^{m_b-1}$ denote the vector $\vct Z_{(b)}$ with its $j$th entry removed. Let $\vct Z_{-(b)} = \set{\vct Z_{(b')}}_{b' \neq b}$ denote the $\vct Z$ blocks except for $b$. Then the null sufficient statistic ($\vct S_j$ in the main text) for the variable $j$ in block $b$ is 
\[
\vct S_{(b);j} = (\vct Z_{(b);-j}, \vct Z_{-(b)}).
\]

We write $p_{(b);j}$ for a $p$-value constructed so that, under the null $\theta_{(b),j} = 0$, $p_{(b);j}$ is (conditionally) independent of $\vct S_{(b);-j}$, taking the usual $z$-test p-value as an example. In addition, Assumption~A1 implies that, for any $j$ in block $b$,
\[
Z_{(b);j} \;\perp\; \vct Z_{-(b)}
\;\big|\; \vct Z_{(b);-j}.
\]

We further define, for a fixed threshold $\zeta \in (0,1)$,
\begin{align*}
\phi_{(b);j}(p_{(b);j}) &= \frac{\one\{p_{(b);j} > \zeta\}}{1 - \zeta},\\
\pi_{(b);j}\big(\vct S_{(b);-j}\big)
&= \PP_{\theta_{(b),j}=0}\!\left\{
\hat{\theta}_{(b),j}^\lambda \neq 0
\;\middle|\;
\vct S_{(b);-j}
\right\}.
\end{align*}

Finally, we categorize blocks into three types. Fix any constant $\xi > 0$ and define:
\begin{itemize}
\item \emph{Null blocks} $\cB_0$: $b \in \cB_0$ if and only if $\vct\theta_{(b)} = \vct 0$.
\item \emph{``Good'' non-null blocks} $\cB_{1g}$: $b \in \cB_{1g}$ if and only if there exists $j \in [m_b]$ such that $|\theta_{(b);j}| > (1 + \xi)\lambda$ and $\theta_{(b);k} = 0$ for all $k \neq j$ in $(b)$.
\item \emph{``Bad'' non-null blocks} $\cB_{1b}$: blocks that are neither null nor good
non-null, i.e.\ $\cB_{1b} = [B] \setminus (\cB_0 \cup \cB_{1g})$.
\end{itemize}
Let $B_0$, $B_{1g} = B_{1g}(\xi)$, and $B_{1b} = B_{1b}(\xi)$ denote the number of blocks of these
three types, respectively. We impose the following additional assumptions on the asymptotic regime.

\begin{enumerate}[{A}1)]
\setcounter{enumi}{3}
\item (Composition of three types of blocks)
\[
\frac{B_{1g}}{B_0} \;=\; o(1),
\quad
\frac{B_{1b}}{B_{1g}} \;=\; o(1),
\quad\text{and}\quad
B_{1g} \to \infty
\qquad \text{as } B \to \infty.
\]

\item (Choice of the penalty level)
There exists $\chi > 0$ such that, for each $d$,
\[
\lambda = \lambda_d = \Phi^{-1}\!\left(1 - \frac{v_d}{2d}\right),
\]
for some $v$ satisfying
\[
v_d \in \left[
\frac{\chi}{1 - \chi}\, B_{1g},\;
\frac{1 - \chi}{\chi}\, B_{1g}
\right],
\]
where $\Phi$ is the standard normal distribution function.
\end{enumerate}

\subsection{Main results}\label{subsec:bootstrap_main}

\begin{thm}\label{thm:CLT_FDR}
Let
\[
\alpha_d \;=\; \frac{v_d}{v_d + B_{1g}}.
\]
Under Assumptions A1–A5,
\[
d_{K}\left(\sqrt{\frac{v_d}{\alpha_d^3}}\;(\hFDR - \tilde{\alpha}_d)
,\cN(0,1)\right) = o(1).
\]
where $d_K$ denote s the Kolmogorov distance and $\tilde{\alpha}_d$ is deterministic and satisfies $\tilde{\alpha}_d = \alpha_d (1 + o(1))$ as $d\to\infty$.
\end{thm}

\begin{thm}\label{thm:bootstrap}
Suppose we can write the number of ``good'' non-null blocks
\[
B_{1g} = O\!\bigl(d^{\nu}\,\Gamma_d\bigr)
\]
for some $\nu \in [0,1)$ and a sequence $\Gamma_d$ with $\Gamma_d = o(d^a)$ for any $a>0$.
Consider the parametric bootstrap estimator $\hFDR^{(*)}$ sampled from the
OLS-after-Lasso model, where the Lasso penalty to define the OLS model is
\[
\bar{\lambda} = \bar{c}\,\lambda
\]
for some constant $\bar{c} > 1$, with $\lambda$ definded as in Assumption~A5.
Assume that the signal strength parameter $\xi$ in the definition of ``good'' non-null blocks
satisfies
\[
\xi > \sqrt{\frac{\nu}{1 - \nu}} + \bar{c} - 1.
\]
Then, in the setting of Theorem~\ref{thm:CLT_FDR}, there exists a (data–dependent)
centering constant $\alpha_d^{(*)}$ such that
\[
\sup_{t\in\RR}
\biggl|
\PP\!\biggl(
  \sqrt{\frac{v_d}{\alpha_d^3}}\,
  \bigl(\hFDR^{(*)} - \alpha_d^{(*)}\bigr)
  \le t
  \,\Big|\, \mat X, \vct Y
\biggr)
- \Phi(t)
\biggr|
\;\xrightarrow{P}\; 0.
\]
In particular, conditional on the data, the distribution of
$\sqrt{v_d/\alpha_d^3}\,(\hFDR^{(*)} - \alpha_d^{(*)})$ converges to $\cN(0,1)$, with convergence in probability over
the data-generating process.
\end{thm}

We prove the theorems in the following subsections. We start with supporting lemmas and then present the proofs of Theorems \ref{thm:CLT_FDR} and \ref{thm:bootstrap}.

\subsection{Preparation: single-block analysis}
In this section, we focus on a single block. Fix a block $b$ and, for notational convenience, we suppress the subscript $(b)$ and the superscript $\lambda$. We write
\[
m = m_b,\quad
\Sigma = \Sigma_{(b)},\quad
\vct\theta = \vct\theta_{(b)},\quad
\hat{\vct\theta} = \hat{\vct\theta}_{(b)},\quad
\vct Z = \vct Z_{(b)} = \mat X_{(b)}^\top \vct Y.
\]
Fix any constant $\eta > 0$. Then, with probability $1$,
\[
|Z_j| \not\in \{(1 - \eta)\lambda, \lambda, (1 + \eta)\lambda\} 
\quad \text{for all } j\in [m],
\]
and we assume this event throughout the subsection.

\subsubsection{Deterministic properties of $\hat{\theta}$}

\begin{lemma}\label{lem:KKT}
Under Assumption~A3, $\hat{\vct\theta}$ is the unique solution to the KKT conditions:
\begin{equation}\label{eq:KKT_nonzero}
Z_j - \sum_{k=1}^{m}\Sigma_{jk}\hat{\theta}_k = \lambda \,\sgn(\hat{\theta}_j),
\quad \text{if }\hat{\theta}_j\neq 0,
\end{equation}
and
\begin{equation}\label{eq:KKT_zero}
\bigg|Z_j - \sum_{k=1}^{m}\Sigma_{jk}\hat{\theta}_k\bigg| < \lambda,
\quad \text{if }\hat{\theta}_j = 0.
\end{equation}
\end{lemma}
\begin{proof}
The KKT conditions are necessary for optimality. Under Assumption~A3, $\Sigma$ is positive definite, so the Lasso objective is strongly convex. Hence any solution to the KKT conditions is the unique global minimizer.
\end{proof}

\begin{lemma}\label{lem:all_zero}
Under Assumption~A3, $\hat{\vct\theta} = \vct 0$ if and only if $\|\vct Z\|_{\infty} < \lambda$.
\end{lemma}
\begin{proof}
($\Rightarrow$) If $\hat{\vct\theta} = \vct 0$, then \eqref{eq:KKT_zero} gives
\[
|Z_j| = \Big|Z_j - \sum_{k=1}^m \Sigma_{jk}\hat{\theta}_k\Big| < \lambda
\quad \text{for all } j,
\]
so $\|\vct Z\|_\infty < \lambda$.

($\Leftarrow$) If $|Z_j| < \lambda$ for all $j$, then $\vct\theta = \vct 0$ satisfies
\eqref{eq:KKT_zero} for every $j$. By Lemma~\ref{lem:KKT} and strong convexity, $\hat{\vct\theta}$
must be this unique solution, so $\hat{\vct\theta} = \vct 0$.
\end{proof}

\begin{lemma}\label{lem:thetaj_zero}
Let $\hat{\vct\theta}^{(-j)}$ denote the leave-one-out Lasso estimate that omits the $j$th coordinate:
\[
\hat{\vct\theta}^{(-j)} 
= \argmin_{\vct\gamma\in \RR^{m-1}} 
\Big\{
\vct\gamma^\top \vct Z_{-j}  - \tfrac{1}{2}\vct\gamma^\top\Sigma_{-j, -j}\vct\gamma 
+ \lambda\|\vct\gamma\|_1
\Big\}.
\]
Under Assumption~A3,
\[
\hat{\theta}_j = 0 
\quad\Longleftrightarrow\quad 
\big|Z_j - \Sigma_{j,-j}\hat{\vct\theta}^{(-j)}\big| < \lambda.
\]
\end{lemma}
\begin{proof}
Under Assumption~A3, $\Sigma_{-j,-j}$ is positive definite, so
$\hat{\vct\theta}^{(-j)}$ is unique.

($\Rightarrow$) If $\hat{\theta}_j = 0$, then the subvector $\hat{\vct\theta}_{-j}$ solves the
$(m-1)$-dimensional Lasso with data $(\vct Z_{-j},\Sigma_{-j,-j})$, hence
$\hat{\vct\theta}_{-j} = \hat{\vct\theta}^{(-j)}$ by uniqueness. Then \eqref{eq:KKT_zero} gives
\[
\big|Z_j - \Sigma_{j,-j}\hat{\vct\theta}^{(-j)}\big|
= \bigg|Z_j - \sum_{k=1}^m \Sigma_{jk}\hat{\theta}_k\bigg| < \lambda.
\]

($\Leftarrow$) Now assume $|Z_j - \Sigma_{j,-j}\hat{\vct\theta}^{(-j)}| < \lambda$, and define
$\tilde{\vct\theta}$ by
\[
\tilde{\theta}_j = 0, \qquad \tilde{\vct\theta}_{-j} = \hat{\vct\theta}^{(-j)}.
\]
By construction, $\tilde{\vct\theta}_{-j}$ satisfies the KKT conditions for
coordinates $k\neq j$, and the inequality for coordinate $j$ holds by assumption. Thus
$\tilde{\vct\theta}$ satisfies the full KKT system \eqref{eq:KKT_nonzero}–\eqref{eq:KKT_zero}.
By Lemma~\ref{lem:KKT} and uniqueness, $\hat{\vct\theta} = \tilde{\vct\theta}$, so
$\hat{\theta}_j = 0$.
\end{proof}

\begin{lemma}\label{lem:thetaj_nonzero}
Under Assumption~A3, If $|Z_j| > \lambda$ and $\|\vct Z_{-j}\|_{\infty} < \lambda$, then $\hat{\theta}_j \neq 0$.
\end{lemma}
\begin{proof}
By Lemma~\ref{lem:all_zero}, $\|\vct Z_{-j}\|_{\infty} < \lambda$ implies
$\hat{\vct\theta}^{(-j)} = \vct 0$. Then Lemma~\ref{lem:thetaj_zero} says
\[
\hat{\theta}_j = 0 \Longleftrightarrow |Z_j - \Sigma_{j,-j}\hat{\vct\theta}^{(-j)}| < \lambda
\Longleftrightarrow |Z_j| < \lambda,
\]
which contradicts $|Z_j|>\lambda$. Hence $\hat{\theta}_j\neq 0$.
\end{proof}

\begin{lemma}\label{lem:sparsistency_nonnull_block}
Fix any $\eta > 0$. Under Assumption~A3, if
\[
|Z_j|\in (\lambda, (1 + \eta)\lambda)
\quad\text{and}\quad
\|\vct Z_{-j}\|_{\infty} < (1 -\eta \rho_\infty)\lambda,
\]
then $\hat{\theta}_j \neq 0$ and $\hat{\vct\theta}_{-j} = \vct 0$.
\end{lemma}
\begin{proof}
Consider the candidate solution $\tilde{\vct\theta}$ with
\[
\tilde{\theta}_j = (|Z_j| - \lambda)\,\sgn(Z_j),
\quad
\tilde{\theta}_k = 0,\ \forall k \neq j.
\]
Then
\[
Z_j - \Sigma_{jj}\tilde{\theta}_j 
= Z_j - \tilde{\theta}_j 
= \lambda \,\sgn(\tilde{\theta}_j),
\]
since $\Sigma_{jj} = 1$. For $k\neq j$, Assumption~A3 implies $|\Sigma_{kj}| \le \rho_\infty$, and
we have $|\tilde{\theta}_j| = |Z_j| - \lambda < \eta\lambda$, so
\begin{align*}
\big|Z_k - \Sigma_{kj}\tilde{\theta}_j\big|
&\le |Z_k| + |\Sigma_{kj}|\,|\tilde{\theta}_j|
< (1 - \eta\rho_\infty)\lambda + \rho_\infty \eta \lambda = \lambda.
\end{align*}
Thus $\tilde{\vct\theta}$ satisfies the KKT conditions
\eqref{eq:KKT_nonzero}–\eqref{eq:KKT_zero}. By Lemma~\ref{lem:KKT}, $\hat{\vct\theta}=\tilde{\vct\theta}$, so $\hat{\theta}_j\neq 0$ and $\hat{\theta}_k=0$ for all $k\neq j$.
\end{proof}

\begin{lemma}\label{lem:thetaj_thetak_nonzero}
Fix any $\eta > 0$. Under Assumption~A3, if $\|\hat{\vct\theta}\|_0 \ge 2$, then one of the following holds:
\begin{enumerate}[(a)]
\item $\|\vct Z\|_{\infty} > (1 + \eta)\lambda$;
\item there exist $k_1 < k_2$ such that
\[
|Z_{k_1}|, |Z_{k_2}|\in \big((1-\eta)\lambda,(1+ \eta)\lambda\big).
\]
\end{enumerate}
\end{lemma}
\begin{proof}
Assume neither (a) nor (b) holds. Let $j = \argmax_{k\in [m]} |Z_k|$, so
\[
|Z_j| < (1 + \eta)\lambda,
\qquad
\|\vct Z_{-j}\|_{\infty} < (1 - \eta)\lambda.
\]
If $|Z_j| < \lambda$, Lemma~\ref{lem:all_zero} implies $\|\hat{\vct\theta}\|_0 = 0$, contrary to the
assumption $\|\hat{\vct\theta}\|_0\ge 2$. Thus $|Z_j| > \lambda$. Since $\rho_\infty < 1$,
\[
\|\vct Z_{-j}\|_{\infty} < (1 - \eta)\lambda < (1 - \eta\rho_\infty)\lambda.
\]
By Lemma~\ref{lem:sparsistency_nonnull_block}, $\hat{\theta}_j\neq 0$ and $\hat{\theta}_k = 0$ for all
$k\neq j$, so $\|\hat{\vct\theta}\|_0 = 1$, again a contradiction. Hence at least one of (a)–(b)
must hold.
\end{proof}

\subsubsection{Stochastic properties of $\hat{\theta}$}

Throughout this subsection, let $\check{Z}\sim \cN(0, \mat I)$ denote a generic standard Gaussian random vector.

\begin{prop}[\citet{gordon1941values}]\label{prop:mill_ratio}
For any $z > 0$,
\[
\frac{1}{\sqrt{2\pi}}\, \frac{z}{z^2 + 1} \exp\Big(-\frac{z^2}{2}\Big)
\;\le\; 1 - \Phi(z)
\;\le\; \frac{1}{\sqrt{2\pi}}\, \frac{1}{z} \exp\Big(-\frac{z^2}{2}\Big).
\]
\end{prop}

\begin{lemma}\label{lem:P|Z|}
For any $z, \tau > 0$ and $\mu \in \RR$,
\[
\PP(|\tau \check{Z} + \mu| > z)\;\le\; 2\Big\{1 - \Phi\Big(\frac{z - |\mu|}{\tau}\Big)\Big\}.
\]
\end{lemma}
\begin{proof}
We have
\begin{align*}
\PP(|\tau \check{Z} + \mu| > z)
&= \PP\Big(\check{Z} > \frac{z - \mu}{\tau}\Big)
   + \PP\Big(\check{Z} < \frac{-z - \mu}{\tau}\Big) \\
&= 1 - \Phi\Big(\frac{z - \mu}{\tau}\Big)
   + \Phi\Big(\frac{-z - \mu}{\tau}\Big).
\end{align*}
If $\mu \ge 0$, then $z - \mu \le z - |\mu|$ and $-z - \mu \le -z - |\mu|$, so
\[
1 - \Phi\Big(\frac{z - \mu}{\tau}\Big)
\le 1 - \Phi\Big(\frac{z - |\mu|}{\tau}\Big),
\quad
\Phi\Big(\frac{-z - \mu}{\tau}\Big)
\le 1 - \Phi\Big(\frac{z - |\mu|}{\tau}\Big),
\]
and similarly for $\mu < 0$ by symmetry. Thus each term is bounded by
$1 - \Phi((z - |\mu|)/\tau)$ and the result follows.
\end{proof}

\begin{lemma}\label{lem:pij_as_bound}
Under Assumption~A3, for any $j\in [m]$ and any $\vct z \in \RR^{m-1}$,
\begin{align*}
\pi_j(\vct z) 
&\le \one\{\|\vct z\|_{\infty} < \lambda\}\cdot
2\Big\{1 - \Phi\Big(\frac{\lambda - \big|\Sigma_{j,-j}\Sigma_{-j,-j}^{-1}\vct z\big|}{\tau_j}\Big)\Big\}\\
&\quad + \one\{\|\vct z\|_{\infty} \ge \lambda\}\cdot
2\Big\{1 - \Phi\Big(\frac{\lambda(1 - \rho_1)}{\tau_j}\Big)\Big\},
\end{align*}
where
\[
\tau_j^2 = 1 - \Sigma_{j,-j}\Sigma_{-j,-j}^{-1}\Sigma_{-j, j}.
\]
\end{lemma}
\begin{proof}
By Lemma~\ref{lem:thetaj_zero} and the definition of $\pi_j(\cdot)$,
\[
\pi_j(\vct z) 
= \PP_{\theta_j = 0}\big(|Z_j - \Sigma_{j, -j}\hat{\vct\theta}^{(-j)}| > \lambda\mid \vct Z_{-j} = \vct z\big).
\]
Let $\hat{\vct\theta}^{(-j)}(\vct z)$ denote the leave-one-out Lasso estimator when $\vct Z_{-j} = \vct z$.

By Lemma~\ref{lem:all_zero}, if $\|\vct z\|_{\infty} < \lambda$ then $\hat{\vct\theta}^{(-j)}(\vct z) = \vct 0$, so
\[
\pi_j(\vct z) = \PP_{\theta_j=0}(|Z_j|>\lambda\mid \vct Z_{-j}=\vct z).
\]
Conditional on $\vct Z_{-j}=\vct z$,
\[
Z_j \mid \vct Z_{-j} = \vct z 
\overset{\theta_j = 0}{\sim} \cN\left(\Sigma_{j,-j}\Sigma_{-j,-j}^{-1} \vct z,\, \tau_j^2\right),
\]
so
\[
\pi_j(\vct z) = \PP\Big(|\tau_j \check{Z} + \Sigma_{j,-j}\Sigma_{-j,-j}^{-1} \vct z| > \lambda\Big),
\]
and Lemma~\ref{lem:P|Z|} yields
\[
\pi_j(\vct z) \le
2\Big\{1 - \Phi\Big(\frac{\lambda - \big|\Sigma_{j,-j}\Sigma_{-j,-j}^{-1}\vct z\big|}{\tau_j}\Big)\Big\}.
\]

Now consider the case $\|\vct z\|_{\infty} \ge \lambda$. By Lemma~\ref{lem:thetaj_zero}, the same conditional law implies
\[
\pi_j(\vct z) 
= \PP\Big(\big|\tau_j \check{Z} + \Sigma_{j,-j}\big(\Sigma_{-j,-j}^{-1}\vct z - \hat{\vct\theta}^{(-j)}(\vct z)\big)\big| > \lambda\Big).
\]
By Lemma~\ref{lem:KKT}, there exists $v\in [-1, 1]^{m-1}$ such that
\[
\vct z - \Sigma_{-j, -j}\hat{\vct\theta}^{(-j)}(\vct z) = \lambda v,
\]
so $\Sigma_{-j,-j}^{-1}\vct z - \hat{\vct\theta}^{(-j)}(\vct z) = \lambda \Sigma_{-j,-j}^{-1}v$, and hence
\[
\Sigma_{j,-j}\big(\Sigma_{-j,-j}^{-1}\vct z - \hat{\vct\theta}^{(-j)}(\vct z)\big)
= \lambda\,\Sigma_{j,-j}\Sigma_{-j,-j}^{-1}v.
\]
Assumption~A3 gives
\[
\big|\Sigma_{j,-j}\Sigma_{-j,-j}^{-1}v\big|
\le \|\Sigma_{j,-j}\Sigma_{-j,-j}^{-1}\|_1\|v\|_\infty
\le \rho_1.
\]
Thus, by Lemma~\ref{lem:P|Z|},
\[
\pi_j(\vct z) \le
2\left\{1 - \Phi\left(\frac{\lambda(1 - \rho_1)}{\tau_j}\right)\right\}.
\]
Combining the two cases gives the stated bound.
\end{proof}

\begin{lemma}\label{lem:pij_moments}
Under Assumption~A3, if $\lambda > 1$, then for any $j\in [m]$ and any $L > 0$,
\[
\EE\big[\pi_j^L(\vct Z_{-j})\big] 
\le \left(\frac{2}{(1 - \rho_1)\lambda}\right)^L
\exp\left\{-\frac{L(1 - \rho_1)^2}{2\,\tau_{\max}^2}\lambda^2\right\}.
\]
If, further, $\vct\theta_{-j} = \vct 0$, then
\[
\EE[\pi_j(\vct Z_{-j})]
\;\le\; (1 - \Phi(\lambda))\cdot 
\frac{12 m}{1-\rho_1},
\]
and
\[
\EE[\pi_j^2(\vct Z_{-j})]
\;\le\; (1 - \Phi(\lambda))\cdot 
\frac{20m}{(1- \rho_1)^2\lambda}
\exp\left\{-\left( \frac{(1 - \rho_1)^2}{\tau_{\max}^2}\wedge 
\frac{\tau_{\min}^2}{2(2 - \tau_{\min}^2)}\right)\lambda^2\right\}.
\]
\end{lemma}
\begin{proof}
First, consider the almost-sure bound. For any $z$ with $\|z\|_\infty < \lambda$,
\begin{equation}\label{eq:integral_upper}
\big|\Sigma_{j, -j}\Sigma_{-j,-j}^{-1}z\big|
\le \|\Sigma_{j, -j}\Sigma_{-j,-j}^{-1}\|_1 \|z\|_{\infty} 
< \rho_1 \lambda.
\end{equation}
Thus Lemma~\ref{lem:pij_as_bound} and Proposition~\ref{prop:mill_ratio} imply
\begin{align*}
\pi_j(z)
&\le 2\left\{1 - \Phi\left(\frac{\lambda(1 - \rho_1)}{\tau_j}\right)\right\} \\
&\le \frac{2\tau_j}{(1 - \rho_1)\lambda}
   \exp\Big\{-\frac{(1 - \rho_1)^2}{2\tau_j^2}\lambda^2\Big\}
 <  \frac{2}{(1 - \rho_1)\lambda}
   \exp\Big\{-\frac{(1 - \rho_1)^2}{2\tau_j^2}\lambda^2\Big\}.
\end{align*}
By Assumption~A3, $\tau_j^2 \le \tau^2_{\max}$, so
\[
\pi_j(z)
\le \frac{2}{(1 - \rho_1)\lambda}
\exp\Big\{-\frac{(1 - \rho_1)^2}{2\,\tau_{\max}^2}\lambda^2\Big\}.
\]
Raising to the power $L$ and taking expectations gives the first bound.

Now assume $\vct\theta_{-j}= \vct 0$. 
Since $\vct\theta_{-j} = \vct 0$, we have
\[
\vct Z_{-j} \sim \cN\big(\vct 0,\, \Sigma_{-j,-j}\big),
\]
and therefore
\[
\Sigma_{j,-j}\Sigma_{-j,-j}^{-1}\vct Z_{-j}
\;\sim\;
\cN\Big(0,\; 1 - \tau_j^2\Big),
\qquad
\tau_j^2 \;=\; 1 - \Sigma_{j,-j}\Sigma_{-j,-j}^{-1}\Sigma_{-j,j}.
\]
Let $\check{Z}\sim\cN(0,1)$ be a generic standard Gaussian. Then
\[
\Sigma_{j,-j}\Sigma_{-j,-j}^{-1}\vct Z_{-j}
\;\stackrel{d}{=}\;
\sqrt{1 - \tau_j^2}\,\check{Z}.
\]

We now derive refined bounds for the first and second moments of $\pi_j(\vct Z_{-j})$ using the decomposition from Lemma~\ref{lem:pij_as_bound}. Recall that
\begin{align*}
\pi_j(\vct z_{-j})
&\le \one\big(\|\vct z_{-j}\|_{\infty} < \lambda\big)\cdot
2\Big\{1 - \Phi\Big(\frac{\lambda - \big|\Sigma_{j,-j}\Sigma_{-j,-j}^{-1}\vct z_{-j}\big|}{\tau_j}\Big)\Big\}\\
&\quad + \one\big(\|\vct z_{-j}\|_{\infty} > \lambda\big)\cdot
2\Big\{1 - \Phi\Big(\frac{\lambda(1 - \rho_1)}{\tau_j}\Big)\Big\}.
\end{align*}

Refined first moment bound under $\vct\theta_{-j} = \vct 0$.
On the event $\{\|\vct Z_{-j}\|_{\infty} < \lambda\}$, Assumption~A3 implies
\begin{equation}\label{eq:integral_upper_Z}
\Big|\Sigma_{j, -j}\Sigma_{-j,-j}^{-1}\vct Z_{-j}\Big|
\le \big\|\Sigma_{j, -j}\Sigma_{-j,-j}^{-1}\big\|_1\,\|\vct Z_{-j}\|_{\infty}
< \rho_1 \lambda.
\end{equation}
Hence, from Lemma~\ref{lem:pij_as_bound},
\begin{align*}
&\EE\Big[
\one\big(\|\vct Z_{-j}\|_{\infty} < \lambda\big)\cdot
2\Big\{1 - \Phi\Big(\frac{\lambda - \big|\Sigma_{j,-j}\Sigma_{-j,-j}^{-1}\vct Z_{-j}\big|}{\tau_j}\Big)\Big\}
\Big]\\
&\le
\EE\Big[
\one\Big(\big|\Sigma_{j,-j}\Sigma_{-j,-j}^{-1}\vct Z_{-j}\big| < \rho_1\lambda\Big)\cdot
2\Big\{1 - \Phi\Big(\frac{\lambda - \big|\Sigma_{j,-j}\Sigma_{-j,-j}^{-1}\vct Z_{-j}\big|}{\tau_j}\Big)\Big\}
\Big]\\
&=
\EE\Big[
\one\big(\sqrt{1-\tau_j^2}\,|\check{Z}| < \rho_1\lambda\big)\cdot
2\Big\{1 - \Phi\Big(\frac{\lambda - \sqrt{1-\tau_j^2}\,|\check{Z}|}{\tau_j}\Big)\Big\}
\Big]\\
&= 4\,\EE\Big[
\one\big(0 < \sqrt{1-\tau_j^2}\,\check{Z} < \rho_1\lambda\big)\cdot
\Big\{1 - \Phi\Big(\frac{\lambda - \sqrt{1-\tau_j^2}\,\check{Z}}{\tau_j}\Big)\Big\}
\Big],
\end{align*}
where we used symmetry of $\check{Z}$ in the last line.

Let $L\ge 1$ be an integer (we will specialize to $L=1,2$ later).
Applying the upper bound from the Mills ratio (Proposition~\ref{prop:mill_ratio}) to the tail,
for any $z\in\RR$ such that $0 < \sqrt{1-\tau_j^2}z < \rho_1\lambda$, we have
\[
1 - \Phi\Big(\frac{\lambda - \sqrt{1-\tau_j^2}\,z}{\tau_j}\Big)
\le
\frac{\tau_j}{\lambda - \sqrt{1-\tau_j^2}\,z}\cdot
\frac{1}{\sqrt{2\pi}}
\exp\Big\{-\frac{(\lambda - \sqrt{1-\tau_j^2}\,z)^2}{2\tau_j^2}\Big\}.
\]
Using again that $\lambda - \sqrt{1-\tau_j^2}\,z \ge (1-\rho_1)\lambda$ on this region,
\[
1 - \Phi\Big(\frac{\lambda - \sqrt{1-\tau_j^2}\,z}{\tau_j}\Big)
\le
\frac{\tau_j}{(1-\rho_1)\lambda}\cdot
\frac{1}{\sqrt{2\pi}}
\exp\Big\{-\frac{(\lambda - \sqrt{1-\tau_j^2}\,z)^2}{2\tau_j^2}\Big\}.
\]

Therefore, for any integer $L\ge 1$,
\begin{align*}
&\quad\frac 12 \Big(\frac{\sqrt{2\pi} (1-\rho_1)\lambda}{2 \tau_j}\Big)^L \cdot
\EE\Big[
\one\big(\|\vct Z_{-j}\|_{\infty} < \lambda\big)\cdot
2^L\Big\{1 - \Phi\Big(\frac{\lambda - \big|\Sigma_{j,-j}\Sigma_{-j,-j}^{-1}\vct Z_{-j}\big|}{\tau_j}\Big)\Big\}^L
\Big]\\
&\le
\frac{1}{\sqrt{2\pi}}
\int_{0}^{\frac{\rho_1}{\sqrt{1-\tau_j^2}}\lambda}
\exp\Big\{-\frac{L(\lambda - \sqrt{1-\tau_j^2}z)^2}{2\tau_j^2}
- \frac{z^2}{2}\Big\}\,dz\\
&=
\frac{1}{\sqrt{2\pi}}
\int_{0}^{\frac{\rho_1}{\sqrt{1-\tau_j^2}}\lambda}
\exp\left\{-\frac{L\lambda^2 - 2L\sqrt{1-\tau_j^2}\lambda z + (L-(L-1)\tau_j^2)z^2}{2\tau_j^2}\right\}dz\\
&=
\frac{1}{\sqrt{2\pi}}
\int_{0}^{\frac{\rho_1}{\sqrt{1-\tau_j^2}}\lambda}
\exp\left\{-\frac{L-(L-1)\tau_j^2}{2\tau_j^2}
\left(z - \frac{L\sqrt{1-\tau_j^2}}{L-(L-1)\tau_j^2}\lambda\right)^2\right\}dz\\
&\qquad\cdot
\exp\left\{
\frac{L^2(1-\tau_j^2)}{2\tau_j^2\{L-(L-1)\tau_j^2\}}\lambda^2
- \frac{L}{2\tau_j^2}\lambda^2
\right\}\\
&=
\frac{1}{\sqrt{2\pi}}
\int_{-\frac{L\sqrt{1-\tau_j^2}}{L-(L-1)\tau_j^2}\lambda}^{\frac{\rho_1}{\sqrt{1-\tau_j^2}}\lambda - \frac{L\sqrt{1-\tau_j^2}}{L-(L-1)\tau_j^2}\lambda}
\exp\left\{-\frac{L-(L-1)\tau_j^2}{2\tau_j^2}z^2\right\}dz\\
&\qquad\cdot
\exp\left\{-\frac{L}{2\{L-(L-1)\tau_j^2\}}\lambda^2\right\}\\
&\le
\frac{1}{\sqrt{2\pi}}
\int_{-\infty}^{-\frac{((1-\rho_1)L + \rho_1)(1-\tau_j^2)-\rho_1}{\sqrt{1-\tau_j^2}\{L-(L-1)\tau_j^2\}}\lambda}
\exp\left\{-\frac{L-(L-1)\tau_j^2}{2\tau_j^2}z^2\right\}dz\\
&\qquad\cdot
\exp\left\{-\frac{L}{2\{L-(L-1)\tau_j^2\}}\lambda^2\right\}\\
&=
\frac{\tau_j}{\sqrt{L-(L-1)\tau_j^2}}
\Bigg[
1 - \Phi\left(
\frac{((1-\rho_1)L + \rho_1)(1-\tau_j^2)-\rho_1}{\sqrt{(1-\tau_j^2)\{L-(L-1)\tau_j^2\}}\;\tau_j}\,\lambda
\right)
\Bigg]\\
&\qquad\cdot
\exp\left\{-\frac{L}{2\{L-(L-1)\tau_j^2\}}\lambda^2\right\}\\
&\le
\exp\left\{-\frac{L}{2\{L-(L-1)\tau_j^2\}}\lambda^2\right\},
\end{align*}
where in the last line we used that $\tau < 1$.

Now take $L=1$. We obtain
\[
\EE\Big[
\one\big(\|\vct Z_{-j}\|_{\infty} < \lambda\big)\cdot
2\Big\{1 - \Phi\Big(\frac{\lambda - \big|\Sigma_{j,-j}\Sigma_{-j,-j}^{-1}\vct Z_{-j}\big|}{\tau_j}\Big)\Big\}
\Big]
\;\le\;
\frac{4}{\sqrt{2\pi} (1-\rho_1)\lambda}\exp\left\{-\frac{\lambda^2}{2}\right\}.
\]

By the lower bound in the Mills ratio (again Proposition~\ref{prop:mill_ratio}),
\[
\frac{\lambda}{\lambda^2+1}\,\frac{1}{\sqrt{2\pi}} \exp\left\{-\frac{\lambda^2}{2}\right\}
\;\le\;
1 - \Phi(\lambda).
\]
For $\lambda > 1$, we have
\[
\frac{1}{\sqrt{2\pi}\lambda}\exp\left\{-\frac{\lambda^2}{2}\right\}
\;\le\;
\frac{\lambda^2+1}{\lambda^2}\,\big(1-\Phi(\lambda)\big)
\;\le\;
2\,\big(1-\Phi(\lambda)\big).
\]
Thus,
\[
\EE\Big[
\one\big(\|\vct Z_{-j}\|_{\infty} < \lambda\big)\cdot
2\Big\{1 - \Phi\Big(\frac{\lambda - \big|\Sigma_{j,-j}\Sigma_{-j,-j}^{-1}\vct Z_{-j}\big|}{\tau_j}\Big)\Big\}
\Big]
\;\le\;
\frac{8}{1-\rho_1}\,\big(1-\Phi(\lambda)\big).
\]

The remaining contribution comes from the event $\{\|\vct Z_{-j}\|_{\infty} > \lambda\}$, for which Lemma~\ref{lem:pij_as_bound} yields
\[
\pi_j(\vct Z_{-j})
\le
2\Big\{1 - \Phi\Big(\frac{\lambda(1-\rho_1)}{\tau_j}\Big)\Big\}
\]
and hence
\[
\EE\big[\pi_j(\vct Z_{-j})\big]
\le
\frac{8}{1-\rho_1}\,\big(1-\Phi(\lambda)\big)
+
2\,\PP\big(\|\vct Z_{-j}\|_{\infty} > \lambda\big)\,
\Big\{1 - \Phi\Big(\frac{\lambda(1-\rho_1)}{\tau_j}\Big)\Big\}.
\]
By the union bound and symmetry,
\[
\PP\big(\|\vct Z_{-j}\|_{\infty} > \lambda\big)
\le
\sum_{k\neq j}\PP(|Z_k|>\lambda)
= 2(m-1)\big(1-\Phi(\lambda)\big).
\]
Using again the (upper) Mills ratio on the factor
\(\big\{1 - \Phi\big(\lambda(1-\rho_1)/\tau_j\big)\big\}\),
and combining constants, we find
\[
\EE\big[\pi_j(\vct Z_{-j})\big]
\le
(1 - \Phi(\lambda))\left\{\frac{8}{1-\rho_1} + 4(m-1)\right\} \;\le\; (1 - \Phi(\lambda))\cdot 
\frac{12 m}{1-\rho_1},
\]
which yields the first refined bound in the lemma.

Refined second moment bound under $\vct\theta_{-j} = \vct 0$.
Repeating the same calculation with $L=2$ in the display above, we obtain
\[
\EE\Big[
\one\big(\|\vct Z_{-j}\|_{\infty} < \lambda\big)\cdot
4\Big\{1 - \Phi\Big(\frac{\lambda - \big|\Sigma_{j,-j}\Sigma_{-j,-j}^{-1}\vct Z_{-j}\big|}{\tau_j}\Big)\Big\}^2
\Big]
\;\le\;
\frac{4}{\pi (1-\rho_1)^2\lambda^2}
\exp\left\{-\frac{\lambda^2}{2-\tau_j^2}\right\}.
\]
Using Lemma~\ref{lem:pij_as_bound}, we have
\[
\EE\big[\pi_j^2(\vct Z_{-j})\big]
\le
\frac{4}{\pi (1-\rho_1)^2\lambda^2}
\exp\left\{-\frac{\lambda^2}{2-\tau_j^2}\right\}
+ 4\,\PP\big(\|\vct Z_{-j}\|_{\infty} > \lambda\big)\,
\Big\{1 - \Phi\Big(\frac{\lambda(1-\rho_1)}{\tau_j}\Big)\Big\}^2.
\]

Again, the Mills ratio bound gives
\[
\Big\{1 - \Phi\Big(\frac{\lambda(1-\rho_1)}{\tau_j}\Big)\Big\}^2
\le
\frac{\tau_j^2}{(1-\rho_1)^2\lambda^2} \exp\left\{-\frac{(1-\rho_1)^2}{\tau_j^2}\lambda^2\right\}.
\]
and
\[
\frac{1}{\lambda} \exp\left\{-\frac{\lambda^2}{2-\tau_j^2}\right\}
= \frac{1}{\lambda} \exp\left\{-\frac{\lambda^2}{2}\right\} \cdot \exp\left\{-\frac{\tau_j^2 \lambda^2}{2(2-\tau_j^2)}\right\} 
\le 6 \,\big(1-\Phi(\lambda)\big) \cdot \exp\left\{-\frac{\tau_j^2 \lambda^2}{2(2-\tau_j^2)}\right\} 
\]
Combining these bounds, for $\lambda > 1$, we obtain
\begin{align*}
\EE\big[\pi_j^2(\vct Z_{-j})\big]
&\le
\frac{2}{(1-\rho_1)^2\lambda^2}
\exp\left\{-\frac{\lambda^2}{2-\tau_j^2}\right\}
+ 
\frac{8(m-1)\big(1-\Phi(\lambda)\big)\,\tau_j^2}{(1-\rho_1)^2\lambda^2}
\exp\left\{-\frac{(1-\rho_1)^2}{\tau_j^2}\lambda^2\right\}\\
&\le
\left\{
\frac{12}{(1-\rho_1)^2\lambda}
+
\frac{8(m-1)\tau_j^2}{(1-\rho_1)^2\lambda^2}
\right\}
\big(1-\Phi(\lambda)\big)\,
\exp\left\{-\left(
\frac{\tau_j^2}{2(2-\tau_j^2)}
\wedge
\frac{(1-\rho_1)^2}{\tau_j^2}
\right)\lambda^2\right\} \\
&\le
\frac{20m}{(1-\rho_1)^2\lambda}
\big(1-\Phi(\lambda)\big)\,
\exp\left\{-\left(
\frac{\tau_j^2}{2(2-\tau_j^2)}
\wedge
\frac{(1-\rho_1)^2}{\tau_j^2}
\right)\lambda^2\right\}.
\end{align*}

Finally, using $\tau_{\min}^2 \le \tau_j^2 \le \tau_{\max}^2$ in Assumption~A3, we have
\[
\EE[\pi_j^2(\vct Z_{-j})]
\;\le\; (1 - \Phi(\lambda))\cdot 
\frac{20m}{(1- \rho_1)^2\lambda}
\exp\left\{-\left( \frac{(1 - \rho_1)^2}{\tau_{\max}^2}\wedge 
\frac{\tau_{\min}^2}{2(2 - \tau_{\min}^2)}\right)\lambda^2\right\}.
\]

\end{proof}

\begin{lemma}\label{lem:bivariate_gaussian}
Assume $\vct\theta = \vct 0$. Under Assumption~A3, for any $j\neq k$,
\[
\PP(|Z_j| > \lambda, |Z_k| > \lambda)
\le 4 \Big(1 - \Phi\Big(\sqrt{\frac{2}{1 + \rho_\infty}}\,\lambda\Big)\Big).
\]
\end{lemma}
\begin{proof}
When $\vct\theta = \vct 0$,
\[
\begin{bmatrix}
Z_j\\ Z_k
\end{bmatrix}
\sim \cN\!\left(
\begin{bmatrix}
0\\ 0
\end{bmatrix},
\begin{bmatrix}
1 & \Sigma_{jk}\\ \Sigma_{jk} & 1
\end{bmatrix}
\right).
\]
Note
\[
\PP(|Z_j| > \lambda, |Z_k| > \lambda) 
\le \PP(|Z_j+Z_k|>2\lambda)+\PP(|Z_j-Z_k|>2\lambda).
\]
Since $Z_j + Z_k\sim \cN(0, 2(1 + \Sigma_{jk}))$, by symmetry, we have
\[
\PP(|Z_j + Z_k| > 2\lambda)
= 2 \pth{1 - \Phi\Big(\sqrt{\tfrac{2}{1 + \Sigma_{jk}}}\,\lambda\Big)}.
\]
Similarly $Z_j - Z_k\sim \cN(0, 2(1 - \Sigma_{jk}))$, we have
\[
\PP(|Z_j - Z_k| > 2\lambda)
= 2 \pth{1 - \Phi\Big(\sqrt{\tfrac{2}{1 - \Sigma_{jk}}}\,\lambda\Big)}.
\]
Assumption~A3 gives $|\Sigma_{jk}| \le \rho_\infty$.
Thus
\[
\PP(|Z_j| > \lambda, |Z_k| > \lambda)
\le 4\Big(1 - \Phi\Big(\sqrt{\tfrac{2}{1 + \rho_\infty}}\,\lambda\Big)\Big).
\]
\end{proof}

\begin{lemma}\label{lem:Pthetaj_thetak_nonzero}
Assume $\vct\theta = \vct 0$ and $\lambda > 1$. Under Assumption~A3,
\[
\PP(\|\hat{\vct\theta}\|_0 \ge 2)
\le (1 - \Phi(\lambda))\cdot 12 m^2\exp\big\{-\zeta(\rho_\infty)\lambda^2\big\},
\]
where
\[
\zeta(\rho_\infty) = \eta(\rho_\infty) + \tfrac{1}{2}\eta^2(\rho_\infty), 
\qquad 
\eta(\rho_{\infty}) = \frac{\sqrt{2} - \sqrt{1 + \rho_\infty}}{\sqrt{2} + \sqrt{1 + \rho_\infty}}.
\]
\end{lemma}
\begin{proof}
By Lemma~\ref{lem:thetaj_thetak_nonzero},
\[
\PP(\|\hat{\vct\theta}\|_0 \ge 2)
\le \PP(\|\vct Z\|_{\infty} > (1 + \eta)\lambda)
 + \sum_{k_1 < k_2}\PP\Big(|Z_{k_1}|, |Z_{k_2}|\in \big((1-\eta)\lambda,(1+ \eta)\lambda\big)\Big).
\]
Using the union bound,
\[
\PP(\|\vct Z\|_{\infty} > (1 + \eta)\lambda)
\le \sum_{k=1}^m \PP(|Z_k| > (1 + \eta)\lambda)
= 2m\big\{1 - \Phi((1 + \eta)\lambda)\big\}.
\]
By Lemma~\ref{lem:bivariate_gaussian},
\begin{align*}
&\PP\Big(|Z_{k_1}|, |Z_{k_2}|\in \big((1-\eta)\lambda,(1+ \eta)\lambda\big)\Big)\\
&\quad \le \PP\big(|Z_{k_1}| > (1-\eta)\lambda, |Z_{k_2}|>(1 - \eta)\lambda \big)\\
&\quad \le 4\Big(1 - \Phi\Big(\sqrt{\tfrac{2}{1 + \rho_\infty}}(1 - \eta)\lambda\Big)\Big).
\end{align*}
The choice of $\eta = \eta(\rho_\infty)$ yields
\[
\sqrt{\tfrac{2}{1 + \rho_\infty}}\,(1 - \eta) 
= 1 + \eta.
\]
Combining the bounds, we have
\[
\PP(\|\hat{\vct\theta}\|_0 \ge 2)
\le 2m^2\Big\{1 - \Phi\big((1 + \eta)\lambda\big)\Big\}.
\]
Applying Proposition~\ref{prop:mill_ratio},
\begin{align*}
\PP(\|\hat{\theta}\|_0 \ge 2)&\le \frac{2m^2}{(1 + \eta)\lambda}\exp\left\{-\frac{(1 + \eta)^2}{2}\lambda^2\right\}\\
& \le 2m^2\cdot \frac{1}{\lambda}\exp\left\{-\frac{\lambda^2}{2}\right\}\exp\left\{-\left[\eta + \frac{1}{2}\eta^2\right]\lambda^2\right\}\\
& \le (1 - \Phi(\lambda))\cdot 12m^2\exp\left\{-\zeta\,\lambda^2\right\},
\end{align*}
where the last line uses the condition $\lambda > 1$ and the Mill's ratio lower bound.
\end{proof}

\begin{lemma}\label{lem:Pthetaj_nonzero_null_coord}
Under Assumption~A3, for any $j$ with $\theta_j = 0$,
\[
\PP(\hat{\theta}_j\neq 0)
\le \frac{2}{(1 - \rho_1)\lambda}
\exp\left\{-\frac{(1 - \rho_1)^2}{2\,\tau_{\max}^2}\lambda^2\right\}.
\]
\end{lemma}
\begin{proof}
When $\theta_j = 0$,
\[
\PP(\hat{\theta}_j \neq 0) = \EE[\pi_j(\vct Z_{-j})].
\]
The result follows directly from the first bound in Lemma~\ref{lem:pij_moments} with $L=1$.
\end{proof}

\begin{lemma}\label{lem:Pthetaj_nonzero_null_block}
Assume $\vct\theta = \vct 0$. Under Assumption~A3, for any $j\in [m]$,
\[
\left|\frac{\PP(\hat{\theta}_j \neq 0)}{2(1 - \Phi(\lambda))} - 1 \right|
\le 24 m \, \exp\left\{-\frac{1 - \rho_{\infty}}{2(1+ \rho_\infty)}\lambda^2\right\}
 + 2m^2 \exp\left\{-\zeta(\rho_\infty)\lambda^2\right\}.
\]
\end{lemma}
\begin{proof}
By Lemma~\ref{lem:thetaj_nonzero},
\begin{align*}
\PP(\hat{\theta}_j \neq 0)
&\ge \PP(|Z_j| > \lambda, \|\vct Z_{-j}\|_{\infty} < \lambda)\\
&= \PP(|Z_j| > \lambda) - \PP(|Z_j| > \lambda, \|\vct Z_{-j}\|_{\infty} > \lambda)\\
&= 2(1 - \Phi(\lambda)) - \PP(|Z_j| > \lambda, \|\vct Z_{-j}\|_{\infty} > \lambda).
\end{align*}
By the union bound, Lemma~\ref{lem:bivariate_gaussian}, and Proposition~\ref{prop:mill_ratio}, for $\lambda>1$,
\begin{align*}
 & \PP(|Z_j| > \lambda, \|\vct Z_{-j}\|_{\infty} > \lambda) \le \sum_{k\neq j}\PP(|Z_j| > \lambda, |Z_k| > \lambda)\\
 & \le 4m \left(1 - \Phi\left(\sqrt{\frac{2}{1 + \rho_\infty}}\lambda\right)\right) \le 4m \sqrt{\frac{1 + \rho_{\infty}}{2}} \cdot \frac{1}{\lambda}\exp\left\{-\frac{\lambda^2}{1 + \rho_{\infty}}\right\}\\
 & \le 4m \cdot \frac{1}{\lambda}\exp\left\{-\frac{\lambda^2}{2}\right\} \cdot \exp\left\{-\frac{1 - \rho_{\infty}}{2 (1+ \rho_\infty)}\lambda^2\right\}\\
 & \le (1 - \Phi(\lambda))\cdot 24m \cdot \exp\left\{-\frac{1 - \rho_{\infty}}{2 (1+ \rho_\infty)}\lambda^2\right\}.
\end{align*}
Thus
\begin{equation}\label{eq:Pthetaj_nonzero_lower}
\frac{\PP(\hat{\theta}_j \neq 0)}{2(1 - \Phi(\lambda))} - 1 
\ge -12m \exp\left\{-\frac{1 - \rho_{\infty}}{2(1+ \rho_\infty)}\lambda^2\right\}.
\end{equation}

For the upper bound, Lemma~\ref{lem:thetaj_zero} gives
\[
\PP(\hat{\theta}_j \neq 0) = \PP\big(|Z_j - \Sigma_{j,-j}\hat{\vct\theta}_{-j}| > \lambda\big).
\]
If $|Z_j| < \lambda$, then $\hat{\vct\theta}_{-j}$ must be nonzero to have $|Z_j - \Sigma_{j,-j}\hat{\vct\theta}_{-j}| > \lambda$, so
\begin{align*}
\PP(\hat{\theta}_j \neq 0) 
&\le \PP(|Z_j| > \lambda)
  + \PP(\hat{\theta}_j \neq 0, |Z_j| < \lambda, \hat{\vct\theta}_{-j}\neq 0)\\
&\le \PP(|Z_j| > \lambda) + \PP(\|\hat{\vct\theta}\|_0\ge 2)\\
&\le 2 (1 - \Phi(\lambda)) 
 + (1 - \Phi(\lambda))\cdot 4m^2\exp\big\{-\zeta(\rho_\infty)\lambda^2\big\},
\end{align*}
where the last step uses Lemma~\ref{lem:Pthetaj_thetak_nonzero}. Thus
\begin{equation}\label{eq:Pthetaj_nonzero_upper}
\frac{\PP(\hat{\theta}_j \neq 0)}{2(1 - \Phi(\lambda))} - 1 
\le 2m^2 \exp\big\{-\zeta(\rho_\infty)\lambda^2\big\}.
\end{equation}
Combining \eqref{eq:Pthetaj_nonzero_lower} and \eqref{eq:Pthetaj_nonzero_upper} yields the claim.
\end{proof}

\begin{lemma}\label{lem:Pthetaj_nonzero_strong_signal}
Fix $\xi > 0$. Assume $|\theta_j| > (1 + \xi)\lambda $ for some $\lambda > 1$ and
$\vct\theta_{-j} = \vct 0$. Under Assumption~A3,
\[
1 - \PP(\hat{\theta}_j\neq 0)
\le (1-\Phi(\xi \lambda)) + 2m(1 - \Phi(\lambda)).
\]
\end{lemma}
\begin{proof}
By Lemma~\ref{lem:thetaj_nonzero},
\begin{align*}
\PP(\hat{\theta}_j \neq 0)
&\ge \PP(|Z_j| > \lambda, \|\vct Z_{-j}\|_{\infty} < \lambda)\\
&= \PP(|Z_j| > \lambda) - \PP(|Z_j| > \lambda, \|\vct Z_{-j}\|_{\infty} > \lambda)\\
&\ge \PP(|Z_j| > \lambda) - \PP(\|\vct Z_{-j}\|_{\infty} > \lambda).
\end{align*}
By the union bound,
\[
\PP(\|\vct Z_{-j}\|_{\infty} > \lambda)
\le \sum_{k\neq j}\PP(|Z_k| > \lambda)
\le 2m(1 - \Phi(\lambda)).
\]
Since $Z_j \sim \cN(\theta_j, 1)$ and $|\theta_j|>(1+\xi)\lambda$,
\[
\PP(|Z_j| < \lambda)
= \Phi(\lambda - \theta_j) - \Phi(-\lambda - \theta_j)
\le 1-\Phi(\xi\lambda).
\]
Combining these bounds yields
\[
1 - \PP(\hat{\theta}_j\neq 0)
\le (1-\Phi(\xi \lambda)) + 2m(1 - \Phi(\lambda)).
\]
\end{proof}

\subsubsection{Bias and variance analysis}

\begin{thm}\label{thm:bias_variance_single_block}
Let $W_j = \phi_j(p_j)\,\pi_j(\vct Z_{-j})$ and $W = \sum_{j=1}^m W_j$. Further let
$R = \sum_{j=1}^m \one\{\hat{\theta}_j \neq 0\}$. Assume $\lambda = \sqrt{c\log d}$, where $d$ is
sufficiently large so that $\lambda > 1$, and $m = O(d^{\omega})$ with
\[
0\le \omega < c\cdot \min\left\{\frac{1 - \rho_{\infty}}{2(1+ \rho_\infty)}, 
\frac{\zeta(\rho_{\infty})}{3}, 
\frac{(1 - \rho_1)^2}{4\,\tau_{\max}^2}, 
\frac{\tau_{\min}^2}{4(2 - \tau_{\min}^2)}\right\}.
\]
Then:
\begin{itemize}
\item For a null block in $\cB_0$,
\[
\EE[W] = \EE[R] = 2m(1 - \Phi(\lambda))(1 + o(1)),
\]
and
\[
\Var(W) = o(\EE[W]), 
\qquad 
\Var(R) = 2m(1 - \Phi(\lambda))(1 + o(1)).
\]
\item For a ``good'' non-null block in $\cB_{1g}$,
\[
\EE[W] = o(1), \quad \EE[R] = 1+ o(1),
\]
and
\[
\Var(W) = o(1), \quad \Var(R) = o(1).
\]
\end{itemize}
All $o(1)$ terms above are uniform in 
$\tau_{\min}, \tau_{\max}, \rho_1, \rho_\infty, c,$ and $\omega$.
\end{thm}
\begin{proof}
First note that for any $j$ with $\theta_j = 0$, $p_j \sim \mathrm{Unif}([0,1])$ and $p_j$ is
independent of $\vct Z_{-j}$. Thus
\begin{equation}\label{eq:Wj_first_moment_null}
\EE[W_j] = \EE[\phi_j(p_j)]\EE[\pi_j(\vct Z_{-j})] 
= \EE[\pi_j(\vct Z_{-j})] 
= \PP(\hat{\theta}_j \neq 0),
\end{equation}
where the last equality uses the iterated law of expectation and the definition of $\pi_j$. Also,
\begin{equation}\label{eq:Wj_second_moment_null}
\EE[W_j^2] 
= \EE[\phi_j^2(p_j)]\EE[\pi^2_j(\vct Z_{-j})] 
= \frac{1}{1 - \zeta}\,\EE[\pi^2_j(\vct Z_{-j})],
\end{equation}
since $\phi_j(p_j)^2 = \one\{p_j > \zeta\}/(1-\zeta)^2$ and $p_j\sim\mathrm{Unif}[0,1]$.

\medskip
\noindent\textbf{Case 1. Null block.}
For a null block, Lemma~\ref{lem:Pthetaj_nonzero_null_block} and \eqref{eq:Wj_first_moment_null} give
\begin{align*}
\left|\frac{\EE[W]}{2m(1 - \Phi(\lambda))} - 1\right|
&= \left|\frac{\EE[R]}{2m(1 - \Phi(\lambda))} - 1\right| \\
&\le 24 m \exp\left\{-\frac{1 - \rho_{\infty}}{2(1+ \rho_\infty)}\lambda^2\right\}
 + 2m^2 \exp\left\{-\zeta(\rho_\infty)\lambda^2\right\}.
\end{align*}
Using $\lambda^2 = c\log d$ and $m = O(d^\omega)$, each term is of order
$d^{\omega - c(1-\rho_\infty)/2(1+\rho_\infty)}$ and $d^{2\omega - c\zeta(\rho_\infty)}$, which both
tend to $0$ under the condition on $\omega$. This yields
\[
\EE[W] = \EE[R] = 2m(1 - \Phi(\lambda))(1 + o(1)).
\]

For the variance of $W$, by \eqref{eq:Wj_second_moment_null},
\[
\Var(W) \le m\sum_{j=1}^m\Var(W_j)
\le m\sum_{j=1}^m\EE(W_j^2)
\le \frac{m}{1-\zeta}\sum_{j=1}^m\EE[\pi_j^2(\vct Z_{-j})].
\]
Lemma~\ref{lem:pij_moments} gives
\begin{align*}
\frac{\Var(W)}{2m (1 - \Phi(\lambda))}
&\preceq \frac{m^2}{\lambda} \exp\left\{-\left( \frac{(1 - \rho_1)^2}{\tau_{\max}^2}\wedge
\frac{\tau_{\min}^2}{2(2 - \tau_{\min}^2)}\right)\lambda^2\right\}\\
& \preceq \exp\left\{-\left[c\left( \frac{(1 - \rho_1)^2}{\tau_{\max}^2}\wedge \frac{\tau_{\min}^2}{2(2 - \tau_{\min}^2)}\right) - 2\omega\right]\log d\right\}
= o(1),
\end{align*}
again using the constraint on $\omega$. Thus $\Var(W) = o(\EE[W])$.

For $R$, we use
\[
\Var(R) \le \EE[R^2]
= \sum_{j=1}^m\PP(\hat{\theta}_j\neq 0)
 + \sum_{j\neq k}\PP(\hat{\theta}_j \neq 0, \hat{\theta}_k \neq 0)
= \EE[R] + \sum_{j\neq k}\PP(\hat{\theta}_j \neq 0, \hat{\theta}_k \neq 0).
\]
Lemma~\ref{lem:Pthetaj_thetak_nonzero} gives
\[
\frac{\sum_{j\neq k}\PP(\hat{\theta}_j \neq 0, \hat{\theta}_k \neq 0)}{2m(1-\Phi(\lambda))}
\preceq m^3\exp\{-\zeta(\rho_\infty)\lambda^2\}
= d^{\,3\omega - c\zeta(\rho_\infty)} = o(1),
\]
again under the condition on $\omega$. Combining this with the expression for $\EE[R]$ yields
$\Var(R) = 2m(1-\Phi(\lambda))(1+o(1))$.

\medskip
\noindent\textbf{Case 2. Good non-null block.}
Assume, without loss of generality, that $\theta_1 \neq 0$ with
$|\theta_1| > (1 + \xi)\lambda$ and $\theta_j = 0$ for all $j > 1$.

For $j\ge 2$, Lemma~\ref{lem:Pthetaj_nonzero_null_coord} and
\eqref{eq:Wj_first_moment_null} give
\begin{align}
\sum_{j=2}^{m}\EE[W_j] = \sum_{j=2}^{m}\PP(\hat{\theta}_j \neq 0)
&\preceq \frac{m}{\lambda}\exp\left\{-\frac{(1 - \rho_1)^2}{2\,\tau_{\max}^2}\lambda^2\right\}
\nonumber\\
&\preceq \exp\left\{\left(c\frac{(1 - \rho_1)^2}{2 \tau_{\max}^2} - \omega\right)\log d\right\}
= o(1).\label{eq:nonnull_block_nulls}
\end{align}
For $j = 1$,
\[
\EE[W_1] \;\le\; \EE[\phi_j(p_j)] \;\preceq\; \PP(p_1 > \zeta) \;=\; o(1)
\]
by Mill's bound. Furthermore, by Lemma~\ref{lem:Pthetaj_nonzero_strong_signal},
\begin{equation}\label{eq:1-Pthetaj_nonzero}
\big|1 - \PP(\hat{\theta}_1\neq 0)\big| 
\le (1-\Phi(\xi \lambda)) + 2m(1 - \Phi(\lambda)) = o(1).
\end{equation}
Thus
\[
\EE[W] = o(1),
\qquad
|\EE[R] - 1| = o(1).
\]

For the variances, Lemma~\ref{lem:pij_moments} yields
\begin{align*}
\Var(W) 
&\le \EE[W^2]
\le m\sum_{j=1}^m\EE[\pi_j^2(\vct Z_{-j})] \\
&\preceq \frac{m^2}{\lambda^2}
\exp\left\{-\frac{2(1 - \rho_1)^2}{2\,\tau_{\max}^2}\lambda^2\right\}
\preceq \exp\left\{2\left(c\frac{(1 - \rho_1)^2}{2\tau_{\max}^2} - \omega\right)\log d\right\}
= o(1),
\end{align*}
under the stated constraint on $\omega$.

For $R$, write
\[
\Var(R)\le 2\Var(\one\{\hat{\theta}_1\neq 0\})
 + 2\Var\Big(\sum_{j=2}^m \one\{\hat{\theta}_j\neq 0\}\Big).
\]
By the same reasoning as in \eqref{eq:nonnull_block_nulls}, one has
\[
\Var\Big(\sum_{j=2}^m \one\{\hat{\theta}_j\neq 0\}\Big) = o(1),
\]
and by \eqref{eq:1-Pthetaj_nonzero}, $\Var(\one\{\hat{\theta}_1\neq 0\}) = o(1)$ as well. Hence
$\Var(R) = o(1)$.

This completes the proof.
\end{proof}

\subsection{Proof of Theorems \ref{thm:CLT_FDR} and \ref{thm:bootstrap}}

\begin{proof}[Proof of Theorem~\ref{thm:CLT_FDR}]
We reintroduce block indices and write
\[
R = \sum_{b=1}^B R_{(b)}, \qquad 
W = \sum_{b=1}^B W_{(b)},
\]
where $R_{(b)}$ and $W_{(b)}$ are, respectively, the blockwise model size and blockwise contribution to $W$ for block $b$. In this section, all $o(1)$, $o_\PP(1)$, $O(1)$, and $O_\PP(1)$ terms are uniform given the constants in Assumptions A1-A5.

\medskip\noindent
\textbf{Step 1: Mean and variance of $R$.}

By Theorem~\ref{thm:bias_variance_single_block}, for each block $b$ we have
\[
\EE[R_{(b)}] = 
\begin{cases}
\displaystyle \frac{v_d\, m_b}{d}\,(1 + o(1)), & b\in \cB_0,\\[0.7em]
1 + o(1), & b\in \cB_{1g},\\[0.25em]
O(1), & b\in \cB_{1b},
\end{cases}
\qquad
\Var(R_{(b)}) = 
\begin{cases}
\displaystyle \frac{v_d\, m_b}{d}\,(1 + o(1)), & b\in \cB_0,\\[0.7em]
o(1), & b\in \cB_{1g},\\[0.25em]
O(1), & b\in \cB_{1b}.
\end{cases}
\]
Here the $o(1)$ terms are uniform over blocks.

Since $R = \sum_{b=1}^B R_{(b)}$, we obtain
\begin{align*}
\EE[R]
&= \sum_{b\in \cB_0} \frac{v_d m_b}{d}\,(1 + o(1))
   + \sum_{b\in \cB_{1g}} (1 + o(1))
   + O(B_{1b}).
\end{align*}
Note that
\[
\sum_{b\in \cB_0} m_b = d - \sum_{b\in \cB_{1g}\cup \cB_{1b}} m_b.
\]
By bounded block size and Assumption~A4,
\[
\Bigg|\sum_{b\in \cB_0} m_b - d\Bigg|
\;\le\; m_{\max}(B_{1g} + B_{1b})
\;=\; o(d),
\]
so
\[
\sum_{b\in \cB_0} m_b = d\,(1+o(1)).
\]
Furthermore, $B_{1b}=o(B_{1g})$ by Assumption~A4, and $B_{1g} = \Theta(v_d)$ by Assumption~A5. Hence
\begin{equation}\label{eq:E_R}
\EE[R] 
= v_d\,(1+o(1)) + B_{1g}\,(1+o(1))
= (v_d + B_{1g})(1 + o(1)).
\end{equation}

By block orthogonality (Assumption~A1), $R_{(b)}$ are independent across $b$, so
\[
\Var(R) = \sum_{b=1}^B \Var(R_{(b)}).
\]
From the blockwise bounds above and the same counting argument as for the mean,
\begin{equation}\label{eq:Var_R}
\Var(R) = v_d\,(1+o(1)) + O(B_{1g} + B_{1b})
= (v_d + B_{1g})(1 + o(1)).
\end{equation}
Assumption~A5 states that $v_d$ is of the same order as $B_{1g}$, so
\[
\Var(R) = \Theta(B_{1g}), \qquad v_d = \Theta(B_{1g}).
\]

\medskip\noindent
\textbf{Step 2: CLT for $R$.}

By the Berry–Esseen theorem for sums of independent variables,
\[
d_K\!\left(\frac{R - \EE[R]}{\sqrt{\Var(R)}},\ \cN(0,1)\right)
\;\le\; 
\frac{\sum_{b=1}^B \EE|R_{(b)} - \EE[R_{(b)}]|^3}{\Var(R)^{3/2}}.
\]
Since $0 \le R_{(b)} \le m_b \le m_{\max}$,
\[
|R_{(b)} - \EE[R_{(b)}]|^3 
\;\le\; m_{\max}\,(R_{(b)} - \EE[R_{(b)}])^2,
\]
so
\[
\sum_{b=1}^B \EE|R_{(b)} - \EE[R_{(b)}]|^3
\;\le\; m_{\max}\sum_{b=1}^B \Var(R_{(b)})
= m_{\max}\Var(R).
\]
Hence
\begin{equation} \label{eq:R_CLT}
d_K\!\left(\frac{R - \EE[R]}{\sqrt{\Var(R)}},\ \cN(0,1)\right)
\;\le\; \frac{m_{\max}\Var(R)}{\Var(R)^{3/2}}
= \frac{m_{\max}}{\sqrt{\Var(R)}}
= O\!\left(\frac{1}{\sqrt{B_{1g}}}\right) = o(1).
\end{equation}
where the last step uses Assumption A4 which implies $B_{1g}\rightarrow \infty$.

\medskip\noindent
\textbf{Step 3: Mean and variance of $W$.}

By Theorem~\ref{thm:bias_variance_single_block}, for each block $b$,
\[
\EE[W_{(b)}] = 
\begin{cases}
\displaystyle \frac{v_d m_b}{d}\,(1 + o(1)), & b\in \cB_0,\\[0.7em]
o(1), & b\in \cB_{1g},\\[0.25em]
O(1), & b\in \cB_{1b},
\end{cases}
\qquad
\Var(W_{(b)}) = 
\begin{cases}
\displaystyle \frac{v_d m_b}{d}\cdot o(1), & b\in \cB_0,\\[0.7em]
o(1), & b\in \cB_{1g},\\[0.25em]
O(1), & b\in \cB_{1b}.
\end{cases}
\]
Summing over blocks and using the same counting argument as for $R$,
\begin{equation}\label{eq:bias_variance_W}
\EE[W] = v_d(1 + o(1)), \qquad
\Var(W) = o(v_d) = o(B_{1g}),
\end{equation}
where the last equality follows from $v_d = \Theta(B_{1g})$.

\medskip\noindent
\textbf{Step 4: Approximating $\hFDR$ by $W/R$.}

Recall $\vct Z_{-(b)} = \set{\vct Z_{(b')}}_{b' \neq b}$ is the subfficient statistic except for block $b$. 
Write
\[
\hFDR = \sum_{b=1}^B\sum_{j \in [m_b]} \hFDR_{(b),j},
\]
where for $j \in [m_b]$,
\begin{align*}
\hFDR_{(b), j}
&= \phi_{(b), j}(p_{(b), j})\cdot 
   \EE_{\theta_{(b),j}=0}\!\left[ \frac{\one\{\hat\theta_{(b),j} \neq 0\}}{R} \;\middle|\; \vct Z_{(b),-j},\, \vct Z_{-(b)}\right]\\
&= \phi_{(b), j}(p_{(b), j})\cdot
   \PP_{\theta_{(b),j}=0}\!\left(\hat\theta_{(b),j} \neq 0 \mid \vct Z_{(b),-j}\right)\cdot
   \EE_{\theta_{(b),j}=0}\!\left[\frac{1}{R} \;\middle|\; \hat\theta_{(b),j} \neq 0,\ \vct Z_{(b),-j},\, \vct Z_{-(b)}\right]\\
&= \frac{\phi_{(b), j}(p_{(b), j})\,\pi_{(b),j}(\vct Z_{(b),-j})}{R + U_{(b),j}}\\
&= \frac{W_{(b),j}}{R + U_{(b),j}},
\end{align*}
where $|U_{(b),j}|\le m_b\le m_{\max}$; the perturbation $U_{(b),j}$ accounts for the fact that
perturbating $Z_{(b),j}$ changes $R$ by at most $m_b$.

Define
\[
\widetilde{\mathrm{FDR}} = \frac{W}{R}.
\]
Then
\begin{align*}
\hFDR - \widetilde{\mathrm{FDR}}
&= \sum_{b=1}^B \sum_{j \in [m_b]}
   \left(
     \frac{W_{(b),j}}{R + U_{(b),j}} - \frac{W_{(b),j}}{R}
   \right)\\
&= -\sum_{b=1}^B \sum_{j \in [m_b]}
   \frac{W_{(b),j} U_{(b),j}}{R(R+U_{(b),j})},
\end{align*}
so, on the event $\{R > 2 m_{\max}\}$,
\[
\bigl|\hFDR - \widetilde{\mathrm{FDR}}\bigr|
\le \sum_{b,j} \frac{2 m_{\max} W_{(b),j}}{R^2}
= \frac{2 m_{\max} W}{R^2}.
\]
Since $\EE[R]$, $\Var(R)$, and $\EE[W]$ are $\Theta(v_d)$, by Markov inequality,
\begin{equation}\label{eq:hFDR_proxy}
\bigl|\hFDR - \widetilde{\mathrm{FDR}}\bigr|
= O_{\PP}\!\left(\frac{1}{v_d}\right).
\end{equation}

\medskip\noindent
\textbf{Step 5: Centering and linearization.}

Let
\[
\tilde{\alpha}_d = \frac{\EE[W]}{\EE[R]}.
\]
By \eqref{eq:E_R} and \eqref{eq:bias_variance_W},
\[
\tilde{\alpha}_d
= \frac{v_d(1+o(1))}{v_d + B_{1g}}(1+o(1))
= \alpha_d (1 + o(1)).
\]

We now expand
\[
\widetilde{\text{FDR}} - \tilde{\alpha}_d
= \frac{W}{R} - \frac{\EE[W]}{\EE[R]}
= \frac{W - \EE[W]}{R}
 - \frac{(R - \EE[R])\,\EE[W]}{R\,\EE[R]}.
\]

For the first term, by Chebyshev’s inequality, \eqref{eq:bias_variance_W}, and the fact that $R$ is
of order $v_d$,
\[
\frac{W - \EE[W]}{R}
= O_{\PP}\!\left(\frac{\sqrt{\Var(W)}}{R}\right)
= o_{\PP}\!\left(\frac{1}{\sqrt{v_d}}\right).
\]
For the second term, by \eqref{eq:E_R},  \eqref{eq:Var_R}, and the Markov's inequality,
\[
\frac{\EE[R]}{R} = 1 - \frac{R - \EE[R]}{R} = 1 + o_{\PP}(1).
\]
Thus
\[
\frac{1}{R} = \frac{1}{\EE[R]} (1 + o_{\PP}(1))
\]
and
\[
\frac{(R - \EE[R])\,\EE[W]}{R\,\EE[R]}
= \tilde{\alpha}_d\;\frac{R - \EE[R]}{R}
= \tilde{\alpha}_d\;\frac{R - \EE[R]}{\EE[R]}\,(1 + o_{\PP}(1)).
\]

Putting this together,
\[
\sqrt{v_d}\,(\widetilde{\text{FDR}} - \tilde{\alpha}_d)
= -\tilde{\alpha}_d\,\sqrt{v_d}\,\frac{R - \EE[R]}{\EE[R]} + o_{\PP}(1).
\]

Multiplying and dividing by $\sqrt{\Var(R)}$ and using \eqref{eq:E_R}–\eqref{eq:Var_R}, we obtain
\begin{align*}
\sqrt{v_d}\,(\widetilde{\text{FDR}} - \tilde{\alpha}_d)
&= -\tilde{\alpha}_d\,\frac{\sqrt{v_d}\,\sqrt{\Var(R)}}{\EE[R]}\cdot
    \frac{R - \EE[R]}{\sqrt{\Var(R)}} + o_{\PP}(1)\\
&= -\tilde{\alpha}_d\,\sqrt{\alpha_d}\,
    \frac{R - \EE[R]}{\sqrt{\Var(R)}} + o_{\PP}(1),
\end{align*}
where the second equation uses
\[
\frac{\sqrt{v_d}\,\sqrt{\Var(R)}}{\EE[R]}
= \frac{\sqrt{v_d}\,\sqrt{v_d + B_{1g}}}{v_d + B_{1g}}(1+o(1))
= \sqrt{\frac{v_d}{v_d + B_{1g}}}(1+o(1))
= \sqrt{\alpha_d}\,(1+o(1)).
\]
By Assumption A5,
\begin{equation}\label{eq:alpha_d}
\alpha_d \in \left[\chi, 1 -\chi\right] = \Theta(1), \quad \tilde{\alpha}_d = \Theta(1).
\end{equation}
Therefore,
\[
\sqrt{\frac{v_d}{\alpha_d^3}}\;(\widetilde{\text{FDR}} - \tilde{\alpha}_d)
= -\frac{\tilde{\alpha}_d}{\alpha_d}\;
   \frac{R - \EE[R]}{\sqrt{\Var(R)}} + o_{\PP}(1).
\]
Recalling that all $o_\PP(1)$ terms are uniform and $\Phi$ is Lipschitz,  
\eqref{eq:R_CLT} implies that 
\[d_{K}\left(\sqrt{\frac{v_d}{\alpha_d^3}}\;(\widetilde{\text{FDR}} - \tilde{\alpha}_d)
,\cN(0,1)\right) = o(1).\]

Finally, by \eqref{eq:hFDR_proxy},
\[
\sqrt{\frac{v_d}{\alpha_d^3}}\;(\hFDR - \tilde{\alpha}_d)
= \sqrt{\frac{v_d}{\alpha_d^3}}\;(\widetilde{\text{FDR}} - \tilde{\alpha}_d)
 + o_\PP\left(\frac{1}{\sqrt{v_d}}\right),
\]
so the same limit holds:
\[
d_{K}\left(\sqrt{\frac{v_d}{\alpha_d^3}}\;(\hFDR - \tilde{\alpha}_d)
,\cN(0,1)\right) = o(1).
\]
\end{proof}

\begin{proof}[Proof of Theorem~\ref{thm:bootstrap}]
We work throughout under the assumptions of Theorem~\ref{thm:CLT_FDR}, with the only
change being that the Lasso penalty defining the OLS-after-Lasso model is
$\bar{\lambda} = \bar{c}\lambda$ with $\bar{c} > 1$. We will show that, with high probability,
the estimated OLS-after-Lasso model satisfies the same structural assumptions (A1–A5)
with $\hat{\vct\theta}^{(*)}$ (the OLS-after-Lasso coefficients) playing the role of the true
coefficients. Conditional on such a ``good'' data event, we can then apply
Theorem~\ref{thm:CLT_FDR} to the bootstrap procedure.

\medskip\noindent
\textbf{Step 1: Null blocks contribute negligibly under the larger penalty.}

By Theorem~\ref{thm:bias_variance_single_block}, applied with penalty level $\bar{\lambda}$,
for any null block $b\in \cB_0$ we have
\[
\EE[R_{(b)}] = 2 m_b\bigl(1 - \Phi(\bar{\lambda})\bigr)\,(1 + o(1)).
\]
For sufficiently large $d$ we have $\bar{\lambda} = \bar{c}\lambda > \lambda > 1$.
By the upper Mill's ratio bound (Proposition~\ref{prop:mill_ratio}),
\[
1 - \Phi(\bar{\lambda})
\;\le\;
\frac{1}{\bar{\lambda}}\exp\!\left(-\frac{\bar{\lambda}^2}{2}\right)
= \frac{1}{\bar{c}\lambda}\exp\!\left(-\frac{\bar{c}^2\lambda^2}{2}\right).
\]
Write
\[
1 - \Phi(\lambda) = \frac{v_d}{2d},
\]
as in Assumption~A5. Using again the Mill's ratio approximation, we have
\[
\lambda^2 = 2\log(d/v_d) + O(\log(\lambda)).
\]
Therefore,
\[
\exp\!\left(-\frac{\bar{c}^2\lambda^2}{2}\right)
= \exp\!\bigl(-\bar{c}^2\log(d/v_d)\bigr) \cdot O(\lambda)
= (d/v_d)^{-\bar{c}^2} \cdot O(\lambda).
\]
Therefore,
\[
1 - \Phi(\bar{\lambda})
= O\!\left(\Bigl(\frac{v_d}{d}\Bigr)^{\bar{c}^2}\right).
\]

Summing over all null blocks and using $\sum_{b\in\cB_0} m_b \asymp d$, we obtain
\[
\EE\Biggl[\;\sum_{b\in \cB_0} R_{(b)}\Biggr]
= O\!\left(d \cdot \Bigl(\frac{v_d}{d}\Bigr)^{\bar{c}^2}\right)
= O\!\left(\frac{v_d^{\bar{c}^2}}{d^{\bar{c}^2-1}}\right)
= o(v_d),
\]
since $v_d/d\to 0$ and $\bar{c}^2>1$.

\medskip\noindent
\textbf{Step 2: Good blocks are recovered (support-wise) with high probability.}

Without loss of generality, for each good non-null block $b\in\cB_{1g}$ we relabel
coordinates so that
\[
\theta_{(b),1}\neq 0, \qquad \theta_{(b),-1} = \vct 0.
\]
Note that
\[
|\theta_j| > (1 + \xi)\lambda = \frac{1+\xi}{\bar{c}}\bar{\lambda} > \bar{\lambda}.
\]

Apply the single-block analysis (Theorem~\ref{thm:bias_variance_single_block} and its lemmas)
with penalty $\bar{\lambda}$; the same structure carries through since the within-block
regularity assumption A3 is unchanged and $\bar{\lambda}\to\infty$ as $d\to\infty$.
In particular, by equations~\eqref{eq:nonnull_block_nulls} and
\eqref{eq:1-Pthetaj_nonzero} (applied with $\lambda$ replaced by $\bar{\lambda}$),
we have, for each such good block,
\[
\PP\bigl(\hat{\theta}_{(b),-1}^{\bar{\lambda}}\neq 0\bigr) = o(1),
\qquad
\PP\bigl(\hat{\theta}_{(b),1}^{\bar{\lambda}} = 0\bigr) = o(1),
\]
where $\hat{\vct\theta}^{\bar{\lambda}}$ is the Lasso estimator at level $\bar{\lambda}$.
Hence
\[
\EE\Biggl[\sum_{b\in \cB_{1g}}
  \one\Bigl(\hat{\theta}_{(b),-1}^{\bar{\lambda}}\neq 0
       \ \text{or}\ 
       \hat{\theta}_{(b), 1}^{\bar{\lambda}} = 0\Bigr)\Biggr]
= o(B_{1g}).
\]

Combining this with the bound from Step~1 and using Markov's inequality, there exists
a deterministic sequence $\iota_d \to 0$ and an event $\cE_d$ with
\[
\PP(\cE_d) = 1 - o(1),
\]
such that on $\cE_d$,
\[
\sum_{b\in \cB_0} R_{(b)}^{\bar{\lambda}} \le \iota_d v_d,
\qquad
\sum_{b\in \cB_{1g}}
\one\Bigl(\hat{\theta}_{(b),-1}^{\bar{\lambda}}\neq 0
       \ \text{or}\ 
       \hat{\theta}_{(b), 1}^{\bar{\lambda}} = 0\Bigr)
\le \iota_d B_{1g},
\]
where $R_{(b)}^{\bar{\lambda}}$ is the blockwise model size under Lasso
with penalty $\bar{\lambda}$.

Define the set of ``error blocks'' as
\[
\cB_{\mathrm{err}}
=
\bigl\{b\in \cB_0:\ R_{(b)}^{\bar{\lambda}} > 0\bigr\}
\;\cup\;
\Bigl\{b\in \cB_{1g}:\ 
  \hat{\theta}_{(b),-1}^{\bar{\lambda}}\neq 0 \ \text{or}\ 
  \hat{\theta}_{(b), 1}^{\bar{\lambda}} = 0
\Bigr\}.
\]
Then on $\cE_d$,
\[
|\cB_{\mathrm{err}}|
\;\le\; \iota_d (v_d + B_{1g})
\;=\; o(B_{1g}),
\]
using Assumption~A5 to relate $v_d$ and $B_{1g}$.

We now define a perturbed block partition:
\[
\tilde{\cB}_0 = \cB_0 \setminus \cB_{\mathrm{err}}, \qquad
\tilde{\cB}_{1g} = \cB_{1g} \setminus \cB_{\mathrm{err}}, \qquad
\tilde{\cB}_{1b} = \cB_{1b} \cup \cB_{\mathrm{err}}.
\]
On $\cE_d$, we then have
\[
|\tilde{\cB}_0| = B_0 + o(B_{1g}) = B_0(1 + o(1)),
\]
\[
|\tilde{\cB}_{1g}| = B_{1g} + o(B_{1g}) = B_{1g}(1 + o(1)),
\]
\[
|\tilde{\cB}_{1b}| = B_{1b} + o(B_{1g}) = o(B_{1g}),
\]
so the sparsity/purity structure of Assumption~A4 is preserved up to $o(B_{1g})$ blocks. On $\cE_d$, the OLS-after-Lasso estimator $\hat{\vct\theta}^{(*)}$ has the \emph{same support}
as $\hat{\vct\theta}^{\bar{\lambda}}$ across the null and good non-null blocks.

\medskip\noindent
\textbf{Step 3: Lower bound on the estimated signal in good blocks.}

For any good block $b\in \tilde{\cB}_{1g}$, on $\cE_d$,
the Lasso at level $\bar{\lambda}$ selects only coordinate 1 in block $b$.
The OLS-after-Lasso coefficient on that coordinate is simply the least squares coefficient
in a univariate regression, and hence
\[
\hat{\theta}_{(b),1}^{(*)} = Z_{(b),1},
\]
where $Z_{(b),1} = \vct X_{(b),1}^\top\vct Y$.

Fix any $\iota > 0$. Using a union bound and the Gaussian tail bound, we have
\begin{align*}
\PP\Biggl(
\max_{b\in \cB_{1g}}
  \bigl|\hat{\theta}_{(b),1}^{(*)} - \theta_{(b),1}\bigr|
  > \sqrt{2(1 + \iota)\log B_{1g}}
\Biggr)
&\le
B_{1g} \exp\bigl(-(1 + \iota)\log B_{1g}\bigr)\\
&= B_{1g}^{-\iota}
\;=\; o(1),
\end{align*}
since $B_{1g}\to\infty$. Thus there exists an event $\tilde{\cE}_d$ with
\[
\PP(\tilde{\cE}_d) = 1 - o(1),
\]
such that on $\tilde{\cE}_d$,
\[
\max_{b\in \cB_{1g}}
  \bigl|\hat{\theta}_{(b),1}^{(*)} - \theta_{(b),1}\bigr|
\;\le\;
\sqrt{2(1 + \iota)\log B_{1g}}.
\]
Therefore, on $\tilde{\cE}_d$,
\[
\min_{b\in \cB_{1g}}
|\hat{\theta}_{(b),1}^{(*)}|
\;\ge\;
(1 + \xi)\lambda - \sqrt{2(1 + \iota)\log B_{1g}}.
\]

We now relate $\lambda$ and $\log B_{1g}$.
By the Mill's ratio expansion, we have
\[
\lambda^2 = 2 \log\Bigl(\frac{d}{v_d}\Bigr)\,(1 + o(1)).
\]
By Assumption~A5, $v_d = \Theta(B_{1g})$. For sufficiently large $d$, it follows that
\[
\lambda^2 \in \left[(2 - 2 \iota)\log\Bigl(\frac{d}{B_{1g}}\Bigr),\ 
                   (2 + 2 \iota)\log\Bigl(\frac{d}{B_{1g}}\Bigr)\right].
\]
Moreover, recall
\[
B_{1g} = O\!\bigl(d^{\nu}\Gamma_d\bigr), \qquad \Gamma_d = o(d^a)\ \text{for all }a>0,
\]
we write
\[
\log B_{1g} = \nu \log d + \log \Gamma_d + o(\log d), \qquad
\log ({d}/{B_{1g}})
= (1 - \nu)\log d - \log \Gamma_d + o(\log d).
\]
Therefore
\begin{align*}
& \quad \frac{\min_{b\in \cB_{1g}}|\hat{\theta}_{(b),1}^{(*)}|}{\bar{\lambda}}\\
&> \frac{(1 + \xi)\sqrt{2 - 2\iota}\cdot \sqrt{\log (d / B_{1g})} - \sqrt{2(1 + \iota)\log B_{1g}}}{\bar{c}\sqrt{2 + 2\iota}\sqrt{\log (d / B_{1g})}}\\
& = \frac{(1 + \xi)\sqrt{1 - \iota}\cdot \sqrt{(1 -  \nu)\log d - \log \Gamma_d + o(\log d)} - \sqrt{(1 + \iota)(\nu \log d + \log \Gamma_d + o(\log d))}}{\bar{c}\sqrt{1 + \iota}\sqrt{(1 - \nu)\log d - \log \Gamma_d + o(\log d)}}.
\end{align*}
Since $\log \Gamma_d = o(\log d)$, for a sufficiently large $d$,
\begin{align*}
\frac{\min_{b\in \cB_{1g}}|\hat{\theta}_{(b),1}^{(*)}|}{\bar{\lambda}}
&>\;
\frac{(1 + \xi)\sqrt{1 - \nu}\,\sqrt{1 - \iota}
      - \sqrt{\nu}\,(1 + \iota)}
     {\bar{c}\,\sqrt{1 + \iota}\,\sqrt{1 - \nu}}.
\end{align*}
We now choose $\iota>0$ small enough so that
\[
1 + \xi
> \frac{1 + \iota}{\sqrt{1 - \iota}}\left(\sqrt{\frac{\nu}{1 - \nu}} + \bar{c}\right),
\]
which is possible precisely because
\[
\xi > \sqrt{\frac{\nu}{1 - \nu}} + \bar{c} - 1
\]
by assumption. For such an $\iota$, the right-hand side of the previous display
exceeds $\sqrt{1 + \iota} > 1$, and therefore, on $\tilde{\cE}_d$,
\[
\frac{\min_{b\in \cB_{1g}}|\hat{\theta}_{(b),1}^{(*)}|}{\bar{\lambda}} > 1.
\]

Combining with the support-error control from Step~2, we obtain that on
$\cE_d\cap \tilde{\cE}_d$ (an event with probability $1 - o(1)$), every good block in
$\tilde{\cB}_{1g}$ has a single non-zero coordinate whose magnitude exceeds $\bar{\lambda}$,
and all other coordinates in those blocks are zero. Moreover, the total number of ``bad''
blocks $\tilde{\cB}_{1b}$ remains $o(B_{1g})$.

\medskip\noindent
\textbf{Step 4: Applying the CLT in the bootstrap world.}

On the event $\cE_d\cap \tilde{\cE}_d$, the OLS-after-Lasso model with coefficients
$\hat{\vct\theta}^{(*)}$ and block partition
$(\tilde{\cB}_0,\tilde{\cB}_{1g},\tilde{\cB}_{1b})$ satisfies
Assumptions~A1–A5 (with unchanged constants
$\tau_{\min},\tau_{\max},\rho_1,\rho_\infty$, and with the same growth rates for
$B_{1g}$ and the number of bad blocks). Therefore, if we now treat
$\hat{\vct\theta}^{(*)}$ as the ``true'' parameter vector in a parametric bootstrap
model, Theorem~\ref{thm:CLT_FDR} applies to the bootstrap estimator $\hFDR^{(*)}$ with
some deterministic centering
\[
\alpha_d^{(*)}(\vct Z,\mat X) \in (0,1),
\]
which depends on the data only through $\hat{\vct\theta}^{(*)}$ and the design.

Concretely, Theorem~\ref{thm:CLT_FDR}, together with the uniformity of $o(1)$ terms mentioned at the beginning of the section, implies that there exists $u_d = o(1)$ such that 
\[d_K\left(\mathcal{L}\left(\sqrt{\frac{v_d}{\alpha_d^3}}\;\bigl(\hFDR^{(*)} - \alpha_d^{(*)}\bigr)\mid \vct X, \vct Y\right),  \cN(0,1)\right)\le u_d, \quad \text{if }(\vct X, \vct Y)\in \cE_d\cap \tilde{\cE}_d,\]
where $\mathcal{L}(\cdot\mid \cdot)$ denote the conditional law. 
The proof is then completed by noting that $\PP(\cE_d\cap \tilde{\cE}_d)=1-o(1)$.
\end{proof}

\begin{proof} [Proof of Lemma~\ref{lem:equicorrelated}]
A standard computation shows that
\[
\Sigma_{(b);-j,-j}^{-1}
= \frac{1}{1 - \rho_b}
\left(
\mat I_{m_b-1}
- \frac{\rho_b}{(m_b-2)\rho_b + 1}\,\vct 1_{m_b-1}\vct 1_{m_b-1}^\tran
\right).
\]
Therefore
\[
1 - \Sigma_{(b);j,-j}\Sigma_{(b);-j,-j}^{-1}\Sigma_{(b);-j,j}
= 1 - \frac{(m_b - 1)\rho_b^2}{(m_b - 2)\rho_b + 1}
= (1 - \rho_b)\,\frac{(m_b - 1)\rho_b + 1}{(m_b - 2)\rho_b + 1},
\]
and
\[
\Sigma_{(b);-j,-j}^{-1}\Sigma_{(b);-j,j}
= \frac{\rho_b}{(m_b - 2)\rho_b + 1}\,\vct 1_{m_b - 1}.
\]
\end{proof}

\section{Extended numerical results} \label{app:extended_results}

Figure \ref{fig:hiv_all} shows the extended results of $\hFDR$ along with estimated one-standard-error bars in Section~\ref{sec:real_HIV} and Figure \ref{fig:protein_all} shows the extended results for Section~\ref{sec:real_protein}.

\begin{figure}[tbp]
    \centering
    \begin{subfigure}[b]{0.24\linewidth}
      \centering
      \includegraphics[width=\linewidth]{figs/HIV_drug_expr/1_1-APV.pdf}
      \caption{APV, $201$, $767$}
    \end{subfigure}    
    \begin{subfigure}[b]{0.24\linewidth}
      \centering
      \includegraphics[width=\linewidth]{figs/HIV_drug_expr/1_2-ATV.pdf}
      \caption{ATV, $147$, $328$}
    \end{subfigure} 
    \begin{subfigure}[b]{0.24\linewidth}
      \centering
      \includegraphics[width=\linewidth]{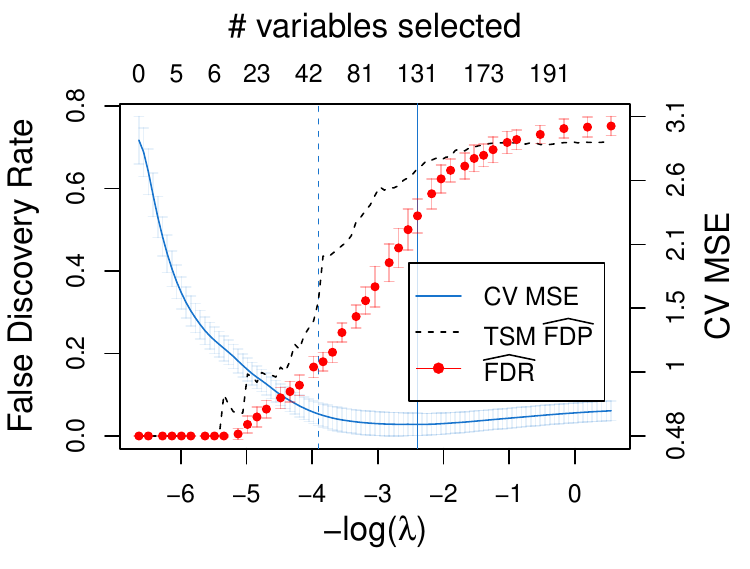}
      \caption{IDV, $206$, $825$}
    \end{subfigure} 
    \begin{subfigure}[b]{0.24\linewidth}
      \centering
      \includegraphics[width=\linewidth]{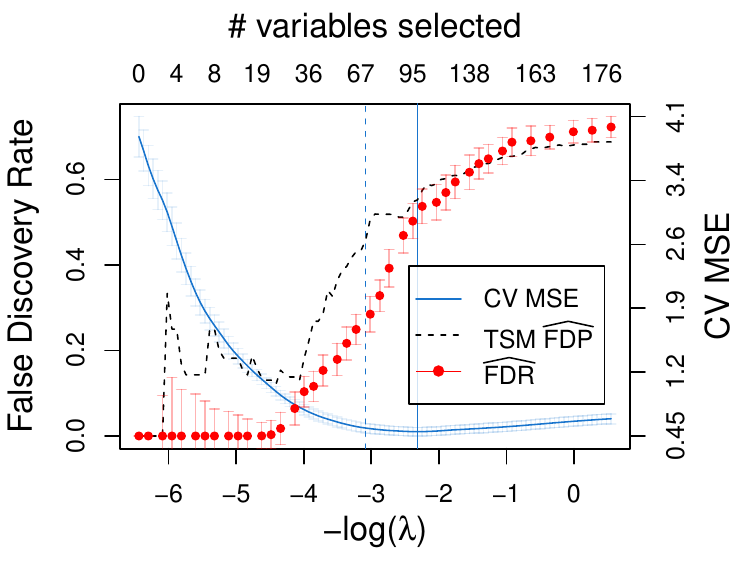}
      \caption{LPV, $148$, $515$}
    \end{subfigure} 
    
    \begin{subfigure}[b]{0.24\linewidth}
      \centering
      \includegraphics[width=\linewidth]{figs/HIV_drug_expr/1_5-NFV.pdf}
      \caption{NFV, $207$, $842$}
    \end{subfigure}    
    \begin{subfigure}[b]{0.24\linewidth}
      \centering
      \includegraphics[width=\linewidth]{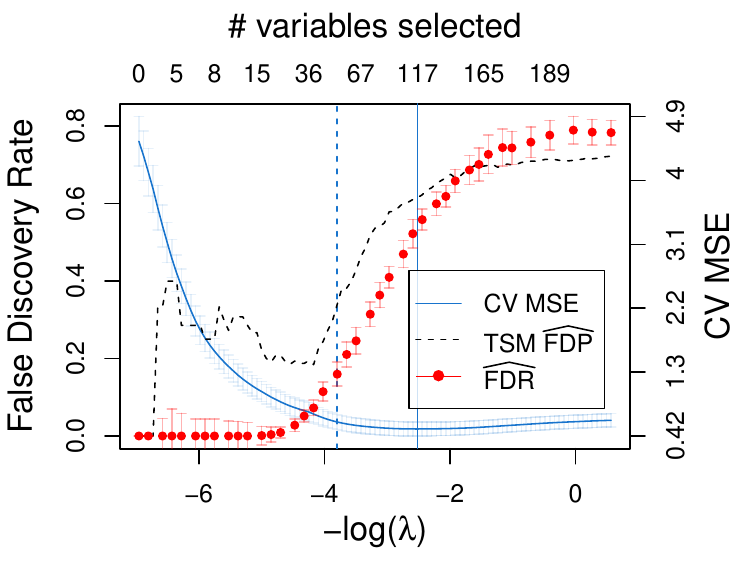}
      \caption{RTV, $205$, $793$}
    \end{subfigure} 
    \begin{subfigure}[b]{0.24\linewidth}
      \centering
      \includegraphics[width=\linewidth]{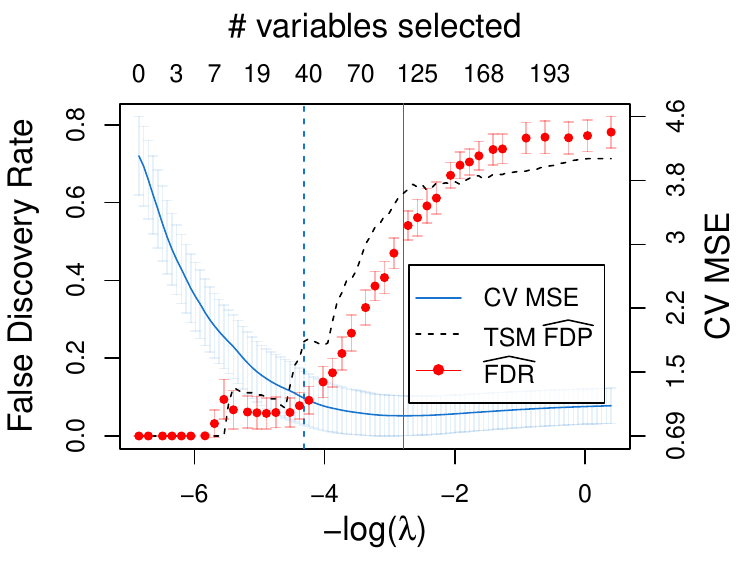}
      \caption{SQV, $206$, $824$}
    \end{subfigure} 
    \begin{subfigure}[b]{0.24\linewidth}
      \centering
      \includegraphics[width=\linewidth]{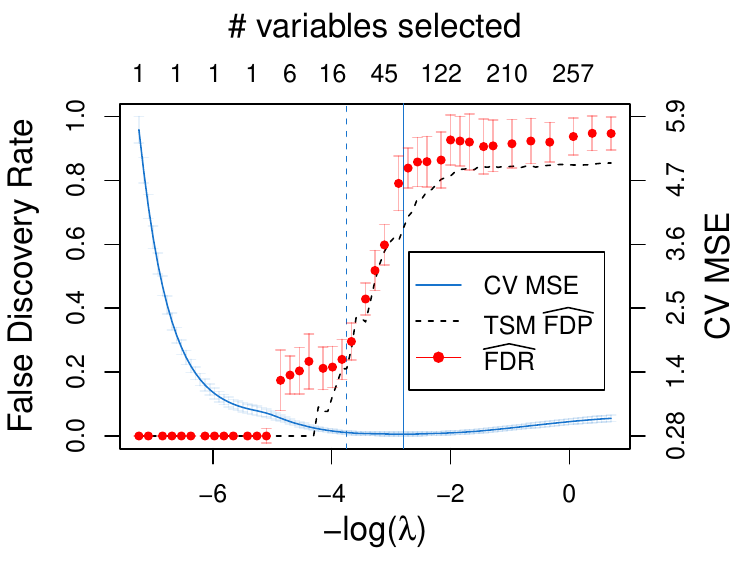}
      \caption{X3TC, $283$, $629$}
    \end{subfigure} 
    
    \begin{subfigure}[b]{0.24\linewidth}
      \centering
      \includegraphics[width=\linewidth]{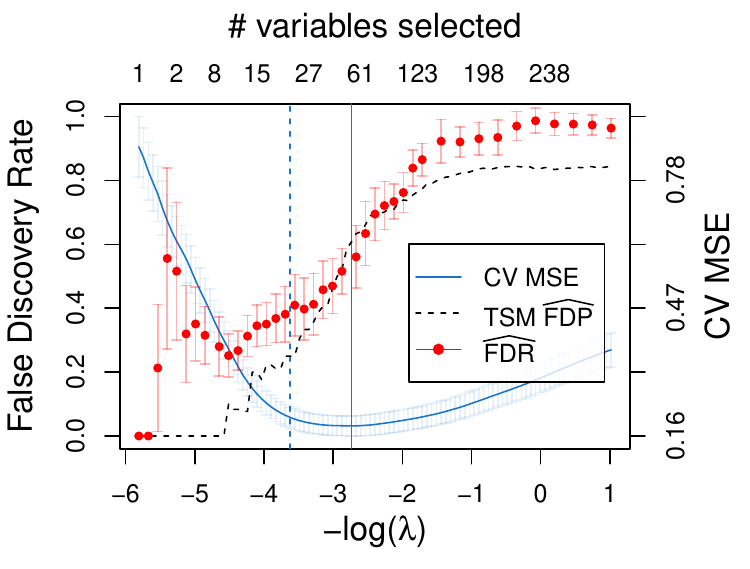}
      \caption{ABC, $283$, $623$}
    \end{subfigure}    
    \begin{subfigure}[b]{0.24\linewidth}
      \centering
      \includegraphics[width=\linewidth]{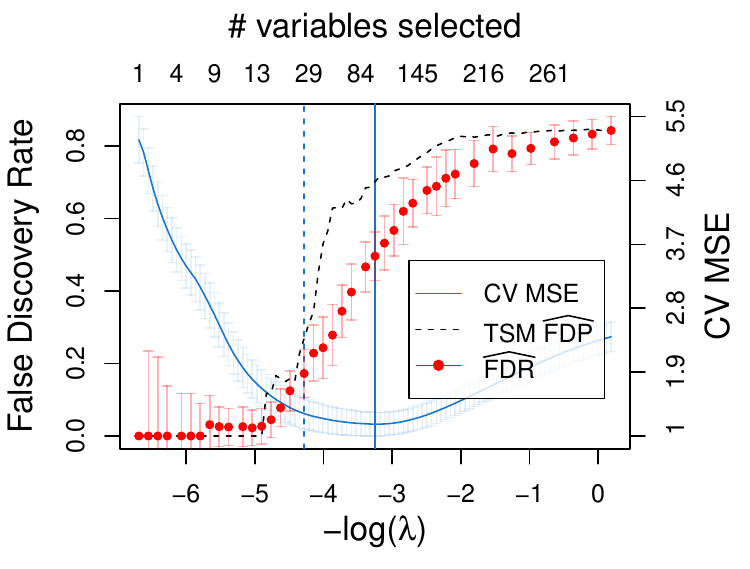}
      \caption{AZT, $283$, $626$}
    \end{subfigure} 
    \begin{subfigure}[b]{0.24\linewidth}
      \centering
      \includegraphics[width=\linewidth]{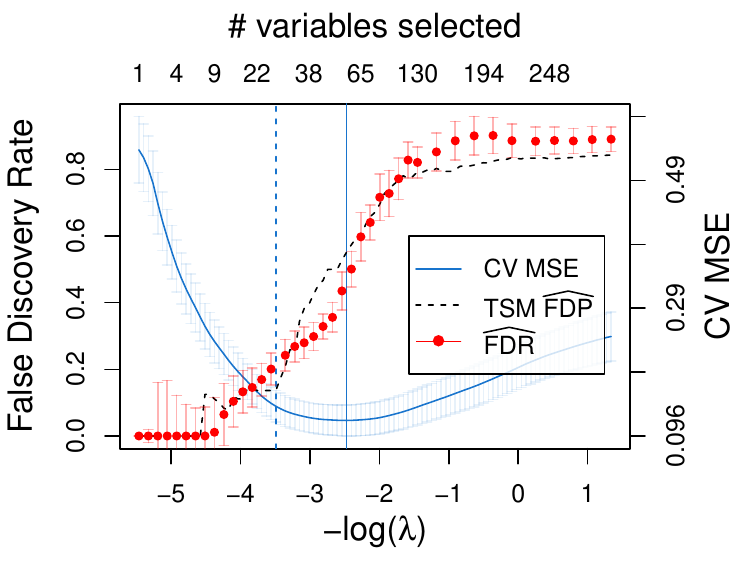}
      \caption{D4T, $281$, $625$}
    \end{subfigure} 
    \begin{subfigure}[b]{0.24\linewidth}
      \centering
      \includegraphics[width=\linewidth]{figs/HIV_drug_expr/2_5-DDI.pdf}
      \caption{DDI, $283$, $628$}
    \end{subfigure} 
    
    \begin{subfigure}[b]{0.24\linewidth}
      \centering
      \includegraphics[width=\linewidth]{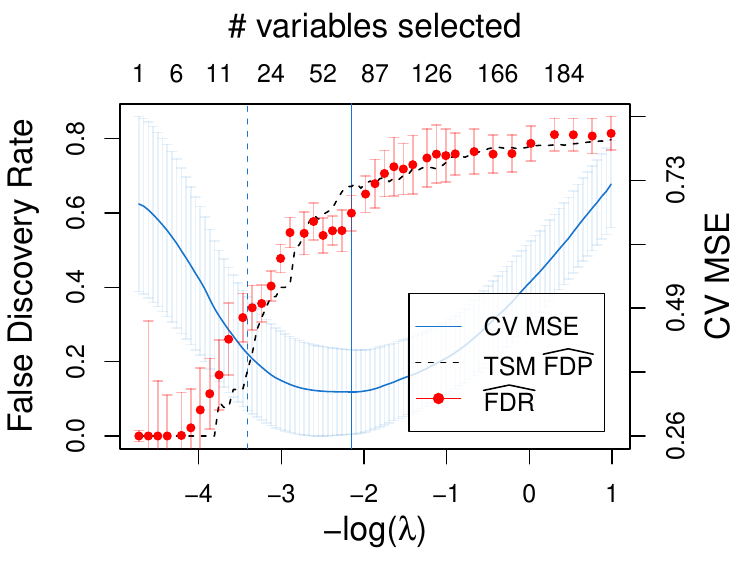}
      \caption{TDF, $215$, $351$}
    \end{subfigure}    
    \begin{subfigure}[b]{0.24\linewidth}
      \centering
      \includegraphics[width=\linewidth]{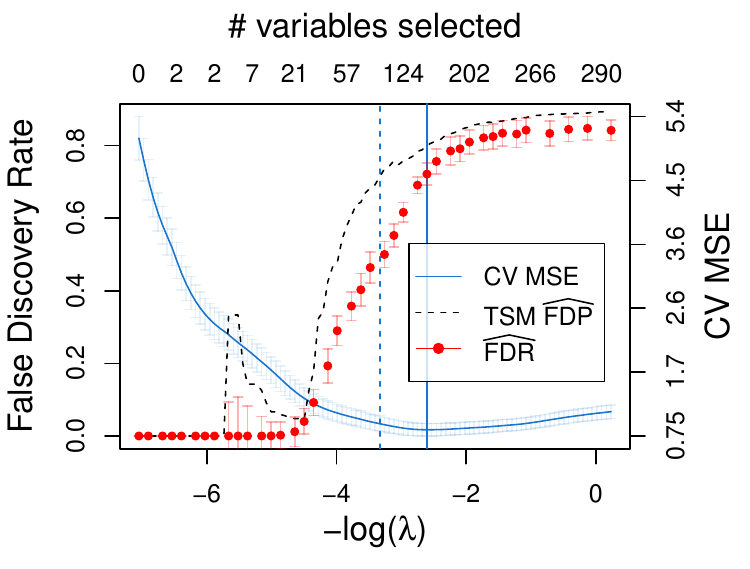}
      \caption{DLV, $305$, $730$}
    \end{subfigure} 
    \begin{subfigure}[b]{0.24\linewidth}
      \centering
      \includegraphics[width=\linewidth]{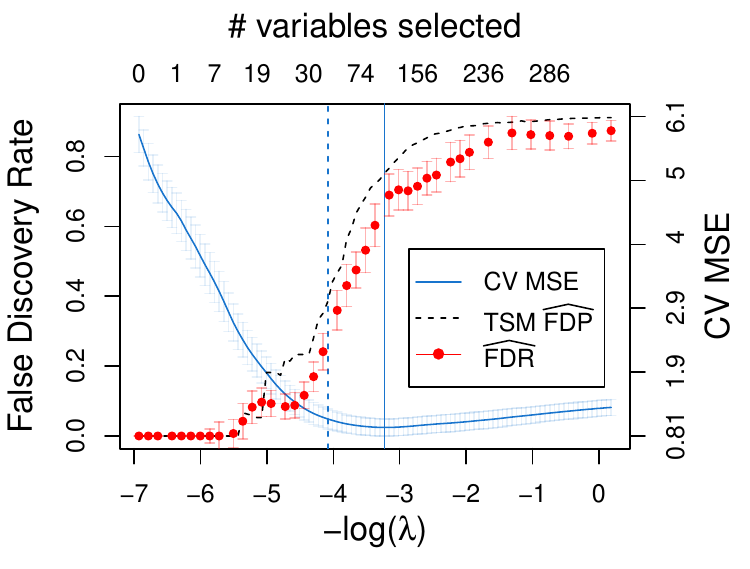}
      \caption{EFV, $312$, $732$}
    \end{subfigure} 
    \begin{subfigure}[b]{0.24\linewidth}
      \centering
      \includegraphics[width=\linewidth]{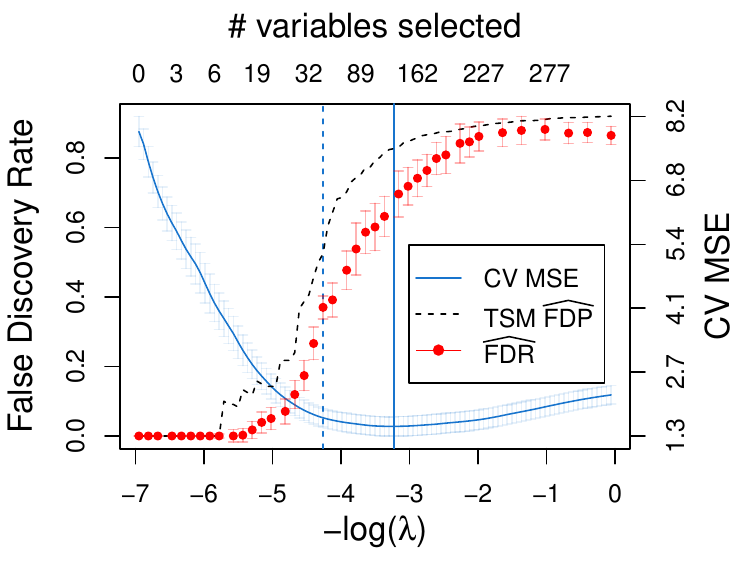}
      \caption{NVP, $313$, $744$}
    \end{subfigure} 
    
    \caption{Extended results of $\hFDR$ along with estimated one-standard-error bars in Section~\ref{sec:real_HIV}. The string and number in the caption of the subfigures represent the name of the drug, the number of variables $d$, and the number of observations $n$, respectively.}
    \label{fig:hiv_all}
\end{figure}

\begin{figure}[tbp]
    \centering
    \begin{subfigure}[b]{0.24\linewidth}
      \centering
      \includegraphics[width=\linewidth]{figs/protein/expr_1.pdf}
    \end{subfigure}
    \begin{subfigure}[b]{0.24\linewidth}
      \centering
      \includegraphics[width=\linewidth]{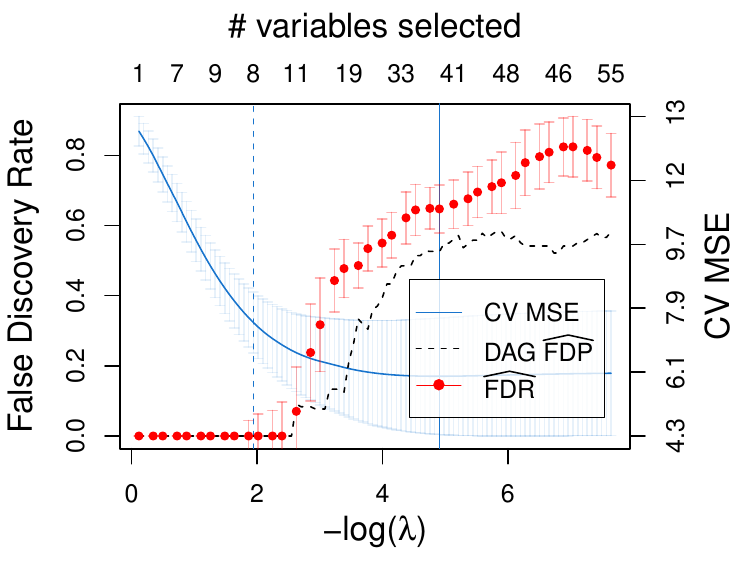}
    \end{subfigure}
    \begin{subfigure}[b]{0.24\linewidth}
      \centering
      \includegraphics[width=\linewidth]{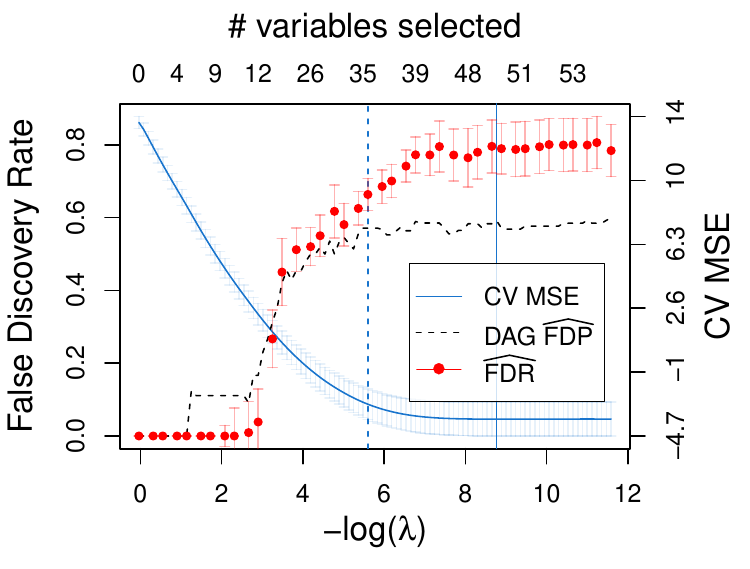}
    \end{subfigure}
    \begin{subfigure}[b]{0.24\linewidth}
      \centering
      \includegraphics[width=\linewidth]{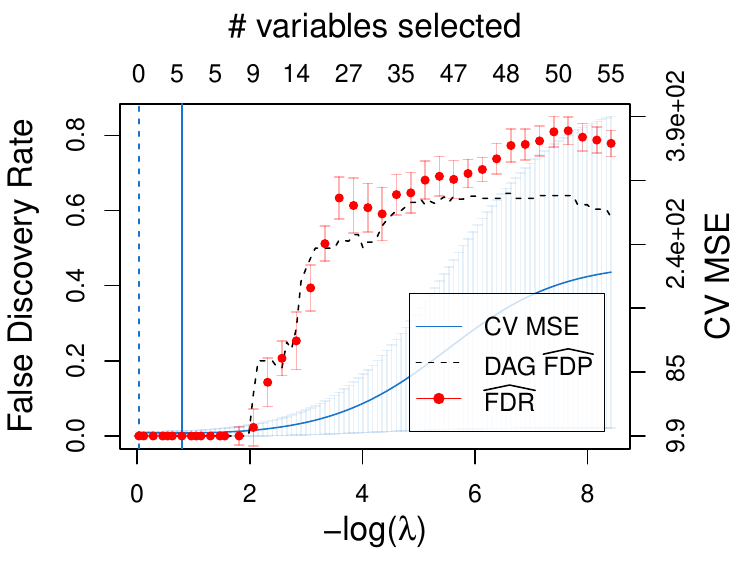}
    \end{subfigure}
    
    \begin{subfigure}[b]{0.24\linewidth}
      \centering
      \includegraphics[width=\linewidth]{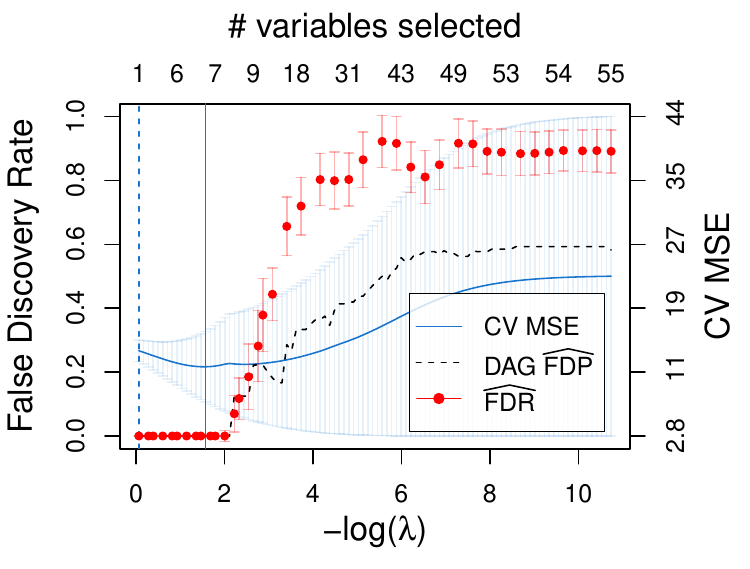}
    \end{subfigure}
    \begin{subfigure}[b]{0.24\linewidth}
      \centering
      \includegraphics[width=\linewidth]{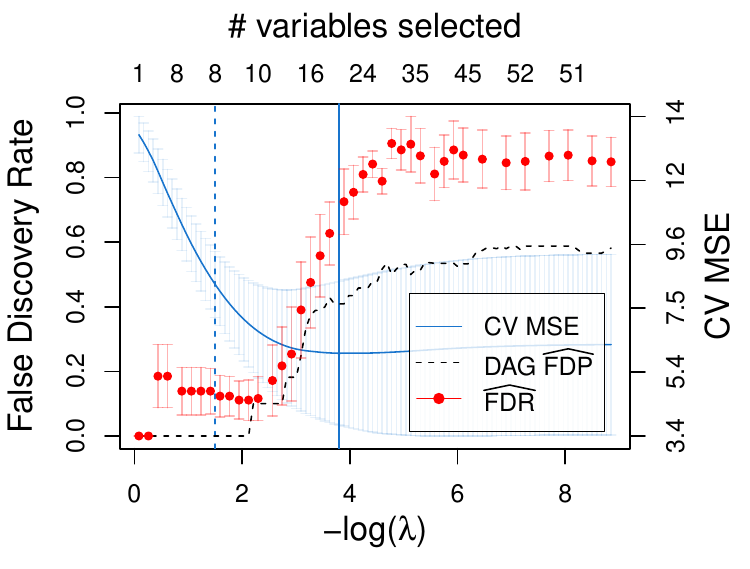}
    \end{subfigure}
    \begin{subfigure}[b]{0.24\linewidth}
      \centering
      \includegraphics[width=\linewidth]{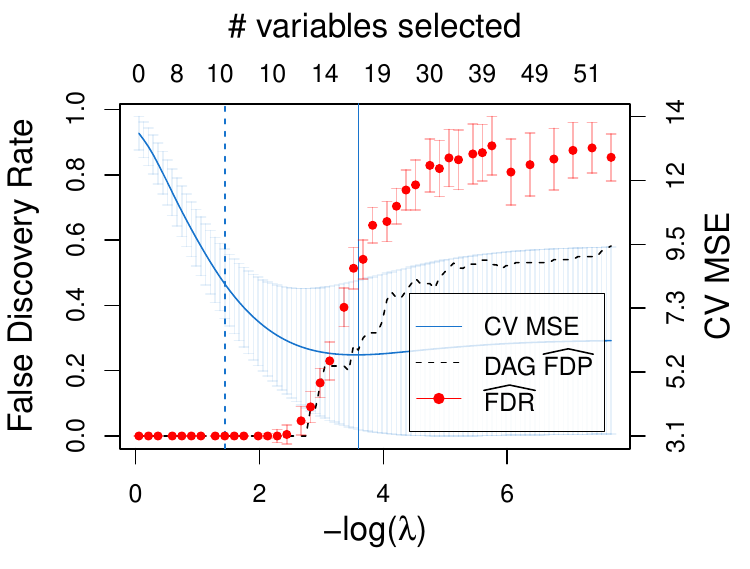}
    \end{subfigure}
    \begin{subfigure}[b]{0.24\linewidth}
      \centering
      \includegraphics[width=\linewidth]{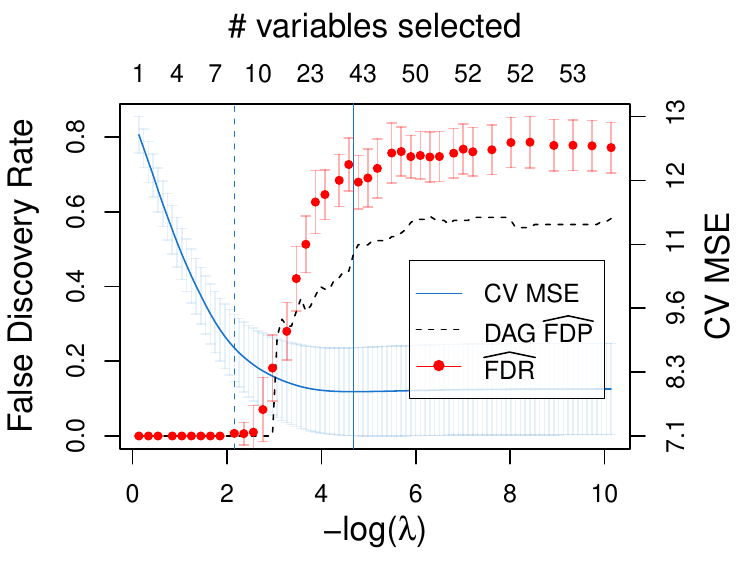}
    \end{subfigure}
    
    \begin{subfigure}[b]{0.24\linewidth}
      \centering
      \includegraphics[width=\linewidth]{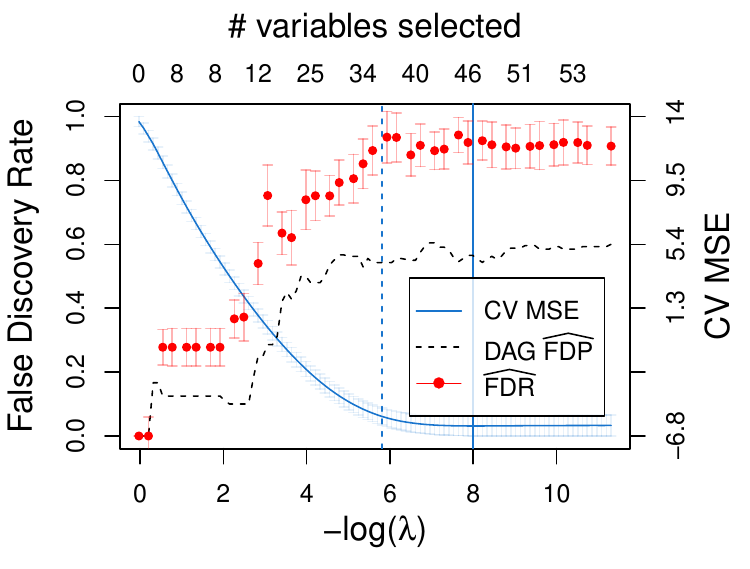}
    \end{subfigure}
    \begin{subfigure}[b]{0.24\linewidth}
      \centering
      \includegraphics[width=\linewidth]{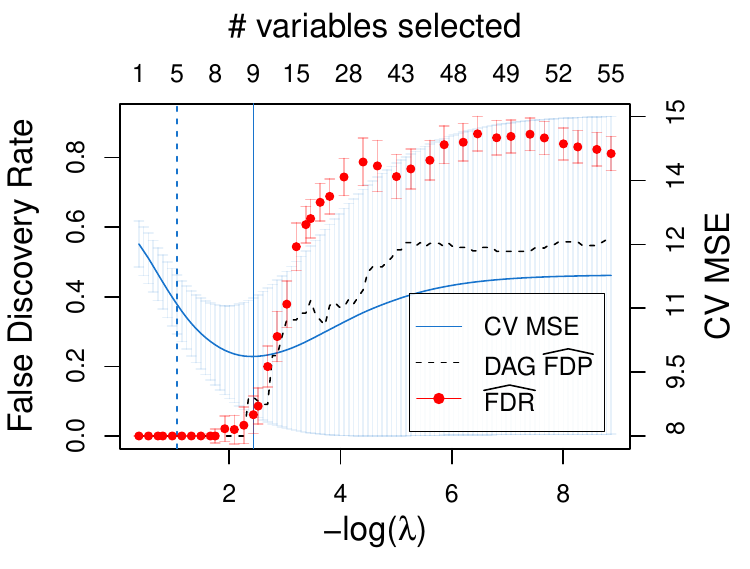}
    \end{subfigure}
    \begin{subfigure}[b]{0.24\linewidth}
      \centering
      \includegraphics[width=\linewidth]{figs/protein/expr_11.pdf}
    \end{subfigure}
    \begin{subfigure}[b]{0.24\linewidth}
      \centering
      \includegraphics[width=\linewidth]{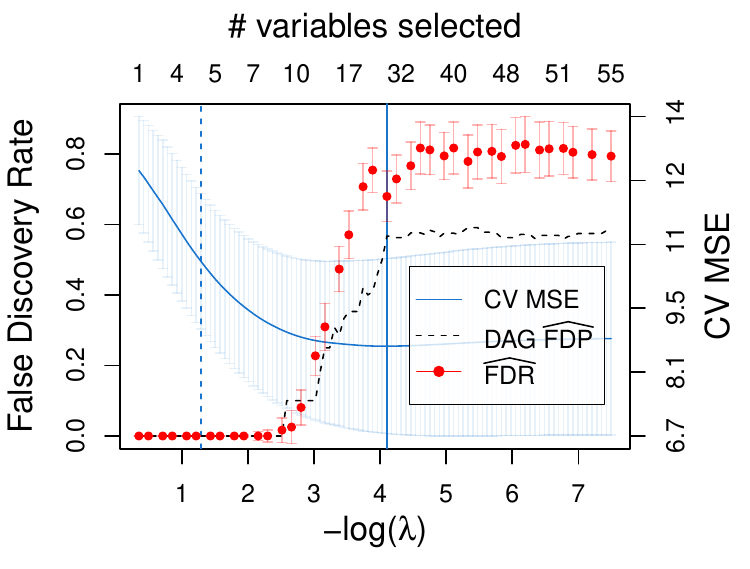}
    \end{subfigure}
    \begin{subfigure}[b]{0.24\linewidth}
      \centering
      \includegraphics[width=\linewidth]{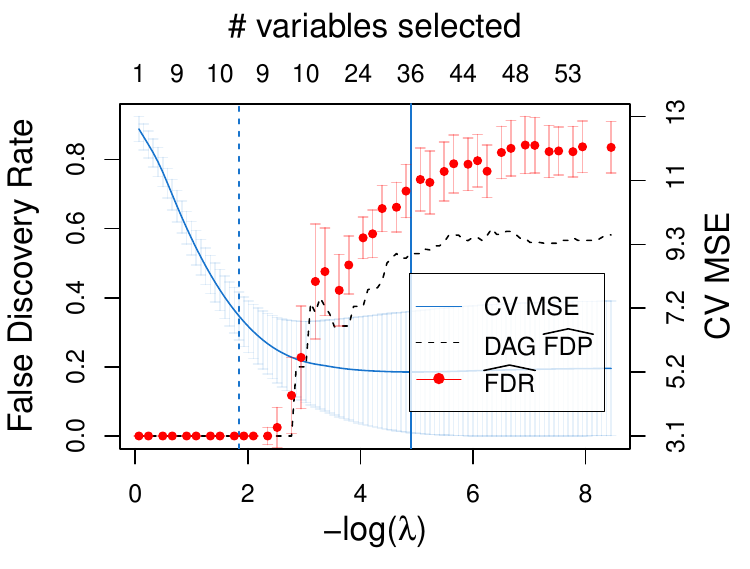}
    \end{subfigure}
    \begin{subfigure}[b]{0.24\linewidth}
      \centering
      \includegraphics[width=\linewidth]{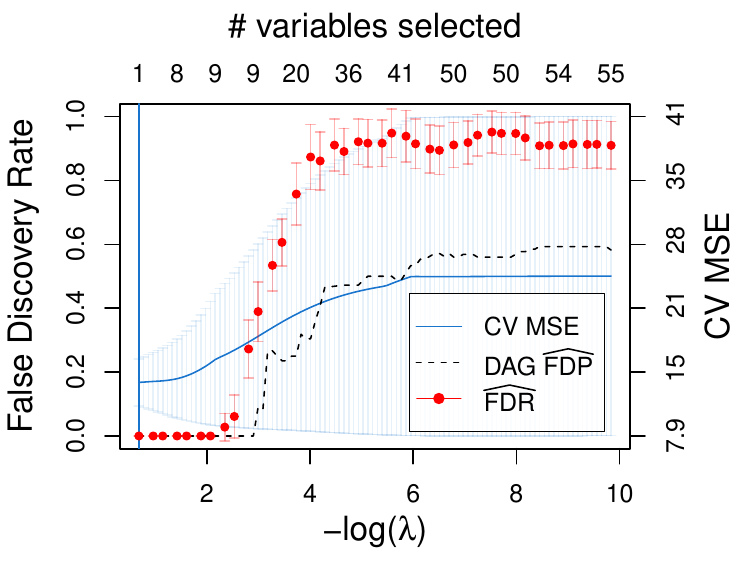}
    \end{subfigure}
    
    \caption{Extended results of $\hFDR$ along with estimated one-standard-error bars in Section~\ref{sec:real_protein}. From top-left to bottom-right are stimulation settings 1 to 14.}
    \label{fig:protein_all}
\end{figure}

\newpage
\section{Running time of $\hFDR$ in simulation problems} \label{app:runtime}

Figure \ref{fig:runtime} shows a box plot of the running time of evaluating $\hFDR$ (without estimating its s.e. by bootstrap) at all $10$ different $\lambda$ in the the simulation problems in Section~\ref{sec:simulations} on a single-cored computer. The runtime is typically several minutes.

\begin{figure}[H]
    \centering
    \includegraphics[width = 0.5\linewidth]{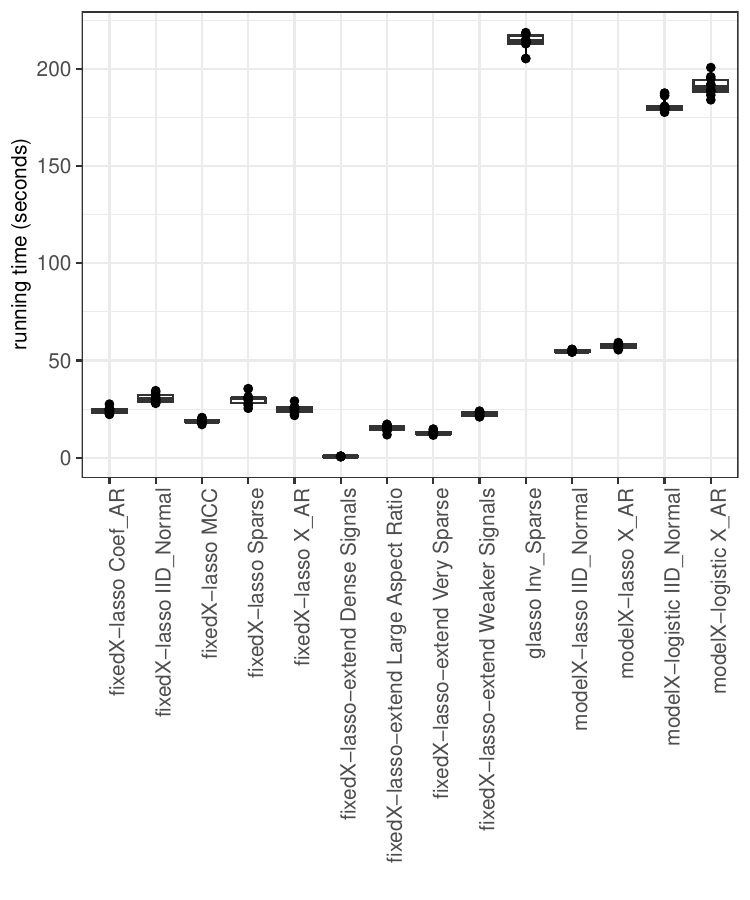}
    \caption{Box plot of the running time of evaluating $\hFDR$ (without estimating its s.e. by bootstrap) at all $10$ different $\lambda$ in the simulation problems in Section~\ref{sec:simulations} on a single-cored computer.}
    \label{fig:runtime}
\end{figure}

\end{document}